\documentclass[a4paper,USenglish,cleveref, autoref, thm-restate]{lipics-v2021}

\hideLIPIcs  

\title{Wheeler Bisimulations} 

\author{Nicola {Cotumaccio}}{University of Helsinki, Finland}{nicola.cotumaccio@helsinki.fi}{https://orcid.org/0000-0002-1402-5298}{}

\authorrunning{N. Cotumaccio} 

\Copyright{Jane Open Access and Joan R. Public} 

\ccsdesc[500]{Theory of computation~Regular languages}
\ccsdesc[500]{Theory of computation~Pattern matching}
\ccsdesc[500]{Theory of computation~Concurrent algorithms} 

\keywords{Wheeler automata, bisimulation, minimal automata.} 

\category{} 

\relatedversion{} 

\funding{Funded by the Helsinki Institute for Information Technology (HIIT).}

\acknowledgements{I am grateful to Luca Aceto for his insightful comments and valuable feedback. I also thank the anonymous reviewers for carefully reviewing the paper, and in particular for suggesting a simpler proof of Lemma~\ref{lem:convexunion} and pointing out a mistake in an earlier proof of Theorem~\ref{theor:lowerboundquotienting}.}

\nolinenumbers 

\usepackage{graphicx}%
\usepackage{multirow}%
\usepackage{amsmath,amssymb,amsfonts}%
\usepackage{amsthm}%
\usepackage{mathrsfs}%
\usepackage[title]{appendix}%
\usepackage{xcolor}%
\usepackage{textcomp}%
\usepackage{manyfoot}%
\usepackage{booktabs}%
\usepackage{algorithm}%
\usepackage{algorithmicx}%
\usepackage{algpseudocode}%
\usepackage{listings}%

\usepackage{tikz}
\usetikzlibrary{arrows,automata,positioning}

\usepackage{amssymb}

\DeclareMathOperator{\Pref}{Pref}

\usepackage{etoolbox}

\usepackage{subcaption}

\EventEditors{Ana Sokolova and Patrick Totzke}
\EventNoEds{2}
\EventLongTitle{37th International Conference on Concurrency Theory (CONCUR 2026)}
\EventShortTitle{CONCUR 2026}
\EventAcronym{CONCUR}
\EventYear{2026}
\EventDate{September 1--4, 2026}
\EventLocation{Liverpool, UK}
\EventLogo{}
\SeriesVolume{391}
\ArticleNo{6}

\begin{document}

\maketitle


\begin{abstract}
Over the years, bisimulations have emerged as a pervasive paradigm, finding applications in numerous areas, including concurrency theory, model checking, automata theory, logic, programming languages and category theory. In this paper, we establish a connection between bisimulations and data compression. More precisely, we study the relationship between bisimulations and Wheeler automata (Alanko et al., SODA 2020), a class of automata that has received considerable attention in recent years. The standard notion of bisimulation is not appropriate, so we introduce \emph{Wheeler bisimulations}, that is, bisimulations that respect the convex structure of the considered Wheeler automata. We show that Wheeler bisimilarity induces a \emph{unique} minimal Wheeler NFA (analogously to standard bisimulations). In particular, in the deterministic case, we retrieve the minimal Wheeler deterministic automaton of a given language. We also show that the minimal Wheeler NFA induced by Wheeler bisimulations can be built in linear time. This is in contrast with standard bisimulations, for which the corresponding minimal NFA can be built in $ O(m \log n) $ time (where $ m $ is the number of edges and $ n $ is the number of states) by adapting Paige-Tarjan partition refinement algorithm. Compared to previous state-reduction techniques, our bisimulation-induced construction is the first for which (i)~we obtain a \emph{canonical} Wheeler NFA and (ii)~the resulting Wheeler NFA can be built in linear time.
\end{abstract}

\section{Introduction}\label{sec:introduction}

\emph{Bisimulations} were first introduced by Milner and Park in the early 1980s to model processes that cannot be distinguished by an external agent~\cite{milner1980calculus, park1981concurrency}. Over the years, bisimulations have emerged as a pervasive paradigm, finding applications in numerous areas, including concurrency theory, model checking, automata theory, logic, programming languages and category theory. We refer the reader to standard books on the topic~\cite{baier2008principles, sangiorgi2011introduction, kozen2007automata, aceto2007reactive}. Here we mainly focus on bisimulations between \emph{automata} (see in particular~\cite{kozen2007automata}), but a similar theory of bisimulations can be developed for other labeled systems~\cite{baier2008principles} (see also~\cite{van1987petri, best1991concurrent} for Petri nets and~\cite{degano1993universal} for Kripke structures).

Let $ \mathcal{A} = (Q, E, s, F) $ be an NFA, with $ |Q| = n $ and $ |E| = m $. Then, up to isomorphism, there exists a unique minimal NFA bisimilar to $ \mathcal{A} $~\cite{kozen2007automata}. The algorithmic problem of computing the minimal bisimilar NFA has a long history. In the deterministic setting, it corresponds to the problem of minimizing a DFA, which can be solved using, e.g., Moore algorithm~\cite{moore1956gedanken} or Hopcroft algorithm~\cite{hopcroft1971n}. In the general setting, the problem can be solved using, e.g., Kanellakis-Smolka algorithm~\cite{kanellakis1983ccs} or Paige-Tarjan algorithm~\cite{paige1987three}. The most efficient algorithms are Valmari and Lehtinen's variants of Hopcroft algorithm and Paige-Tarjan algorithm, both running in $ O(m \log n) $ time (see~\cite{valmari2008efficient, valmari2010simple}). The bound $ O(m \log n) $ is achieved under the standard assumption that the edge labels can be sorted in linear time~\cite[Section 4.3]{valmari2010simple}.

In this paper, we establish a connection between bisimulations and data compression. More precisely, we study the relationship between bisimulations and \emph{Wheeler automata}~\cite{alanko2020regular, gagie2017wheeler}, a class of automata that has received considerable attention in recent years (we refer the reader to~\cite{cotumaccio2025wheeler} for a survey). A Wheeler automaton is a (conventional) NFA $ \mathcal{A} $ endowed with a total order $ \le $ on the set of all states. Intuitively, if for some state $ u $ and $ v $ we have $ u < v $, then the strings that reach state $ u $ must be co-lexicographically smaller than the strings that reach state $ v $, up to intersections (see Section~\ref{sec:wheelerautomata} for details).

The idea of ordering the states of an automaton dates back at least to the seventies, when Shyr and Thierrin introduced \emph{ordered automata}~\cite{shyr1974ordered}. Wheeler automata are perhaps the most elegant and successful implementation of this paradigm:

\begin{itemize}
    \item A Wheeler automaton can be stored using only $ (e \log \sigma) (1 + o(1)) + O(e) $ bits, where $ e $ is the number of edges and $ \sigma $ is the size of the alphabet~\cite{gagie2017wheeler}. At the very least we need $ e \log \sigma $ bits to store the edge labels (we need $ \log \sigma $ bits per edge), so intuitively the topological complexity of a Wheeler automaton can be stored succinctly.
    \item We can decide whether a string of length $ m $ is accepted by a Wheeler automaton in time $ O(m \log \log \sigma) $~\cite{gagie2017wheeler}. For general automata, the time complexity is $ \Omega(m e^{1 - \epsilon}) $, where $ \epsilon > 0 $ is an arbitrarily small constant (under the Orthogonal Vector hypothesis)~\cite{equi2023complexity}. Remarkably, the algorithm running in $ O(m \log \log \sigma) $ time does not need to access the explicit representation of a Wheeler automaton, but only its compressed representation (consisting of $ (e \log \sigma) (1 + o(1)) + O(e) $ bits). 
\end{itemize}

In other words, Wheeler automata extend the most important data structures for solving pattern matching queries on compressed texts (based on the \emph{suffix array}~\cite{manber1993suffix}, the \emph{Burrows-Wheeler transform}~\cite{burrows1994} and the \emph{FM-index}~\cite{ferraginajacm2005}) from strings to automata. In particular, Wheeler automata are closely related to Bruijn graphs, which are used in bioinformatics to perform Eulerian sequence assembly (see~\cite{bowe2012succinct}).

The regular languages recognized by Wheeler automata are called \emph{Wheeler languages}. The class of Wheeler languages is one of the richest subclasses of regular languages:
\begin{itemize}
    \item If a regular language is recognized by some Wheeler non-deterministic automaton, it is also recognized by some Wheeler deterministic automaton~\cite{alanko2021wheeler}.
    \item Every Wheeler language admits a unique minimal Wheeler deterministic automaton (which, in general, is larger than the standard minimal DFA)~\cite{alanko2021wheeler}, and a Wheeler deterministic automaton can be minimized efficiently~\cite{alanko2022linear}.
    \item Every Wheeler language admits an algebraic characterization based on the \emph{Myhill-Nerode theorem}~\cite{alanko2020regular}.
    \item It is possible to efficiently detect if a regular language is Wheeler by looking at the cycle structure of its (standard) minimal automaton~\cite{becker2023optimal}.
\end{itemize}

Among other topics, previous work has investigated string-labeled Wheeler NFAs~\cite{cotumaccio2024myhill, cotumaccio2026generalized, cotumaccio2025fast}, the state complexity of Wheeler languages~\cite{d2023ordering}, their relationship with locally testable languages~\cite{becker2025universally} and circular strings~\cite{cotumaccio2025sorting, cotumaccio2025improved}, and more sophisticated applications in data compression~\cite{conte2023computing, cotumaccio2023space, alanko2024computing, cotumaccio2023prefix}.

\subsection{Our Contribution}

In the deterministic setting, both regular languages and Wheeler languages admit a \emph{unique} minimal automaton up to isomorphism. Note that, in general, the minimal Wheeler automaton can be larger than the standard minimal automaton~\cite{alanko2021wheeler} (we remark that different trade-offs between the size and the compressibility of a DFA recognizing a given language can be obtained by extending the ideas behind Wheeler automata to arbitrary automata~\cite{cotumaccio2023co, cotumaccio2021indexing}). In the non-deterministic setting, a regular language can admit multiple minimal automata, and the problem of building such a minimal automaton is notoriously PSPACE-complete~\cite{stockmeyer1973word}. By introducing additional constraints, bisimulations make it possible to recover the uniqueness of a minimal automaton in the non-deterministic setting: two bisimilar automata recognize the same language, but two automata that recognize the same language need not be bisimilar~\cite{kozen2007automata}.

In this paper, we study the minimality problem for Wheeler languages in the non-deterministic setting. We start our investigation by observing that, in general, Wheeler languages are not recognized by a unique minimal Wheeler NFA (Example~\ref{ex:nonisomorphicexample}). Then, we study the relationship between Wheeler languages and nondeterminism through the lens of bisimulations. The standard notion of bisimulation is not appropriate (see Example~\ref{ex:bisimulationisnotenough}), so we introduce \emph{Wheeler bisimulations}, that is, bisimulations that respect the convex structure of the considered Wheeler automata (Definition~\ref{def:wheelerbisimulation}). We show that Wheeler bisimilarity induces a \emph{unique} minimal Wheeler NFA (analogously to standard bisimulations), see Theorem~\ref{theorem:minimalbisimilar} and Corollary~\ref{cor:quotientisstateminimal}. In particular, in the deterministic case, we retrieve the minimal Wheeler deterministic automaton of a given language (Theorem~\ref{theor:determinismmaintheor} and Corollary~\ref{cor:determinismmaincor}).

We also study Wheeler bisimulations from an algorithmic perspective. We show that the minimal Wheeler NFA induced by Wheeler bisimulations can be computed in linear time (Theorem~\ref{theor:lineartimeminimization}). Consequently, in the Wheeler setting we can do better than in the conventional setting (where the best algorithm runs in $ O(m \log n) $ time). We remark that, in general, the minimal Wheeler NFA induced by Wheeler bisimulations can be distinct from the minimal NFA induced by conventional bisimulations. We achieve linear time under the usual assumption that the edge labels can be sorted in linear time, and we complete the picture with a lower bound showing that, in the comparison model, there exists no linear-time algorithm for the same problem (Theorem~\ref{theor:lowerboundquotienting}). In linear time we can also decide whether there exists a Wheeler bisimulation between two Wheeler NFAs (Corollary~\ref{cor:checkingifbisimlar}).

We can informally summarize our main results as follows.

\begin{theorem}
    Let $ (\mathcal{A}, \le) $ be a Wheeler NFA. Then, up to isomorphism, there exists a unique minimal Wheeler NFA $ (\mathcal{A}_*, \le_*) $ for which there exists a Wheeler bisimulation from $ (\mathcal{A}, \le) $ to $ (\mathcal{A}_*, \le_*) $. Moreover:
    \begin{itemize}
        \item $ (\mathcal{A}_*, \le_*) $ can be built from $ (\mathcal{A}, \le) $ in linear time.
        \item If $ (\mathcal{A}, \le) $ is a Wheeler DFA, then $ (\mathcal{A}_*, \le_*) $ is the minimal Wheeler DFA equivalent to $ (\mathcal{A}, \le) $.
    \end{itemize}
    In linear time we can also decide whether there exists a Wheeler bisimulation between two given Wheeler NFAs.
\end{theorem}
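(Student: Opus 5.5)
This informal theorem gathers results proved later in the paper, so the plan follows that structure. It rests on the definition of Wheeler bisimulation: a bisimulation $R \subseteq Q \times Q'$ between Wheeler NFAs that respects convexity. Concretely, I take this to mean that the set of states related to any state is an interval in the respective order, and that $R$ is monotone, i.e.\ it never relates $u<v$ to $u'>v'$ in crossing fashion.

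\emph{Step 1 (closure properties).} I first show that the identity is a Wheeler bisimulation, and that the inverse, the composition and the union of Wheeler bisimulations are again Wheeler bisimulations. The union is the key case: convexity of a union of intervals that pairwise overlap is exactly what Lemma~\ref{lem:convexunion} should supply. It follows that Wheeler bisimilarity $\sim$ on a single automaton $(\mathcal{A},\le)$ is the largest Wheeler bisimulation and is an equivalence relation whose classes are intervals of $\le$.

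\emph{Step 2 (the quotient).} Define $\mathcal{A}_* = \mathcal{A}/{\sim}$, with $\le_*$ induced by the order of the intervals. I must check three things. First, $(\mathcal{A}_*,\le_*)$ is Wheeler: the axioms on labels and edges transfer because the classes are convex and the edges are monotone. Second, the canonical projection is a Wheeler bisimulation. Third, on $\mathcal{A}_*$ Wheeler bisimilarity is the identity. For minimality and uniqueness, let $(\mathcal{B},\le')$ be Wheeler bisimilar to $\mathcal{A}$ through some $R$. Composing $R$ with its inverse yields a relation contained in $\sim$. From this I build a surjection from the reachable part of $\mathcal{B}$ onto $\mathcal{A}_*$, which gives $|\mathcal{B}|\ge|\mathcal{A}_*|$, with equality forcing an order-preserving isomorphism. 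This is where Theorem~\ref{theorem:minimalbisimilar} and Corollary~\ref{cor:quotientisstateminimal} come from.

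\emph{Step 3 (determinism).} For a Wheeler DFA, bisimilarity merges states with equal right languages, and convexity restricts merging to consecutive states. This matches the characterization of the minimal WDFA from~\cite{alanko2021wheeler}: there, states are equivalent iff they have equal futures and are interval-contiguous. Comparing the two characterizations gives Theorem~\ref{theor:determinismmaintheor}.

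\emph{Step 4 (linear time).} This is the main obstacle. Since the classes are intervals of a fixed total order, a partition is just a set of cut points between consecutive states, and refinement only ever adds cuts. I would run a fixed-point refinement. Start by cutting wherever finality or the incoming label changes. Then repeat: cut between $u$ and $u+1$ if their out-neighbourhoods, viewed as sets of blocks for each label, differ. Wheeler NFAs have interval-shaped neighbourhoods per label and monotone edges, so each comparison reduces to comparing the endpoint blocks of two intervals. When a new cut appears, only the predecessors of the endpoint states need rechecking, and these can be found through the BWT-like representation. I would charge each recheck to an edge, hoping for $O(1)$ amortised work per edge plus linear-time label sorting; the difficulty is proving that no edge is rechecked more than a constant number of times. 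To decide whether two Wheeler NFAs are Wheeler bisimilar, I would minimise both and test them for isomorphism. The initial states are fixed and the order is total, so the isomorphism is forced, and this check takes linear time (Corollary~\ref{cor:checkingifbisimlar}).
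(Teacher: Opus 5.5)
Your minimality argument in Step~2 is sound and is a genuinely more direct route than the paper's Lemma~\ref{lem:quotientautomotaisomorphic}. You send $b$ to $[a]_{\equiv_{\mathcal{A},\le}}$ for any $a\in R^{-1}(b)$; this is well defined because $R^{-1}\circ R$ is contained in $\equiv_{\mathcal{A},\le}$. In the equality case this map is the bijective Wheeler bisimulation $R_{\mathcal{A},\le}\circ R^{-1}$, hence an isomorphism by Lemma~\ref{lem:bisimulationisomorphism}.

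The problems lie elsewhere. First, Wheeler bisimulations are defined by Properties~2--3 (images and preimages of convex sets are convex), not by monotonicity. Read literally, your no-crossing condition fails for every symmetric relation that relates some $u<v$, so it excludes exactly the non-trivial convex equivalences you want.

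Second, and more seriously, Step~1's claim that unions of Wheeler bisimulations are Wheeler bisimulations is false. Take $s\xrightarrow{a}u$, $s\xrightarrow{a}v$, $s\xrightarrow{a}w$ with $s<u<v<w$ and $F=\{u,v,w\}$. Both $R=\{(s,s),(u,u),(u,v),(v,w),(w,w)\}$ and $R^{-1}$ are Wheeler bisimulations, yet $(R\cup R^{-1})(\{v\})=\{u,w\}$ is not convex. Lemma~\ref{lem:convexunion} needs the two images to meet, and nothing in your argument ensures this. The missing idea is the paper's: require autobisimulations to be \emph{reflexive}, so that $\emptyset\neq C\subseteq R_1(C)\cap R_2(C)$. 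Reflexive ones are then obtained from an arbitrary $R$ as $R^{-1}\circ R$ via Lemma~\ref{lem:R(u)nonempty}, which your Step~2 also silently uses.

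Third, Steps~3 and~4 implicitly assume input-consistency, which the axioms here do not impose: Axiom~2 only compares labels entering \emph{distinct} states. For $s\xrightarrow{a}u$, $s\xrightarrow{b}v$, $s<u<v$, $F=\{u,v\}$, one has $u\equiv_{\mathcal{A},\le}v$. The quotient, with edges $(s_*,U,a)$ and $(s_*,U,b)$, is a two-state Wheeler DFA. So your initial cut at a change of incoming label separates states that must be merged, and in this example your refinement ends with a partition strictly finer than $\equiv_{\mathcal{A},\le}$.

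For the same reason, the DFA claim cannot simply be read off a characterization from the input-consistent literature. The paper proves it internally. Lemma~\ref{lem:allDFAsbismilar} shows that any two Wheeler DFAs with the same language are Wheeler bisimilar: take $u\,R\,u'$ iff $I_u^{\mathcal{A}}\cap I_{u'}^{\mathcal{A}'}\neq\emptyset$, and use Lemma~\ref{lem:consistencystatesstrings} for convexity. So all of them lie in one class, whose minimal element is the quotient of a DFA and hence a DFA (Lemma~\ref{lem:relatingdeterminismandbisimulation}).

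Finally, the linear bound is left as a hope. The key fact you are missing is Lemma~\ref{lem:ideabehindalgorithm}: a cut between $u_{i-1}$ and $u_i$ forces the cuts at positions $\ell_i^{\min}$ and $\ell_{i-1}^{\max}+1$. By Lemma~\ref{lem:correctnessbisimulation}, closing the finality and $Out$ cuts under just these two rules already yields a Wheeler autobisimulation. Hence each position is enqueued at most once and handled in $O(1)$ time, after an $O(|E|)$ precomputation of $\ell^{\min}$, $\ell^{\max}$ and $Z$.
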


Consider a Wheeler language $ \mathcal{L} $. We know that every Wheeler NFA recognizing $ \mathcal{L} $ admits a compressed encoding. To obtain an even more compressed representation of $ \mathcal{L} $, we should select a Wheeler NFA with as few states as possible. Several papers have addressed this challenge, typically by devising procedures to collapse some states in a given Wheeler NFA recognizing $ \mathcal{L} $~\cite{becker2023sorting, cotumaccio2022graphs, becker2024indexing, becker2025encoding}. Remarkably, our bisimulation-induced construction is the first for which (i)~we obtain a \emph{canonical} Wheeler NFA and (ii)~the resulting Wheeler NFA can be built in linear time. From a broader perspective, bisimulations confirm that Wheeler languages form one of the richest and most robust subclasses of regular languages.

The paper is organized as follows. In Section~\ref{sec:preliminaries}, we discuss notation and preliminary results. In Section~\ref{sec:wheelerautomata} we introduce Wheeler automata, and in Section~\ref{sec:mainresults} we define Wheeler bisimulations. In Section~\ref{sec:minimality} we prove that Wheeler bisimilarity induces a unique minimal Wheeler NFA, and in Section~\ref{sec:algorithms} we study the algorithmic aspects of Wheeler bisimulations. In Section~\ref{sec:conclusions} we present our conclusions and we discuss future work. Due to space constraints, some proofs and some auxiliary results are deferred to the appendix.

\section{Preliminaries}\label{sec:preliminaries}

\subsection{Relations} Let $ Q $ and $ Q' $ be sets, and let $ R \subseteq Q \times Q' $ be a relation from $ Q $ to $ Q' $. For every $ u \in Q $ and for every $ u' \in Q' $, we write $ u \;R \; u' $ if and only if $ (u, u') \in R $. For every $ u \in Q $, let $ R(u) = \{u' \in Q \;|\; u \; R \; u' \} $, and for every $ U \subseteq Q $, let $ R(U) = \bigcup_{u \in U} R(u) $. If $ R $ is a function from $ Q $ to $ Q' $ and $ u \in Q $, we identify $ R(u) $ and its unique element.

If $ R \subseteq Q \times Q' $ is a relation, let $ R^{-1} \subseteq Q' \times Q $ be the inverse relation. If $ R \subseteq Q \times Q' $ and $ R' \subseteq Q' \times Q'' $ are relations, let $ R' \circ R \subseteq Q \times Q'' $ be the composition of the two relations. Let $ id_Q \subseteq Q \times Q $ be the identity function from $ Q $ to $ Q $.

An equivalence relation on $ Q $ is typically denoted by $ \equiv $ or $ \sim $. For every $ u \in Q $, let $ [u]_\equiv = \{v \in Q \;|\; u \equiv v \} $. Then, $ \{[u]_\equiv \;|\; u \in Q \} $ is a partition of $ Q $, denoted by $ \mathcal{P}_\equiv $. Every element of $ \mathcal{P}_\equiv $ is an $ \equiv $-equivalence class. A total order on $ Q $ is typically denoted by $ \le $. For every $ u, v \in Q $, we write $ u < v $ if $ (u \le v) \land (u \not = v) $.

\subsection{Convexity}

To introduce Wheeler bisimulations (Definition~\ref{def:wheelerbisimulation}), we will need the notion of \emph{convexity}. Let $ Q $ be a set, and let $ \le $ be a total order on $ Q $. We say that a subseteq $ C \subseteq Q $ is \emph{$ \le $-convex} if for every $ u, v, w \in Q $ such that $ u \le v \le w $ and $ u, w \in C $ we have $ v \in C $. In particular, $ \emptyset $ and $ Q $ are $ \le $-convex, and $ \{u \} $ is $ \le $-convex for every $ u \in Q $.

It is easy to check that, if $ C_1 $ and $ C_2 $ are $ \le $-convex, then $ C_1 \cap C_2 $ is $ \le $-convex. However, in general, $ C_1 \cup C_2 $ need not be $ \le $-convex. For example, if $ Q = \{1, 2, 3, 4, 5 \} $, $ \le $ is the total order on $ Q $ such that $ 1 < 2 < 3 < 4 < 5 $, $ C_1 = \{1, 2 \} $ and $ C_2 = \{4, 5 \} $, then $ C_1 $ and $ C_2 $ are $ \le $-convex, but $ C_1 \cup C_2 $ is not $ \le $-convex. Nonetheless, if $ C_1 \cap C_2 \not = \emptyset $, then $ C_1 \cup C_2 $ must be $ \le $-convex, as shown in the following lemma.

\begin{lemma}\label{lem:convexunion}
    Let $ Q $ be a set, and let $ \le $ be a total order on $ Q $. Let $ C_1 $ and $ C_2 $ be $ \le$-convex sets. If $ C_1 \cap C_2 \not = \emptyset $, then $ C_1 \cup C_2 $ is $ \le $-convex.
\end{lemma}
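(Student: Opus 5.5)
We verify the definition of $\le$-convexity directly. Fix $u, v, w \in Q$ with $u \le v \le w$ and $u, w \in C_1 \cup C_2$; we must show $v \in C_1 \cup C_2$. The plan is to use a witness $z \in C_1 \cap C_2$, which exists by hypothesis, as a pivot. Since $\le$ is total, we can compare $z$ with $v$.

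First, the easy case. If $u$ and $w$ lie in the same set $C_i$, then $v \in C_i$ by convexity of $C_i$, and we are done. By symmetry between $C_1$ and $C_2$, the remaining case is $u \in C_1$ and $w \in C_2$. The case $u \in C_2$, $w \in C_1$ is identical after swapping the roles of the two sets.

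Second, the main step. Assume $u \in C_1$ and $w \in C_2$, and compare $v$ with $z$ using totality of $\le$.
\begin{itemize}
    \item If $v \le z$, then $u \le v \le z$ with $u, z \in C_1$. Convexity of $C_1$ gives $v \in C_1$.
    \item If $z \le v$, then $z \le v \le w$ with $z, w \in C_2$. Convexity of $C_2$ gives $v \in C_2$.
\end{itemize}
In either case $v \in C_1 \cup C_2$, which completes the argument.

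There is no real obstacle here. The only point needing care is choosing the pivot $z$ from the intersection and splitting on the position of $v$ relative to $z$, rather than relative to $u$ or $w$. This split is exactly where the hypothesis $C_1 \cap C_2 \neq \emptyset$ is used, and the earlier counterexample with $C_1 = \{1,2\}$ and $C_2 = \{4,5\}$ shows that the hypothesis cannot be dropped.
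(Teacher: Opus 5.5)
Your proof is correct and follows essentially the paper's argument: pick a pivot $z \in C_1 \cap C_2$, compare $v$ with $z$, and apply convexity of the set containing the outer endpoint together with $z$. The paper skips your preliminary split on where $u$ and $w$ lie. Since $z$ belongs to both sets, when $v \le z$ it simply uses whichever $C_i$ contains $u$, and when $z < v$ whichever contains $w$.
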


\begin{proof}
    Since $ C_1 \cap C_2 \not = \emptyset $, we can pick $ z \in C_1 \cap C_2 $. Let $ u, v, w \in Q $ such that $ u \le v \le w $ and $ u, w \in C_1 \cup C_2 $. We must prove that $ v \in C_1 \cup C_2 $. Assume that $ v \le z $ (the case $ z < v $ is analogous). Then, $ u \le v \le z $. Assume that $ u \in C_1 $ (the case $ u \in C_2 $ is analogous). From $ u \le v \le z $ and $ u, z \in C_1 $ we obtain $ v \in C_1 $ because $ C_1 $ is $ \le $-convex, so we conclude $ v \in C_1 \cup C_2 $.
\end{proof}

Lastly, we extend the notion of convexity to partitions and equivalence relations. Let $ Q $ be a set, and let $ \le $ be a total order on $ Q $. A partition $ \mathcal{P} $ of $ Q $ is $ \le $-convex if every $ U \in \mathcal{P} $ is $ \le $-convex. An equivalence relation $ \equiv $ on $ Q $ is $ \le $-convex if the partition $ \mathcal{P}_\equiv $ of $ Q $ is $ \le $-convex (equivalently, if $ [u]_\equiv $ is $ \le $-convex for every $ u \in Q $).

\subsection{Automata}

Let $ \Sigma $ be a fixed finite alphabet. We denote by $ \Sigma^* $ the set of all finite strings over $ \Sigma $, where $ \epsilon \in \Sigma^* $ is the empty string. Given a language $ \mathcal{L} \subseteq \Sigma^* $, we denote by $ \Pref(\mathcal{L}) $ the set of all prefixes of some element of $ \mathcal{L} $, that is, $ \Pref(\mathcal{L}) = \{\alpha \in \Sigma^* \;|\; (\exists \alpha' \in \Sigma^*)(\alpha \alpha' \in \mathcal{L}) \} $.

A non-deterministic finite automaton (NFA) is a 4-tuple $ \mathcal{A} = (Q, E, s, F) $, where $ Q $ is the finite set of states, $ E \subseteq Q \times Q \times \Sigma $ is the set of edges (also known as transitions), $ s \in Q $ is the (unique) initial state and $ F \subseteq Q $ is the set of final states. We denote by $ \mathcal{L}(\mathcal{A}) $ the language recognized by $ \mathcal{A} $. Throughout the paper, we assume that, for every $ u \in Q $, we have that $ u $ is reachable in $ \mathcal{A} $ (that is, there is a walk from the initial state $ s $ to $ u $) and $ u $ is co-reachable in $ \mathcal{A} $ (that is, there is a walk from $ u $ to a final state, and in particular $ u $ is co-reachable if $ u \in F $). These are standard assumptions in automata theory because, if a state $ u $ is not reachable or is not co-reachable, then it can be removed without changing $ \mathcal{L}(\mathcal{A}) $ (as long as the set of states does not become empty). Note that, under these assumptions, the empty language is not recognized by any NFA.

Let $ \mathcal{A} = (Q, E, s, F) $ be an NFA. For every $ u \in Q $, we denote by $ I^\mathcal{A}_u $ the set of all $ \alpha \in \Sigma^* $ such that $ \alpha $ can be read following some walk from $ s $ to $ u $ (in particular, $ \epsilon \in I^\mathcal{A}_s $). Note that $ \mathcal{L}(\mathcal{A}) = \bigcup_{u \in F} I_u^\mathcal{A} $. Moreover, we have $ I_u^\mathcal{A} \not = \emptyset $ for every $ u \in Q $ because every $ u \in Q $ is reachable in $ \mathcal{A} $, and $ \Pref (\mathcal{L}(\mathcal{A})) = \bigcup_{u \in Q} I_u^\mathcal{A} $ because every $ u \in Q $ is co-reachable in $ \mathcal{A} $.

An NFA $ \mathcal{A} = (Q, E, s, F) $ is a deterministic finite automaton (DFA) if for every $ u \in Q $ and for every $ a \in \Sigma $ there exists at most one $ v \in Q $ such that $ (u, v, a) \in E $ (so in this paper a DFA needs not be total). If $ \mathcal{A} = (Q, E, s, F) $ is a DFA, then for every $ u, v \in Q $ such that $ u \not = v $ we have $ I_u^\mathcal{A} \cap I_v^\mathcal{A} = \emptyset $.

Let $ \mathcal{A} = (Q, E, s, F) $ and $ \mathcal{A}' = (Q', E', s', F') $ be two NFAs. A bijective function $ \phi $ from $ Q $ to $ Q' $ is an isomorphism from $ \mathcal{A} $ to $ \mathcal{A}' $ if (i)~for every $ u, v \in Q $ and for every $ a \in \Sigma $ we have $ (u, v, a) \in E $ if and only if $ (\phi(u), \phi(v), a) \in E' $, (ii)~$ \phi(s) = s' $ and (iii)~for every $ u \in Q $, $ u \in F $ if and only if $ \phi(u) \in F' $.

When we consider a class $ C $ of NFAs, we say that an element $ \mathcal{A} $ of $ C $ is a minimal element of $ C $ if every element of $ C $ has at least as many states as $ \mathcal{A} $.

\section{Wheeler Automata}\label{sec:wheelerautomata}

As customary in the literature on Wheeler languages (e.g.,~\cite{alanko2020regular, cotumaccio2023co, conte2023computing}), we fix a total order $ \preceq $ on $ \Sigma $. In our examples, we always assume that $ \Sigma $ is the English alphabet and $ \preceq $ is the usual total order on $ \Sigma $ such that $ a \prec b $, $ b \prec c $, $ c \prec d $, and so on. Moreover, we extend $ \preceq $ to $ \Sigma^* $ \emph{co-lexicographically}, that is, for every $ \alpha \in \Sigma^* $ we have $ \alpha \prec \beta $ if and only if the reverse string $ \alpha ^R $ is smaller than the reverse string $ \beta^R $ in the standard dictionary order. Formally, let $ \gamma $ be the longest common suffix of $ \alpha $ and $ \beta $ (possibly, $ \gamma = \epsilon $), and let $ \alpha', \beta' \in \Sigma^* $ such that $ \alpha = \alpha' \gamma $ and $ \beta = \beta' \gamma $. Then, we have $ \alpha \prec \beta $ if and only $ \beta' \not = \epsilon $ and one of the following is true: (i)~$ \alpha' = \epsilon $ or (ii)~$ \alpha' \not = \epsilon $ and $ a \prec b $, where $ a $ is the last character of $ \alpha' $ and $ b $ is the last character of $ \beta' $. For example, we have $ a \prec aa $ and $ dba \prec ca $.

Let us recall the definition of Wheeler automata.

\begin{definition}
    Let $ \mathcal{A} = (Q, E, s, F) $ be an NFA, and let $ \le $ be a total order on $ Q $. We say that $ (\mathcal{A}, \le) $ is a \emph{Wheeler NFA} if:
    \begin{itemize}
        \item (Axiom~1) For every $ u \in Q $, $ s \le u $.
        \item (Axiom~2) For every $ u, v, u', v' \in Q $ and for every $ a, b \in \Sigma $, if $ (u, v, a), (u', v', b) \in E $ and $ v < v' $, then $ a \preceq b $.
        \item (Axiom~3) For every $ u, v, u', v' \in Q $ and for every $ a \in \Sigma $, if $ (u, v, a), (u', v', a) \in E $ and $ v < v' $, then $ u \le u' $.
    \end{itemize}
    We say that $ \le $ is a \emph{Wheeler order} on $ \mathcal{A} $. If $ \mathcal{A} $ is a DFA, we say that $ (\mathcal{A}, \le) $ is a \emph{Wheeler DFA}. Moreover, $ \mathcal{L} \subseteq \Sigma^* $ is a \emph{Wheeler language} if there exists a Wheeler NFA $ (\mathcal{A}, \le) $ such that $ \mathcal{L}(\mathcal{A}) = \mathcal{L} $.
\end{definition}

See Figure~\ref{fig:examplewheeler} for an example of a Wheeler NFA and for a graphical interpretation of Axiom~3 (if we list all states consistently with the Wheeler order, we duplicate the list, and we draw the edges of the Wheeler NFA starting from the bottom list and reaching the top list, then Axiom~3 ensures that equally labeled edges do not cross). This graphical interpretation is useful to follow some proofs (see Lemma~\ref{lem:ideabehindalgorithm} and see Lemma~\ref{lem:correctnessbisimulation} in the appendix). 

\begin{figure}[h!]
     \centering
     \begin{subfigure}[b]{0.49\textwidth}
        \centering
        \scalebox{.8}{
        \begin{tikzpicture}[->,>=stealth', semithick, auto, scale=1]
\node[state, initial] (1)    at (0,0)	{$ u_1 $};
\node[state] (2)    at (2, 1)	{$ u_2 $};
\node[state, accepting] (3)    at (4, 1)	{$ u_3 $};
\node[state, accepting] (4)    at (2, -1)	{$ u_4 $};

\draw (1) edge [] node [] {$ a $} (2);
\draw (2) edge [] node [] {$ a $} (3);
\draw (1) edge [bend left = 50] node [] {$ a $} (3);
\draw (3) edge [loop right] node [] {$ b $} (3);
\draw (1) edge [] node [] {$ c $} (4);
\draw (4) edge [] node [] {$ a $} (3);
\draw (2) edge [] node [] {$ c $} (4);
\draw (4) edge [loop right] node [] {$ c $} (4);
\end{tikzpicture}
}
\end{subfigure}
     \begin{subfigure}[b]{0.49\textwidth}
        \centering
        \scalebox{.8}{
        \begin{tikzpicture}[->,>=stealth', semithick, auto, scale=1]
\node[state] (1)    at (0,0)	{$ u_1 $};
\node[state] (2)    at (2, 0)	{$ u_2 $};
\node[state] (3)    at (4, 0)	{$ u_3 $};
\node[state] (4)    at (6, 0)	{$ u_4 $};
\node[state] (1')    at (0,-2)	{$ u_1 $};
\node[state] (2')    at (2, -2)	{$ u_2 $};
\node[state] (3')    at (4, -2)	{$ u_3 $};
\node[state] (4')    at (6, -2)	{$ u_4 $};

\draw (1') edge [] node [] {$ a $} (2);
\draw (2') edge [] node [] {$ a $} (3);
\draw (1') edge [] node [] {$ a $} (3);
\draw (4') edge [] node [] {$ a $} (3);
\end{tikzpicture}
}
\end{subfigure}
 	\caption{\emph{Left:} A Wheeler NFA. The states are numbered following the Wheeler order. \emph{Right:} Edges labeled with the same character (e.g., $ a $) do not cross.}
    \label{fig:examplewheeler}
\end{figure}

Let us recall some properties of Wheeler NFAs.

\begin{lemma}[{\cite[Lemma 3.7]{alanko2021wheeler}}]\label{lem:fromNFAtoDFA}
    Let $ \mathcal{L} \subseteq \Sigma^* $ be a Wheeler language. Then, there exists a Wheeler DFA $ (\mathcal{A}, \le) $ such that $ \mathcal{L}(\mathcal{A}) = \mathcal{L} $.
\end{lemma}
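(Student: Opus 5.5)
We prove the statement with a Wheeler-adapted powerset construction: we build a DFA whose states are sets of NFA states, order those sets, and check the three Wheeler axioms. Start from a Wheeler NFA $ (\mathcal{A}, \le) $ with $ \mathcal{A} = (Q, E, s, F) $ and $ \mathcal{L}(\mathcal{A}) = \mathcal{L} $. For every $ \alpha \in \Pref(\mathcal{L}) $, let $ S_\alpha = \{u \in Q \;|\; \alpha \in I_u^\mathcal{A}\} $ be the set of states reached by $ \alpha $. The new DFA $ \mathcal{A}' $ has state set $ \{S_\alpha \;|\; \alpha \in \Pref(\mathcal{L})\} $, initial state $ S_\epsilon $, and an edge $ (S_\alpha, S_{\alpha a}, a) $ whenever $ \alpha a \in \Pref(\mathcal{L}) $. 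A state $ S_\alpha $ is final iff $ S_\alpha \cap F \neq \emptyset $. This is well defined because $ S_{\alpha a} $ depends only on $ S_\alpha $ and $ a $. The automaton is deterministic, recognizes $ \mathcal{L} $, and every state in it is reachable and co-reachable, since each $ u \in Q $ is co-reachable.

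The key structural fact is that each $ S_\alpha $ is $ \le $-convex. We prove it by induction on $ |\alpha| $. For $ \alpha = \epsilon $: by Axioms~1 and 2, no edge enters $ s $ unless the incoming edges to $ s $ carry labels that are $ \preceq $ all others. The cleanest route is to note first that $ \epsilon \in I_u $ iff $ u = s $, since there are no $ \epsilon $-edges, so $ S_\epsilon = \{s\} $. For the inductive step, take $ u < v < w $ with $ u, w \in S_{\alpha a} $. Axiom~2 forces $ v $ to have an incoming $ a $-edge, say from $ v' $, because it sits between two states entered by $ a $. Strictly, Axiom~2 only shows that the labels entering $ v $ equal $ a $, so we also need that $ v $ has some incoming edge; this holds because $ v \neq s $ and $ v $ is reachable. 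Let $ u' $ and $ w' $ be $ a $-predecessors of $ u $ and $ w $ lying in $ S_\alpha $. Axiom~3 gives $ u' \le v' \le w' $, so by convexity $ v' \in S_\alpha $ and hence $ v \in S_{\alpha a} $.

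Next we order the states of $ \mathcal{A}' $. The main obstacle is showing that distinct sets $ S_\alpha \neq S_\beta $ are comparable in a way that makes the order a Wheeler order. Intervals can overlap, so we cannot just compare them by endpoints. Instead we argue via co-lex order. Define $ S_\alpha <' S_\beta $ iff every string reaching $ S_\alpha $ is $ \prec $ every string reaching $ S_\beta $, where "reaching" means the set $ \{\gamma \;|\; S_\gamma = S_\alpha\} $. We must show this order is total on distinct states, i.e.\ that these classes do not interleave co-lexicographically. This is where the NFA's Wheeler property is used. The standard lemma is: if $ u < v $ then every $ \alpha \in I_u $ and $ \beta \in I_v $ with $ \alpha \notin I_v $ or $ \beta \notin I_u $ satisfy a co-lex relation. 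Its cleaner form is: $ u < v $, $ \alpha \in I_u $, $ \beta \in I_v $ and $ \{\alpha,\beta\} \not\subseteq I_u \cap I_v $ imply $ \alpha \prec \beta $. We prove it by induction on the lengths of $ \alpha $ and $ \beta $ using Axioms~2 and 3. From it we deduce that the convex sets $ S_\gamma $, as $ \gamma $ ranges in co-lex order, move monotonically: their minima and maxima are non-decreasing in $ \gamma $. Consequently each class of strings with the same $ S_\gamma $ is a co-lex interval, and distinct classes are disjoint intervals, so they are totally ordered.

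Finally we verify the axioms for $ (\mathcal{A}', \le') $. Axiom~1 holds because $ \epsilon $ is the co-lex minimum. For Axiom~2, the strings entering $ S_{\alpha a} $ all end in $ a $, and co-lex order compares last characters first. For Axiom~3, suppose $ S_{\alpha a} <' S_{\beta a} $. Then $ \alpha a \prec \beta a $ implies $ \alpha \prec \beta $, and the interval property turns this into $ S_\alpha \le' S_\beta $. Hence $ (\mathcal{A}', \le') $ is a Wheeler DFA recognizing $ \mathcal{L} $.
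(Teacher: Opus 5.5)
Your argument is correct. The paper gives no proof of its own here and simply cites Alanko et al., whose argument is this same powerset construction: each reached set $S_\alpha$ is $\le$-convex, and the resulting states are ordered via the co-lexicographic monotonicity of their endpoints, which follows from Lemma~\ref{lem:consistencystatesstrings}. The one step worth stating explicitly is that two $\le$-convex sets with the same minimum and maximum coincide; this is what turns your monotone endpoints into the claim that each class $\{\gamma \mid S_\gamma = S_\alpha\}$ is a co-lex interval.
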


In other words, every language recognized by a Wheeler NFA is also recognized by a Wheeler DFA.

\begin{lemma}[{\cite[Lemma 3.2]{alanko2021wheeler}}]\label{lem:consistencystatesstrings}
    Let $ (\mathcal{A}, \le) $ be a Wheeler NFA, with $ \mathcal{A} = (Q, E, s, F) $. Let $ u, v \in Q $, and let $ \alpha \in I_u^\mathcal{A} $ and $ \beta \in I_v^\mathcal{A} $ such that $ \{\alpha, \beta \} \not \subseteq I_u^\mathcal{A} \cap I_v^\mathcal{A} $. Then, $ u < v $ if and only if $ \alpha \prec \beta $.
\end{lemma}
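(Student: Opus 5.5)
We prove the two directions separately, working from the characterization of the co-lexicographic order and the Wheeler axioms. By Axiom~1 the initial state $s$ is the minimum. Also, $s$ has no incoming edges with $s$ as head unless all labels are small. It is cleaner, though, to argue directly by induction on $|\alpha| + |\beta|$. The central claim is:

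\begin{quote}
if $\alpha \in I_u^\mathcal{A}$, $\beta \in I_v^\mathcal{A}$ and $u < v$, then $\alpha \preceq \beta$.
\end{quote}

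With the claim in hand, both directions follow. If $u < v$, the claim gives $\alpha \preceq \beta$. Equality $\alpha = \beta$ would put $\alpha$ in both $I_u^\mathcal{A}$ and $I_v^\mathcal{A}$, and likewise $\beta$, contradicting $\{\alpha, \beta\} \not\subseteq I_u^\mathcal{A} \cap I_v^\mathcal{A}$. Hence $\alpha \prec \beta$. Conversely, suppose $\alpha \prec \beta$ but not $u < v$. If $u = v$, then both strings lie in $I_u^\mathcal{A} = I_u^\mathcal{A} \cap I_v^\mathcal{A}$, a contradiction. If $v < u$, the claim with roles swapped gives $\beta \preceq \alpha$, contradicting $\alpha \prec \beta$.

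To prove the claim, we proceed by induction on $|\alpha| + |\beta|$.
\begin{itemize}
\item If $\alpha = \epsilon$, then $\alpha \preceq \beta$ trivially, since $\epsilon$ is the co-lexicographic minimum.
\item If $\beta = \epsilon$ and $\alpha \neq \epsilon$, then $\beta \in I_v^\mathcal{A}$ forces $v = s$. Axiom~1 gives $s \le u$, contradicting $u < v = s$.
\item Otherwise, write $\alpha = \alpha' a$ and $\beta = \beta' b$, with last edges $(u', u, a)$ and $(v', v, b)$ such that $\alpha' \in I_{u'}^\mathcal{A}$ and $\beta' \in I_{v'}^\mathcal{A}$. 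Since $u < v$, Axiom~2 gives $a \preceq b$.
\begin{itemize}
\item If $a \prec b$, then $\alpha \prec \beta$ by the definition of the co-lexicographic order.
\item If $a = b$, Axiom~3 gives $u' \le v'$. If $u' < v'$, the induction hypothesis yields $\alpha' \preceq \beta'$, hence $\alpha' a \preceq \beta' a$. If $u' = v'$, we need a further argument.
\end{itemize}
\end{itemize}

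The main obstacle is the subcase $u' = v'$ with $a = b$. Here we cannot compare $\alpha'$ and $\beta'$ through distinct states. Instead, we strengthen the claim so that it also covers strings reaching the same state.

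Note first that if $\alpha' \in I_{u'}^\mathcal{A}$ and $\beta' \in I_{u'}^\mathcal{A}$, then, using the edges $(u', u, a)$ and $(u', v, a)$, both $\alpha' a$ and $\beta' a$ lie in $I_u^\mathcal{A} \cap I_v^\mathcal{A}$. Thus the claim in its original form can fail: with $u < v$ we may have $\beta \prec \alpha$ when both strings lie in the intersection.

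The correct strengthening is therefore: if $u < v$, $\alpha \in I_u^\mathcal{A}$, $\beta \in I_v^\mathcal{A}$ and $\{\alpha, \beta\} \not\subseteq I_u^\mathcal{A} \cap I_v^\mathcal{A}$, then $\alpha \prec \beta$. This is exactly the forward direction of the statement, and we prove it directly by induction with this hypothesis carried along.

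In the problematic subcase ($a = b$, $u' = v'$), $\alpha'$ and $\beta'$ both reach $u'$, so by the observation above $\alpha$ and $\beta$ both lie in $I_u^\mathcal{A} \cap I_v^\mathcal{A}$. This contradicts the hypothesis, so the subcase cannot occur.

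In the subcase $u' < v'$ with $a = b$, we need the hypothesis for $(\alpha', \beta')$ in order to apply induction. Suppose instead that $\alpha', \beta' \in I_{u'}^\mathcal{A} \cap I_{v'}^\mathcal{A}$. Appending $a$ along the edges $(u', u, a)$ and $(v', v, a)$ puts both $\alpha$ and $\beta$ in $I_u^\mathcal{A} \cap I_v^\mathcal{A}$, again a contradiction. So the hypothesis holds for $(\alpha', \beta')$, induction gives $\alpha' \prec \beta'$, and hence $\alpha \prec \beta$.

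The base cases are handled as before. When $\alpha = \epsilon$ and $\beta = \epsilon$, the string $\epsilon$ reaches only states of the form $s$; since $u \neq v$, this case cannot occur. When $\alpha = \epsilon \neq \beta$, we get $\alpha \prec \beta$ directly.

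The reverse direction then follows as in the first paragraph, using the forward direction with roles swapped; the hypothesis $\{\alpha, \beta\} \not\subseteq I_u^\mathcal{A} \cap I_v^\mathcal{A}$ is symmetric, so the swap is legitimate.
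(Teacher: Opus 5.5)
Your proof is correct. The paper gives no proof of this lemma; it imports it from the cited reference.

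Your argument is essentially the standard one. You prove the forward direction by induction on $|\alpha|+|\beta|$, carrying the hypothesis $\{\alpha,\beta\}\not\subseteq I_u^\mathcal{A}\cap I_v^\mathcal{A}$ through the induction. That hypothesis rules out the subcase $u'=v'$, and it transfers to $(u',v')$ when $u'<v'$. You then get the converse from the case split $u=v$ or $v<u$. The argument uses only Axioms~1--3 as stated in this paper, so it holds at the generality of the paper's definition, where $s$ may have incoming edges and a state may have incoming edges with different labels.

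In a final write-up, state the strengthened claim from the start and drop the unstrengthened version. Also delete the sentence about incoming edges of $s$: it is garbled and plays no role in the proof.
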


In other words, a Wheeler order must be consistent with the strings reaching each state, up to intersections. For example, in Figure~\ref{fig:examplewheeler} we have $ a \in I_{u_2}^\mathcal{A} $, $ aa \in I_{u_3}^\mathcal{A} $, $ \{a, aa \} \not \subseteq I_{u_2}^\mathcal{A} \cap I_{u_3}^\mathcal{A} $, $ 2 < 3 $ and, consistently, $ a \prec aa $. If we interpret the strings in $ I_u^\mathcal{A} $ as the set of all ``prefixes'' of $ u $, then Lemma~\ref{lem:consistencystatesstrings} intuitively states that a Wheeler order must (co-lexicographically) sort the prefixes of the states. This explains why Wheeler automata can be seen as a non-trivial extension of \emph{prefix arrays} and \emph{suffix arrays} from strings to automata, which was the original motivation for studying Wheeler automata~\cite{gagie2017wheeler}. The condition $ \{\alpha, \beta \} \not \subseteq I_u^\mathcal{A} \cap I_v^\mathcal{A} $ of Lemma~\ref{lem:consistencystatesstrings} implies that the order between two states is not disambiguated by the strings reaching both states. For example, consider the NFA $ \mathcal{A} = (Q, E, s, F) $, where $ Q = \{s, u_1, u_2, u_3 \} $ consists of four pairwise distinct states, $ E = \{(s, u_1, a), (s, u_1, b), (u_1, u_2, c), (u_1, u_3, c) \} $ and $ F = \{u_2, u_3 \} $. Let $ \le $ be the total order on $ Q $ such that $ s < u_1 < u_2 < u_3 $. Then, $ (\mathcal{A}, \le) $ is a Wheeler NFA. Notice that $ bc \in I_{u_2}^\mathcal{A} $, $ ac \in I_{u_3}^\mathcal{A} $, $ u_2 < u_3 $ and $ ac \prec bc $. This does not contradict Lemma~\ref{lem:consistencystatesstrings} because $ \{ac, bc \} \subseteq I_{u_2}^\mathcal{A} \cap I_{u_3}^\mathcal{A} $.

Recall that, if $ \mathcal{A} $ is a DFA, then $ u \not = v $ implies $ I_u^\mathcal{A} \cap I_v^\mathcal{A} = \emptyset $. This means that, if $ (\mathcal{A}, \le) $ is a Wheeler DFA, Lemma~\ref{lem:consistencystatesstrings} can be restated as follows: let $ u, v \in Q $ such that $ u \not = v $, and let $ \alpha \in I_u^\mathcal{A} $ and $ \beta \in I_v^\mathcal{A} $; then, $ u < v $ if and only if $ \alpha \prec \beta $. In particular, if $ \mathcal{A} $ is a DFA, then there exists \emph{at most} one Wheeler order $ \le $ on $ \mathcal{A} $, because the order between two states $ u, v \in Q $ such that $ u \not = v $ is uniquely determined by any $ \alpha \in I_u^\mathcal{A} $ and $ \beta \in I_v^\mathcal{A} $. In other words, if $ (\mathcal{A}, \le) $ is a Wheeler DFA, then $ \le $ is uniquely determined by $ \mathcal{A} $. In general, this is not true for Wheeler NFAs. For example, if we consider the NFA $ \mathcal{A} = (Q, E, s, F) $, where $ Q = \{s, u, v \} $ consists of three pairwise distinct states, $ E = \{(s, u, a), (s, v, a) \} $ and $ F = \{u, v \} $, then both total orders in which $ s $ comes first are Wheeler orders on $ \mathcal{A} $.

Every Wheeler language is star-free~\cite{alanko2021wheeler}, so $ (aa)^* $ is not Wheeler because $ (aa)^* $ is the classical example of a regular language that is not star-free~\cite{straubing2012finite}. A more interesting example is the regular language $ \mathcal{L} = ae^*b + ce^*d $ on the alphabet $ \Sigma = \{a, b, c, d, e \} $. Notice that $ \mathcal{L} $ is star-free because $ e^* $ is the complement of $ \Sigma^* (a + b + c + d) \Sigma^* $, where $ \Sigma^* $ is the complement of $ \emptyset $. However, $ \mathcal{L} $ is not Wheeler, as proved next (we adapt~\cite[Example 3]{alanko2021wheeler}).  If $ \mathcal{L} $ were Wheeler, then $ \mathcal{L} $ would be recognized by a Wheeler DFA $ (\mathcal{A}, \le) $ by Lemma~\ref{lem:fromNFAtoDFA}. Let $ \alpha_i = ae^i $ if $ i $ is odd and $ \alpha_i = ce^i $ if $ i $ is even. For every $ i $ we have $ \alpha_i \in \Pref (\mathcal{L}) $, so there exists a (unique) state $ u_i $ of $ \mathcal{A} $ such that $ \alpha_i \in I_{u_i} ^\mathcal{A} $. Notice that $ \alpha_1 \prec \alpha_2 \prec \alpha_3 \prec \dots $, so by Lemma~\ref{lem:consistencystatesstrings} we have $ u_1 \le u_2 \le u_3 \le \dots $, and the finiteness of the number of states implies that $ u_i = u_{i + 1} = u_{i + 2} = \dots $ for some $ i $. We can assume that $ i $ is odd. Then, $ \alpha_i, \alpha_{i + 1} \in I_{u_i}^\mathcal{A} $ (the two strings reach the same state), so $ \alpha_i b \in \mathcal{L} $ if and only if $ \alpha_{i + 1} b \in \mathcal{L} $. This is a contradiction because $ \alpha_i b = ae^i b \in \mathcal{L} $ and $ \alpha_{i + 1} b = ce^{i + 1} b \not \in \mathcal{L} $. 

\section{Wheeler Bisimulations}\label{sec:mainresults}

Let $ \mathcal{L} \subseteq \Sigma^* $ be a Wheeler language. As mentioned in the introduction, up to isomorphism there exists a unique (state)-minimal Wheeler DFA $ (\mathcal{A}, \le) $ such that $ \mathcal{L}(\mathcal{A}) = \mathcal{L} $~\cite{alanko2021wheeler}. The first result in the paper is a simple example showing that this is not true for Wheeler NFAs (analogously, in general there is not a unique minimal NFA recognizing a given regular language).

\begin{example}\label{ex:nonisomorphicexample}
    Consider the regular language $ \mathcal{L} = a a^* $. No (Wheeler) NFA with one state can recognize $ \mathcal{L} $, and Figure~\ref{fig:minimalNFAnotunique} shows two non-isomorphic Wheeler NFAs with two states recognizing $ \mathcal{L} $.

\begin{figure}[h!]
     \centering
     \begin{subfigure}[b]{0.49\textwidth}
        \centering
        \scalebox{.8}{
        \begin{tikzpicture}[->,>=stealth', semithick, auto, scale=1]
\node[state, initial] (1)    at (0,0)	{$ u_1 $};
\node[state, accepting] (2)    at (2, 0)	{$ u_2 $};

\draw (1) edge [loop above] node [] {$ a $} (1);
\draw (1) edge [] node [] {$ a $} (2);
\end{tikzpicture}
}
\end{subfigure}
     \begin{subfigure}[b]{0.49\textwidth}
        \centering
        \scalebox{.8}{
        \begin{tikzpicture}[->,>=stealth', semithick, auto, scale=1]
\node[state, initial] (1)    at (0,0)	{$ u_1 $};
\node[state, accepting] (2)    at (2, 0)	{$ u_2 $};

\draw (1) edge [] node [] {$ a $} (2);
\draw (2) edge [loop above] node [] {$ a $} (2);
\end{tikzpicture}
}
\end{subfigure}
 	\caption{Two non-isomorphic minimal Wheeler NFAs recogning $ \mathcal{L} = a a^* $ (see Example~\ref{ex:nonisomorphicexample}). The states of each NFA are numbered following the corresponding Wheeler order.}
    \label{fig:minimalNFAnotunique}
\end{figure}
\end{example}

Our goal is to introduce a meaningful notion of minimality for Wheeler NFAs that ensures uniqueness up to isomorphism. To this end, let us recall the standard definition of bisimulation (see, e,g.,~\cite{kozen2007automata, baier2008principles}).

\begin{definition}\label{def:bisimulation}
    Let $ \mathcal{A} $ and $ \mathcal{A}' $ be NFAs, with $ \mathcal{A} = (Q, E, s, F) $ and $ \mathcal{A}' = (Q', E', s', F') $. A relation $ R \subseteq Q \times Q' $ is a \emph{bisimulation} from $ \mathcal{A} $ to $ \mathcal{A}' $ if:
    \begin{itemize}
        \item For every $ u, v \in Q $, $ u' \in Q' $ and $ a \in \Sigma $, if $ u \; R \; u' $ and $ (u, v, a) \in E $, then there exists $ v' \in Q' $ such that $ v \; R \; v' $ and $ (u', v', a) \in E' $.
        \item For every $ u \in Q $, $ u', v' \in Q' $ and $ a \in \Sigma $, if $ u \; R \; u' $ and $ (u', v', a) \in E' $, then there exists $ v \in Q $ such that $ v \; R \; v' $ and $ (u, v, a) \in E $.
        \item We have $ s \; R \; s' $.
        \item For every $ u \in Q $ and $ u' \in Q $ such that $ u \; R \; u' $, we have $ u \in F $ if and only if $ u' \in F' $.
    \end{itemize}
\end{definition}

Let us recall why bisimulations lead to a notion of uniqueness for (standard) NFAs. Fix an NFA $ \mathcal{A} $, and consider the class of all NFAs $ \mathcal{A}' $ such that there exists a bisimulation from $ \mathcal{A} $ to $ \mathcal{A}' $. Then, all the \emph{minimal} NFAs in this class (that is, all the NFAs in the class having the minimum number of states) must be isomorphic~\cite{ kozen2007automata, baier2008principles}. In other words, up to isomorphism, there exists only one NFA $ \mathcal{A}' = (Q', E', s', F') $ for which (i)~there exists a bisimulation from $ \mathcal{A} $ to $ \mathcal{A}' $ and (ii)~for every NFA $ \mathcal{A}'' = (Q'', E'', s'', F'') $ such that $ |Q''| < |Q'| $, there exists no bisimulation from $ \mathcal{A} $ to $ \mathcal{A}'' $.

It is tempting to directly use the notion of bisimulation on Wheeler NFAs to identify a unique minimal Wheeler NFA. However, the next example shows that such an attempt fails.

\begin{example}\label{ex:bisimulationisnotenough}

\begin{figure}[h!]
     \centering
     \begin{subfigure}[b]{0.37\textwidth}
        \centering
        \scalebox{.7}{
        \begin{tikzpicture}[->,>=stealth', semithick, auto, scale=1]
\node[state, initial, accepting] (1)    at (0,0)	{$ u_1 $};
\node[state, accepting] (2)    at (-2, 2)	{$ u_2 $};
\node[state, accepting] (3)    at (0, 2)	{$ u_3 $};
\node[state, accepting] (4)    at (2, 2)	{$ u_4 $};

\draw (1) edge [bend left] node [] {$ a, b $} (2);
\draw (1) edge [] node [] {$ b $} (3);
\draw (1) edge [bend right] node [] {$ b,c $} (4);
\draw (2) edge [bend left] node [] {$ d $} (4);
\draw (4) edge [loop right] node [] {$ d $} (4);

\end{tikzpicture}
}
\end{subfigure}
     \begin{subfigure}[b]{0.37\textwidth}
        \centering
        \scalebox{.7}{
        \begin{tikzpicture}[->,>=stealth', semithick, auto, scale=1]
\node[state, initial, accepting] (1)    at (0,0)	{$ u'_1 $};
\node[state, accepting] (2)    at (-2, 2)	{$ u'_2 $};
\node[state, accepting] (3)    at (0, 2)	{$ u'_3 $};
\node[state, accepting] (4)    at (2, 2)	{$ u'_4 $};

\draw (1) edge [bend left] node [] {$ a, b $} (2);
\draw (1) edge [] node [] {$ b $} (3);
\draw (1) edge [bend right] node [] {$ c $} (4);
\draw (2) edge [bend left] node [] {$ d $} (4);
\draw (4) edge [loop right] node [] {$ d $} (4);

\end{tikzpicture}
}
\end{subfigure}
     \begin{subfigure}[b]{0.24\textwidth}
        \centering
        \scalebox{.7}{
        \begin{tikzpicture}[->,>=stealth', semithick, auto, scale=1]
\node[state, initial, accepting] (1)    at (0,0)	{$ u $};
\node[state, accepting] (2)    at (-1, 2)	{$ v $};
\node[state, accepting] (3)    at (1, 2)	{$ w $};

\draw (1) edge [bend left] node [] {$ a, b, c $} (2);
\draw (1) edge [bend right] node [] {$ b $} (3);
\draw (2) edge [loop above] node [] {$ d $} (2);

\end{tikzpicture}
}
\end{subfigure}
 	\caption{The three NFAs used in Example~\ref{ex:bisimulationisnotenough}. On the left, the Wheeler NFA $ (\mathcal{A}, \le) $. In the center, the Wheeler NFA $ (\mathcal{A}', \le') $. On the right, the NFA $ \mathcal{A}'' $. The states of $ (\mathcal{A}, \le) $ and $ (\mathcal{A}', \le') $ are numbered following the corresponding Wheeler orders.}
    \label{fig:standardbisimulationisbad}
\end{figure}

Consider the Wheeler NFAs $ (\mathcal{A}, \le) $ and $ (\mathcal{A}', \le') $ in Figure~\ref{fig:standardbisimulationisbad}, where $ \mathcal{A} = (Q, E, s, F) $ and $ \mathcal{A}' = (Q', E', s', F') $. Let $ R \subseteq Q \times Q' $ be the relation described by the following pairs: $ u_1 \; R \; u'_1 $, $ u_2 \; R \; u'_2 $, $ u_3 \; R \; u'_3 $, $ u_4 \; R \; u'_4 $, $ u_4 \; R \; u'_2 $. It is easy to check that $ R $ is a bisimulation from $ \mathcal{A} $ to $ \mathcal{A}' $. Moreover, $ id_Q $ is a bisimulation from $ \mathcal{A} $ to $ \mathcal{A} $.

We want to show that every Wheeler NFA $ (\mathcal{A}^\#, \le^\#) $ for which there exists a bisimulation from $ \mathcal{A} $ to $ \mathcal{A}^\# $ must have at least four states. Since $ \mathcal{A} $ and $ \mathcal{A}' $ are non-isomorphic and have four states, the notion of bisimulation does not lead to a \emph{unique} minimal Wheeler NFAs.

The NFA $ \mathcal{A}'' = (Q'', E'', s'', F'') $ in Figure~\ref{fig:standardbisimulationisbad} is a minimal NFA such that there exists a bisimulation from $ \mathcal{A} $ to $ \mathcal{A}'' $, and from the theory of bisimulation we know that $ \mathcal{A}'' $ is unique up to isomorphism~\cite{ kozen2007automata, baier2008principles}. Since $ \mathcal{A}'' $ has three states, we only need to prove that there exists no total order $ \le'' $ on $ Q'' $ such that $ (\mathcal{A}'', \le'') $ is a Wheeler NFA, because this implies that every Wheeler NFA $ (\mathcal{A}^\#, \le^\#) $ for which there exists a bisimulation from $ \mathcal{A} $ to $ \mathcal{A}^\# $ must have at least four states. Suppose for the sake of a contradiction that there exists a total order $ \le'' $ on $ Q $ such that $ (\mathcal{A}'', \le'') $ is a Wheeler NFA. Let us show that both $ v < w $ and $ w < v $ lead to a contradiction. We cannot have $ v < w $ because from $ (u, v, c), (u, w, b) \in E'' $ we would obtain $ c \preceq b $ by Axiom~2. We cannot have $ w < v $ because from $ (u, w, b), (u, v, a) \in E'' $ we would obtain $ b \preceq a $ again by Axiom~2.

We conclude that, in general, the standard notion of bisimulation does not induce a unique minimal Wheeler NFA.
\end{example}

In view of Example~\ref{ex:bisimulationisnotenough}, to retrieve a unique minimal Wheeler NFA, we now define \emph{Wheeler bisimulations}. The main idea is that we should only consider bisimulations that respect the convex structure of the considered Wheeler NFAs.

\begin{definition}\label{def:wheelerbisimulation}
Let $ (\mathcal{A}, \le) $ and $ (\mathcal{A}', \le') $ be Wheeler NFAs, with $ \mathcal{A} = (Q, E, s, F) $ and $ \mathcal{A}' = (Q', E', s', F') $. A relation $ R \subseteq Q \times Q' $ is a \emph{Wheeler bisimulation} from $ (\mathcal{A}, \le) $ to $ (\mathcal{A}', \le') $ if:
\begin{itemize}
    \item (Property~1) $ R $ is a bisimulation from $ \mathcal{A} $ to $ \mathcal{A} $'.
    \item (Property~2) For every $ C \subseteq Q $, if $ C $ is $ \le $-convex, then $ R(C) $ is $ \le' $-convex.
    \item (Property~3) For every $ C' \subseteq Q' $, if $ C' $ is $ \le' $-convex, then $ R^{-1}(C') $ is $ \le $-convex.
\end{itemize}
\end{definition}

Notice that the bisimulation $ R $ defined in Example~\ref{ex:bisimulationisnotenough} is not a Wheeler bisimulation from $ (\mathcal{A}, \le) $ to $ (\mathcal{A}', \le') $, because $ R(\{u_4 \}) = \{u'_2, u'_4 \} $, $ \{u_4 \} $ is $ \le $-convex, but $ \{u'_2, u'_4 \} $ is not $ \le' $-convex.

Consider the class $ \mathfrak{C} $ of all Wheeler NFAs, and for all Wheeler NFAs $ (\mathcal{A}, \le) $ and $ (\mathcal{A}', \le') $ let $ (\mathcal{A}, \le) \sim_{\mathfrak{C}} (\mathcal{A}', \le') $ if and only if there exists a Wheeler bisimulation from $ (\mathcal{A}, \le) $ to $ (\mathcal{A}', \le') $. It is not difficult to see that $ \sim_{\mathfrak{C}} $ is an equivalence relation on $ \mathfrak{C} $ (formally, it will follow from Lemma~\ref{lem:bisimequiv} in the appendix). Consequently, our goal is to prove that, in every $ \sim_{\mathfrak{C}} $-equivalence class, all Wheeler NFAs having the minimum number of states must be isomorphic.

Let $ \mathcal{L} \subseteq \Sigma^* $. Note that the class of all Wheeler NFAs $ (\mathcal{A}, \le) $ such that $ \mathcal{L}(\mathcal{A}) = \mathcal{L} $ may intersect infinitely many $ \sim_{\mathfrak{C}} $-equivalence classes, as shown in the next example.

\begin{example}\label{ex:canhaveinfiniteclasses}
    Consider the language $ \mathcal{L} = a^* $. Fix $ k \ge 3 $. Let $ \mathcal{A}^{(k)} = (Q^{(k)}, E^{(k)}, s^{(k)}, F^{(k)}) $ be the NFA defined as follows. Let $ Q^{(k)} = \{u_1^{(k)}, u_2^{(k)}, \dots, u_k^{(k)} \} $ be a set consisting of $ k $ pairwise distinct states, $ E^{(k)} = \{(u_i^{(k)}, u_{i + 1}^{(k)}, a) \;|\; 1 \le i \le k - 1 \} \cup \{(u_1^{(k)}, u_1^{(k)}, a) \} $, $ s^{(k)} = u_1^{(k)} $ and $ F^{(k)} = \{u_1^{(k)}, u_k^{(k)} \} $. Moreover, let $ \le^{(k)} $ be the total order on $ Q^{(k)} $ such that $ u_1^{(k)} < u_2^{(k)} < \dots < u_k^{(k)} $. It is immediate to check that $ (\mathcal{A}^{(k)}, \le^{(k)}) $ is a Wheeler NFA such that $ \mathcal{L}(\mathcal{A}^{(k)}) = \mathcal{L} $ (see Figure~\ref{fig:infiniteclasses}).

    Fix $ 3 \le h < k $. Let us prove that there exists no Wheeler bisimulation from $ (\mathcal{A}^{(h)}, \le^{(h)}) $ to $ (\mathcal{A}^{(k)}, \le^{(k)}) $. Suppose for the sake of a contradiction that there exists a Wheeler bisimulation $ R \subseteq Q^{(h)} \times Q^{(k)} $ from $ \mathcal{A}^{(h)} $ to $ \mathcal{A}^{(k)} $.
    
    Let us prove that $ u_i^{(h)} \; R \; u_i^{(k)} $ for every $ 1 \le i \le h $. We proceed by induction on $ i $. Since $ R $ is a bisimulation from $ \mathcal{A}^{(h)} $ to $ \mathcal{A}^{(k)} $ by Property~1, we have $ s^{(u)} \; R \; s^{(k)} $, or equivalently, $ u_1^{(h)} \; R \; u_1^{(k)} $. Now assume that $ 2 \le i \le h $. By the inductive hypothesis, we know that $ u_{i - 1}^{(h)} \; R \; u_{i - 1} ^{(k)} $. We know that $ R $ is a bisimulation from $ \mathcal{A}^{(h)} $ to $ \mathcal{A}^{(k)} $, so from $ (u_{i - 1}^{(h)}, u_i^{(h)}, a) \in E^{(h)} $ we obtain that there exists $ u' \in  Q^{(k)} $ such that $ u_i^{(h)} \; R \; u' $ and $ (u_{i - 1} ^{(k)}, u', a) \in E^{(k)} $. If $ i \ge 3 $, we must necessarily have $ u' = u_i^{(k)} $ by the definition of $ E^{(k)} $. If $ i = 2 $, by the definition of $ E^{(k)} $ we have $ (u' = u_2^{(k)}) \lor (u' = u_1^{(k)}) $. However, the case $ u' = u_1^{(k)} $ leads to a contradiction because $ R $ is a bisimulation from $ \mathcal{A}^{(h)} $ to $ \mathcal{A}^{(k)} $ and we would obtain $ u_2^{(h)} \; R \; u_1^{(k)} $, with $ u_2^{(h)} \not \in F^{(h)} $ (because $ h \ge 3 $) and $ u_1^{(k)} \in F^{(k)} $. We conclude that in both cases we must have $ u' = u_i^{(k)} $, so from $ u_i^{(h)} \; R \; u' $ we obtain $ u_i^{(h)} \; R \; u_i^{(k)} $.

    In particular, we have $ u_h^{(h)} \; R \; u_h^{(k)} $. This is a contradiction because $ R $ is a bisimulation from $ \mathcal{A}^{(h)} $ to $ \mathcal{A}^{(k)} $, $ u_h^{(h)} \not \in F^{(h)} $ and $ u_h^{(k)} \in F^{(k)} $ (because $ 3 \le h < k) $.

    \begin{figure}[h!]
     \centering
     \begin{subfigure}[b]{0.49\textwidth}
        \centering
        \scalebox{.7}{
        \begin{tikzpicture}[->,>=stealth', semithick, auto, scale=1]
\node[state, initial, accepting] (1)    at (0,0)	{$ u_1^{(3)} $};
\node[state] (2)    at (2, 0)	{$ u_2^{(3)} $};
\node[state, accepting] (3)    at (4, 0)	{$ u_3^{(3)} $};

\draw (1) edge [loop above] node [] {$ a $} (1);
\draw (1) edge [] node [] {$ a $} (2);
\draw (2) edge [] node [] {$ a $} (3);
\end{tikzpicture}
}
\end{subfigure}
     \begin{subfigure}[b]{0.49\textwidth}
        \centering
        \scalebox{.8}{
        \begin{tikzpicture}[->,>=stealth', semithick, auto, scale=1]
\node[state, initial, accepting] (1)    at (0,0)	{$ u_1^{(4)} $};
\node[state] (2)    at (2, 0)	{$ u_2^{(4)} $};
\node[state] (3)    at (4, 0)	{$ u_3^{(4)} $};
\node[state, accepting] (4)    at (6, 0)	{$ u_4^{(4)} $};

\draw (1) edge [loop above] node [] {$ a $} (1);
\draw (1) edge [] node [] {$ a $} (2);
\draw (2) edge [] node [] {$ a $} (3);
\draw (3) edge [] node [] {$ a $} (4);
\end{tikzpicture}
}
\end{subfigure}
 	\caption{The Wheeler NFAs $ (\mathcal{A}^{(3)}, \le^{(3)}) $ and $ (\mathcal{A}^{(4)}, \le^{(4)}) $ used in Example~\ref{ex:canhaveinfiniteclasses}.}
    \label{fig:infiniteclasses}
    \end{figure}
    
\end{example}

\section{Minimality}\label{sec:minimality}

Let $ \mathcal{A} $ be an NFA. An \emph{autobisimulation} on $ \mathcal{A} $ is a bisimulation from $ \mathcal{A} $ to $ \mathcal{A} $. It is well known that every NFA $ \mathcal{A} $ admits a \emph{maximal} autobisimulation which can be used to quotient $ \mathcal{A} $, and the quotient NFA is the \emph{minimal} NFA bisimilar to $ \mathcal{A} $~\cite{kozen2007automata, baier2008principles}.

In this section, our goal is to extend these results to Wheeler NFAs and Wheeler bisimulations. We define \emph{Wheeler autobisimulations} slightly differently from conventional bisimulations. The key idea is to add one more requirement, namely, \emph{reflexivity}. The intuition is that, in this way, we can use Lemma~\ref{lem:convexunion} to infer that the union of Wheeler autobisimulations \emph{must} be a Wheeler autobisimulation.

\begin{definition}
    Let $ (\mathcal{A}, \le) $ be a Wheeler NFA, with $ \mathcal{A} = (Q, E, s, F) $. A relation $ R \subseteq Q \times Q $ is a \emph{Wheeler autobisimulation} on $ (\mathcal{A}, \le) $ if $ R $ is a reflexive Wheeler bisimulation from $ (\mathcal{A}, \le) $ to $ (\mathcal{A}, \le) $.
\end{definition}

We can now show that every Wheeler NFA $ (\mathcal{A}, \le) $ admits a maximum Wheeler autobisimulation $ \equiv_{\mathcal{A}, \le} $. Crucially, $ \equiv_{\mathcal{A}, \le} $ is always $ \le $-convex.

\begin{theorem}\label{theor:maximumbisimulationproperties}
    Let $ (\mathcal{A}, \le) $ be a Wheeler NFA, with $ \mathcal{A} = (Q, E, s, F) $. Then, there exists a (unique) maximum Wheeler autobisimulation $ \equiv_{\mathcal{A}, \le} $ on $ (\mathcal{A}, \le) $, that is, a Wheeler autobisimulation $ \equiv_{\mathcal{A}, \le} $ on $ (\mathcal{A}, \le) $ that contains every Wheeler autobisimulation on $ (\mathcal{A}, \le) $. Moreover, $ \equiv_{\mathcal{A}, \le} $ is a $ \le $-convex equivalence relation on $ Q $.
\end{theorem}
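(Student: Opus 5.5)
The plan is to define $\equiv_{\mathcal{A}, \le}$ as the union $R_* = \bigcup R$ of all Wheeler autobisimulations on $(\mathcal{A}, \le)$. This family is nonempty because $id_Q$ is a Wheeler autobisimulation: it is a reflexive bisimulation, and $id_Q(C) = C$ and $id_Q^{-1}(C) = C$ preserve convexity. Then I would show in turn that $R_*$ is a Wheeler autobisimulation (which gives existence and uniqueness of the maximum), that it is an equivalence relation, and that its classes are $\le$-convex.

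\textbf{$R_*$ is a Wheeler autobisimulation.} Reflexivity is inherited from $id_Q$. Property~1 holds because a union of bisimulations is a bisimulation: the transfer conditions are checked pairwise, each pair $(u, u') \in R_*$ lies in some member $R$, and the witness supplied by $R$ also lies in $R_*$. The main obstacle is Property~2 (Property~3 is symmetric, using $R^{-1}$). Fix a nonempty $\le$-convex $C$; we have $R_*(C) = \bigcup_R R(C)$. Each $R(C)$ is $\le$-convex by Property~2 for $R$, and it contains $C$ by reflexivity. Hence any two of these sets intersect, and iterating Lemma~\ref{lem:convexunion} shows that every finite union of them is convex. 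Since $Q$ is finite, there are only finitely many distinct sets $R(C)$, so $R_*(C)$ is such a finite union and is convex. If $C = \emptyset$, then $R_*(C) = \emptyset$, which is convex. For Property~3 we use $R_*^{-1}(C) = \bigcup_R R^{-1}(C) \supseteq C$ in the same way. Therefore $R_*$ belongs to the family and contains every member, so it is the unique maximum.

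\textbf{$R_*$ is an equivalence relation.} For symmetry, $R^{-1}$ is a Wheeler autobisimulation whenever $R$ is: the inverse of a bisimulation is a bisimulation, Properties~2 and~3 swap roles, and reflexivity is preserved. Hence $R_*^{-1} \subseteq R_*$. For transitivity, $R_* \circ R_*$ is reflexive, and a composition of bisimulations is a bisimulation. For Property~2, $(R_* \circ R_*)(C) = R_*(R_*(C))$ is convex by applying Property~2 of $R_*$ twice. For Property~3, $(R_* \circ R_*)^{-1}(C) = R_*^{-1}(R_*^{-1}(C))$ is convex likewise. So $R_* \circ R_* \subseteq R_*$.

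\textbf{Convexity of the classes.} Since $\{u\}$ is $\le$-convex, $[u]_{R_*} = R_*(\{u\})$ is $\le$-convex by Property~2. Hence $\equiv_{\mathcal{A}, \le}$ is a $\le$-convex equivalence relation.
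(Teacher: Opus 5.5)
Your proof is correct and follows essentially the same route as the paper. Both define $\equiv_{\mathcal{A}, \le}$ as the union of all Wheeler autobisimulations, use reflexivity together with Lemma~\ref{lem:convexunion} to show the union preserves Properties~2 and~3, get symmetry and transitivity from closure under inverse and composition, and obtain convexity of the classes from Property~2 applied to singletons. The only cosmetic differences are that the paper packages the union step as the two-relation Lemma~\ref{lem:unionautobisimulation}, iterated over the finite family, and proves transitivity via $R' \circ R$ for members $R, R'$ rather than via $R_* \circ R_*$.
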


Fix a Wheeler NFA $ (\mathcal{A}, \le) $. Consider the maximum Wheeler autobisimulation $ \equiv_{\mathcal{A}, \le} $ on $ (\mathcal{A}, \le) $, and consider the conventional maximum autobisimulation $ \equiv_\mathcal{A} $ on $ \mathcal{A} $. Both $ \equiv_{\mathcal{A}, \le} $ and $ \equiv_\mathcal{A} $ are always equivalence relations, and $ \equiv_{\mathcal{A}, \le} $ is always $ \le $-convex (see Theorem~\ref{theor:maximumbisimulationproperties} and~\cite{kozen2007automata}). In particular, $ \equiv_{\mathcal{A}, \le} $ is a conventional autobisimulation (by Property~1), so by the maximality of $ \equiv_\mathcal{A} $ we obtain that every $ \equiv_{\mathcal{A}, \le} $-equivalence class is contained in some $ \equiv_\mathcal{A} $-equivalence class. However, the next example shows that, in general, $ \equiv_{\mathcal{A}, \le} $ cannot be obtained from $ \equiv_\mathcal{A} $ by simply splitting each $ \equiv_\mathcal{A} $-equivalence class into its $ \le $-convex components.

\begin{example}\label{ex:splittingdoesntwork}
    Let $ (\mathcal{A}, \le) $ be the Wheeler NFA in Figure~\ref{fig:relationishipbetweenbisimulationandwheelerbisimulation}. The $ \equiv_\mathcal{A} $-equivalence classes are $ \{u_1 \} $, $ \{u_2, u_3 \} $, $ \{u_4, u_6 \} $, $ \{u_5, u_7 \} $, $ \{u_8 \} $ and the $ \equiv_{\mathcal{A}, \le} $-equivelence classes are $ \{u_1 \} $, $ \{u_2 \} $, $ \{u_3 \} $, $ \{u_4 \} $, $ \{u_5 \} $, $ \{u_6 \} $, $ \{u_7 \} $, $ \{u_8 \} $, so the $ \equiv_\mathcal{A} $-equivalence class $ \{u_2, u_3 \} $ must be split into $ \{u_2 \} $ and $ \{u_3 \} $ even if the set $ \{u_2, u_3 \} $ is $ \le $-convex.

    \begin{figure}
        \centering
        \scalebox{.8}{
        \begin{tikzpicture}[->,>=stealth', semithick, auto, scale=1]
\node[state] (1)    at (0,0)	{$ u_1 $};
\node[state] (2)    at (-2, 2)	{$ u_2 $};
\node[state] (3)    at (2, 2)	{$ u_3 $};
\node[state, accepting] (4)    at (-3, 4)	{$ u_4 $};
\node[state] (5)    at (-1, 4)	{$ u_5 $};
\node[state, accepting] (6)    at (1, 4)	{$ u_6 $};
\node[state] (7)    at (3, 4)	{$ u_7 $};
\node[state, accepting] (8)    at (0, 6)	{$ u_8 $};

\draw (1) edge [] node [] {$ a $} (2);
\draw (1) edge [] node [] {$ a $} (3);
\draw (2) edge [] node [] {$ b $} (4);
\draw (2) edge [] node [] {$ b $} (5);
\draw (3) edge [] node [] {$ b $} (6);
\draw (3) edge [] node [] {$ b $} (7);
\draw (4) edge [] node [] {$ c $} (8);
\draw (5) edge [] node [] {$ c $} (8);
\draw (6) edge [] node [] {$ c $} (8);
\draw (7) edge [] node [] {$ c $} (8);

\end{tikzpicture}
}
\caption{The Wheeler NFA $ (\mathcal{A}, \le) $ used in Example~\ref{ex:splittingdoesntwork}. The states are numbered following the Wheeler order $ \le $.}
\label{fig:relationishipbetweenbisimulationandwheelerbisimulation}
\end{figure} 
\end{example}

Let us show how to define the quotient Wheeler NFA $ (\mathcal{A}_*, \le_*) $ starting from the Wheeler NFA $ (\mathcal{A}, \le) $. The key point is that the quotient NFA $ \mathcal{A}_* $ can always be endowed with a Wheeler order $ \le_* $.

\begin{lemma}\label{lem:quotient}
    Let $ (\mathcal{A}, \le) $ be a Wheeler NFA, with $ \mathcal{A} = (Q, E, s, F) $. Let $ Q_* = \mathcal{P}_{\equiv_{\mathcal{A}, \le}} $, and let $ \le_* $ be the relation from $ Q_* $ to $ Q_* $ such that, for every $ U, V \in Q_* $, we have $ U \le_* V $ if and only if there exist $ u \in U $ and $ v \in V $ such that $ u \le v $.
    \begin{enumerate}
        \item Let $ U, V \in Q_* $ be such that $ U <_* V $. Then, for every $ u \in U $ and for every $ v \in V $ we have $ u < v $.
        \item $ \le_* $ is a total order on $ Q_* $.
        \item Let $ \mathcal{A}_* = (Q_*, E_*, s_*, F_*) $, where:
        \begin{itemize}
            \item $ E_* = \{(U, V, a) \in Q_* \times Q_* \times \Sigma \;|\; (\exists u \in U)(\exists v \in V)((u, v, a) \in E) \} $.
            \item $ s_* = [s]_{\equiv_{\mathcal{A}, \le}} $.
            \item $ F_* = \{U \in Q_* \;|\; (\exists u \in U)(u \in F) \} $.
        \end{itemize}
        Then, $ (\mathcal{A}_*, \le_*) $ is a Wheeler NFA.
        \item If $ \mathcal{A} $ is a DFA, then $ \mathcal{A}_* $ is a DFA.
    \end{enumerate}
\end{lemma}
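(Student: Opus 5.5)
The plan is to rely throughout on Theorem~\ref{theor:maximumbisimulationproperties}: $\equiv := \equiv_{\mathcal{A}, \le}$ is a $\le$-convex equivalence relation, so every class in $Q_*$ is a $\le$-convex set, and $\equiv$ is a reflexive bisimulation.

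For part 1, let $U <_* V$, that is, $U \le_* V$ and $U \ne V$, so $U \cap V = \emptyset$. Suppose, for a contradiction, that $v \le u$ for some $u \in U$ and $v \in V$; since the classes are disjoint, $v < u$. By the definition of $\le_*$ there are $u_0 \in U$ and $v_0 \in V$ with $u_0 \le v_0$. I will do a short case analysis on the positions of $u_0, v_0, u, v$. If $v_0 \le u$, then $u_0 \le v_0 \le u$ with $u_0, u \in U$, so convexity of $U$ forces $v_0 \in U$, contradicting disjointness. Otherwise $u < v_0$, and then $v < u < v_0$ with $v, v_0 \in V$, so convexity of $V$ gives $u \in V$, again a contradiction.

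For part 2, reflexivity and totality follow directly from the totality of $\le$. Antisymmetry follows from part 1: if $U \le_* V$, $V \le_* U$ and $U \ne V$, then $U <_* V$, so every element of $U$ is below every element of $V$, which contradicts the witness pair for $V \le_* U$. Transitivity also follows from part 1: if $U <_* V <_* W$, pick any $u, v, w$ in the respective classes; then $u < v < w$, so $U \le_* W$. The trivial cases where two of the classes coincide are handled separately.

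For part 3, the axioms are checked one at a time.
\begin{itemize}
\item Axiom~1 holds because $s \le u$ for all $u$, so $s_* \le_* U$.
\item Axiom~2: given $(U,V,a),(U',V',b) \in E_*$ with $V <_* V'$, choose witnessing edges $(u,v,a),(u',v',b) \in E$. Part 1 gives $v < v'$, and Axiom~2 in $\mathcal{A}$ yields $a \preceq b$.
\item Axiom~3 has the same shape, but the witnesses give $u \le u'$ only for the particular representatives, which immediately gives $U \le_* U'$ by the definition of $\le_*$. This is easier than it first looks.
\end{itemize}

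The real subtlety should be part 4, where the bisimulation property of $\equiv$ must be used. Suppose $(U,V,a),(U,V',a) \in E_*$ with witnesses $(u,v,a)$ and $(u',v',a)$, where $u, u' \in U$. Since $u \equiv u'$ and $\equiv$ is a bisimulation, the edge $(u',v',a)$ is matched by some edge $(u,v'',a) \in E$ with $v'' \equiv v'$. Determinism of $\mathcal{A}$ gives $v'' = v$, so $v \equiv v'$ and hence $V = V'$. The only care needed is the direction convention of Definition~\ref{def:bisimulation}; because $\equiv$ is symmetric, either clause can be used.
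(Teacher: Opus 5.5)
Your proposal is correct and follows the paper's proof essentially step for step. It uses the same convexity case split for part~1, lets part~1 drive antisymmetry, transitivity and Axioms~2--3, and gives the same bisimulation-plus-determinism argument for part~4. The one item you skip, which the paper does verify, is that $\mathcal{A}_*$ satisfies the paper's standing convention that every state is reachable and co-reachable. This is immediate: map a walk of $\mathcal{A}$ from $s$ to $u$ (or from $u$ to a final state) to the walk of its $\equiv_{\mathcal{A}, \le}$-classes, which is a walk of $\mathcal{A}_*$ by the definitions of $E_*$, $s_*$ and $F_*$.
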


The following lemma shows that, if there exists a Wheeler bisimulation from $ (\mathcal{A}, \le) $ to $ (\mathcal{A}', \le') $, then the corresponding quotients must be isomorphic, and the isomorphism is witnessed by a bisimulation $ R $ that respects the convex structure of each quotient.

\begin{lemma}\label{lem:quotientautomotaisomorphic}
    Consider a $ \sim_{\mathfrak{C}} $-equivalence class $ C $, let $ (\mathcal{A}, \le) $ and $ (\mathcal{A}', \le') $ be two elements of $ C $, and let $ (\mathcal{A}_*, \le_*) $ and $ (\mathcal{A}'_*, \le'_*) $ be the corresponding Wheeler NFAs built in Lemma~\ref{lem:quotient}, with $ \mathcal{A}_* = (Q_*, E_*, s_*, F_*) $ and $ \mathcal{A}'_* = (Q'_*, E'_*, s'_*, F'_*) $. Let $ Q_* = \{U_1, U_2, \dots, U_n \} $ and $ Q'_* = \{U'_1, U'_2, \dots, U'_{n'} \} $, where $ U_1 <_* U_2 <_* \dots <_* U_n $ and $ U'_1 <'_* U'_2 <'_* \dots <'_* U'_{n'} $. Then, $ n = n' $. Moreover, let $ \phi $ be the function from $ Q_* $ to $ Q'_* $ such that $ \phi(U_i) = U'_i $ for every $ 1 \le i \le n $. Then, (i)~$ \phi $ is a Wheeler bisimulation from $ (\mathcal{A}_*, \le_*) $ to $ (\mathcal{A}'_*, \le'_*) $, (ii)~$ \phi $ is the \emph{unique} Wheeler bisimulation from $ (\mathcal{A}_*, \le_*) $ to $ (\mathcal{A}'_*, \le'_*) $, and (iii)~$ \phi $ is an isomorphism from $ \mathcal{A}_* $ to $ \mathcal{A}'_* $.
\end{lemma}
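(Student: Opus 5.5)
Let $R$ be a Wheeler bisimulation from $(\mathcal{A}, \le)$ to $(\mathcal{A}', \le')$; it exists because the two automata lie in the same $\sim_{\mathfrak{C}}$-class. The plan is to transfer $R$ to the quotients and then show that a Wheeler bisimulation between two quotients must be an order-preserving bijection.

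\textbf{Step 1: transfer $R$ to the quotients.} Define $R_* \subseteq Q_* \times Q'_*$ by $U \; R_* \; U'$ iff there are $u \in U$ and $u' \in U'$ with $u \; R \; u'$. Using the projections $\pi(u) = [u]_{\equiv_{\mathcal{A}, \le}}$, I would first show that $\pi$ is a Wheeler bisimulation from $(\mathcal{A}, \le)$ to $(\mathcal{A}_*, \le_*)$. Property~1 holds by the definition of $E_*$, $F_*$ and the fact that $\equiv_{\mathcal{A},\le}$ is a bisimulation. Properties~2 and~3 follow from Lemma~\ref{lem:quotient}(1) together with the $\le$-convexity of the classes. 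Then $R_* = \pi' \circ R \circ \pi^{-1}$, and I expect it to be a Wheeler bisimulation by closure of Wheeler bisimulations under inverse and composition (Lemma~\ref{lem:bisimequiv}, stated in the appendix; convexity of images and preimages composes directly).

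\textbf{Step 2: each quotient has only the identity as Wheeler autobisimulation.} For any Wheeler bisimulation $S$ from $(\mathcal{A}_*, \le_*)$ to itself, $id \cup S$ is reflexive. It is a bisimulation, and its image of a convex set $C$ is $C \cup S(C)$. These two sets intersect when $C$ is nonempty, since $S(C)$ meets or is adjacent to $C$; this is the delicate point. Lemma~\ref{lem:convexunion} then makes the union convex, so $id \cup S$ would be a Wheeler autobisimulation. Pulling it back to $\mathcal{A}$ via $\pi^{-1} \circ (id \cup S) \circ \pi$ gives a Wheeler autobisimulation on $(\mathcal{A}, \le)$. By the maximality in Theorem~\ref{theor:maximumbisimulationproperties}, this pullback is contained in $\equiv_{\mathcal{A}, \le}$, hence $S \subseteq id$.

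\textbf{Step 3: conclude.} I apply Step 2 to $R_*^{-1} \circ R_*$ on $\mathcal{A}_*$ and to $R_* \circ R_*^{-1}$ on $\mathcal{A}'_*$. These are Wheeler autobisimulations on the quotients, and they are total by the bisimulation clauses plus reachability (every state is reachable from $s$ and $s \; R \; s'$). Both are contained in the identity, so $R_*$ is a bijection, and it is a bisimulation, hence an isomorphism of NFAs. For order preservation, a bijection $f$ mapping convex sets to convex sets and back preserves adjacency: images of two-element convex sets are convex. A bijection between finite chains preserving adjacency is monotone or antitone. Since $s_*$ and $s'_*$ are the minima by Axiom~1 and $f(s_*) = s'_*$, $f$ is monotone, so $R_* = \phi$ and $n = n'$. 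For uniqueness (ii), any Wheeler bisimulation $T$ between the quotients satisfies $\phi^{-1} \circ T \subseteq id$ by Step 2, so $T \subseteq \phi$. Totality from reachability then gives $T = \phi$.

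\textbf{Main obstacle.} The hard part is Step 2: the reflexivity and union argument when $S$ is not already reflexive. Here $C \cap S(C)$ could be empty for singleton $C$. My first attempt would be to avoid the union and prove directly, by induction along reachability from $s_*$, that $S$ relates each state only to states of equal order. Failing that, I would show the union is convex using the argument of Theorem~\ref{theor:maximumbisimulationproperties} applied to $S \cup S^{-1} \cup id$ and its transitive closure.
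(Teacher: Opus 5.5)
There is a real gap, but it is confined to the non-reflexive case of your Step~2 and to the way you derive (ii). In Step~2 you assert that for nonempty convex $C$ the sets $C$ and $S(C)$ ``meet or are adjacent''. Nothing you have established gives this for a non-reflexive $S$; on the quotient it holds only after the fact, once one knows $S = id_{Q_*}$. So $id \cup S$ is not shown to satisfy Property~2, and the general claim ``every Wheeler bisimulation from $(\mathcal{A}_*,\le_*)$ to itself is contained in $id_{Q_*}$'' is unproved. Your uniqueness argument needs exactly this general claim. The reason is that $\phi^{-1}\circ T$ is reflexive if and only if $\phi \subseteq T$, which is what you are trying to show. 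Your fallback ideas in the ``Main obstacle'' paragraph do not close this either.

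This obstacle is one you created yourself: you never need Step~2 beyond reflexive $S$. In that case $id \cup S = S$, and your pullback-plus-maximality argument is precisely Lemma~\ref{lem:uniquewheelerautobisimulation}.

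In Step~3, what you need is that the composites $R_*^{-1}\circ R_*$ and $R_*\circ R_*^{-1}$ are reflexive, not merely ``total''. They are reflexive because $R_*$ is left- and right-total by Lemma~\ref{lem:R(u)nonempty}.

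For (ii), run Step~3 verbatim on an arbitrary Wheeler bisimulation $T$ between the quotients. Then $T^{-1}\circ T$ and $T\circ T^{-1}$ are reflexive Wheeler autobisimulations, hence identities, so $T$ is a bijection, and your adjacency argument forces $T=\phi$. This is how the paper proceeds: it proves everything for an arbitrary Wheeler bisimulation $R$ between the quotients, so existence yields (i) and (iii), and arbitrariness yields (ii).

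With that change your proof is complete. The remaining difference from the paper is your order-preservation step. You use that images of two-element convex sets are two-element convex sets, so $R_*$ preserves adjacency and is monotone given $R_*(s_*)=s'_*$. The paper instead uses induction on the convex prefixes $\{U_1,\dots,U_i\}$ together with injectivity. Your version is a valid alternative.
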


\begin{proof}
    \emph{(Sketch)} One can show that there must exist a Wheeler bisimulation $ R $ from $ (\mathcal{A}_*, \le_*) $ to $ (\mathcal{A}'_*, \le'_*) $, and $ R $ must be an isomorphism (which implies $ n = n' $). To conclude the proof, we only need to prove that we must have $ R(U_i) = U'_i $ for every $ 1 \le i \le n $. We proceed by induction on $ i $. For $ i = 1 $, we have $ U_1 = s_* $ and $ U'_1 = s'_* $ by Axiom~1, and we know that $ R(s_*) = s'_* $ because $ R $ is an isomorphism from $ \mathcal{A}_* $ to $ \mathcal{A}'_* $, so $ R(U_i) = R(s_*) = s'_* = U'_1 $. Now assume that $ 2 \le i \le n $. By the inductive hypothesis, we know that $ R(U_j) = U'_j $ for every $ 1 \le j \le i - 1 $. We must show that $ R(U_i) = U'_i $. Let $ 1 \le k \le n $ be such that $ R(U_i) = U'_k $. We must prove that $ k = i $. We only need to prove that we cannot have $ k < i $ and we cannot have $ i < k $. We cannot have $ k < i $ because by the inductive hypothesis we would have $ R(U_k) = U'_k $, so we would have $ \{U_k, U_i \} \subseteq R^{-1}(U'_k) $ and $ R $ would not be injective. Moreover, we cannot have $ i < k $ because by the inductive hypothesis we would have $ R(\{U_1, U_2, \dots, U_{i - 2}, U_{i - 1}, U_i \}) = \{U'_1, U'_2, \dots, U'_{i - 2}, U'_{i - 1}, U'_k \} $, which would be a contradiction because $ R $ is a Wheeler bisimulation from $ (\mathcal{A}_*, \le_*) $ to $ (\mathcal{A}'_*, \le'_*) $, $ \{U_1, U_2, \dots, U_{i - 2}, U_{i - 1}, U_i \} $ is $ \le_* $-convex, but $ \{U'_1, U'_2, \dots, U'_{i - 2}, U'_{i - 1}, U'_k \} $ would not be $ \le'_* $-convex. \qedhere
\end{proof}

We can now prove that, if we consider the $ \sim_{\mathfrak{C}} $-equivalence class $ C $ of a Wheeler NFA $ (\mathcal{A}, \le) $, then $ (\mathcal{A}_*, \le_*) $ is a minimal element of $ C $ and, up to isomorphism, is the \emph{unique} minimal element of $ C $ (Theorem~\ref{theorem:minimalbisimilar} and Corollary~\ref{cor:quotientisstateminimal}).

\begin{theorem}\label{theorem:minimalbisimilar}
    Consider a $ \sim_{\mathfrak{C}} $-equivalence class $ C $, and let $ (\mathcal{A}, \le) $ and $ (\mathcal{A}', \le') $ be two minimal elements of $ C $. Then, there exists an isomorphism from $ \mathcal{A} $ to $ \mathcal{A}' $.
\end{theorem}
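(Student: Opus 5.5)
The plan is to show that each minimal element of $ C $ coincides with its own quotient up to isomorphism, and then to invoke Lemma~\ref{lem:quotientautomotaisomorphic}.

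Fix a minimal element $ (\mathcal{A}, \le) $ of $ C $ and let $ (\mathcal{A}_*, \le_*) $ be its quotient from Lemma~\ref{lem:quotient}. The first step is to show that $ (\mathcal{A}_*, \le_*) \in C $. To do so, I exhibit a Wheeler bisimulation from $ (\mathcal{A}, \le) $ to $ (\mathcal{A}_*, \le_*) $, namely the canonical projection $ \pi(u) = [u]_{\equiv_{\mathcal{A}, \le}} $. The verification splits into two parts.

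For Property~1, $ \pi $ is a bisimulation by the standard argument, using that $ \equiv_{\mathcal{A}, \le} $ is an autobisimulation and an equivalence relation (Theorem~\ref{theor:maximumbisimulationproperties}):
\begin{itemize}
\item forward edges project directly;
\item for a quotient edge $ (U, V, a) $ witnessed by $ (u_0, v_0, a) $, and any $ u \in U $, we have $ u_0 \equiv u $, so the autobisimulation yields $ v \equiv v_0 $ with $ (u, v, a) \in E $;
\item finality is preserved because related states agree on $ F $.
\end{itemize}

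For Properties~2 and~3, I use that the classes are $ \le $-convex and the description of $ \le_* $ in Lemma~\ref{lem:quotient}(1).
\begin{itemize}
\item The preimage of a $ \le_* $-convex set of classes is a union of consecutive convex blocks, hence $ \le $-convex.
\item The image of a $ \le $-convex $ C $ is a set of classes. If it contained $ U <_* W <_* V $ with $ U, V $ hit but $ W $ missed, then picking $ u \in U \cap C $ and $ v \in V \cap C $ gives $ u < w < v $ for every $ w \in W $. By convexity of $ C $, $ W \subseteq C $, a contradiction.
\end{itemize}
Since $ \sim_{\mathfrak{C}} $ is an equivalence relation, $ (\mathcal{A}_*, \le_*) \in C $ follows.

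By minimality of $ (\mathcal{A}, \le) $, we get $ |Q| \le |Q_*| $. Since $ Q_* $ is a partition of $ Q $, this forces every class to be a singleton. Hence $ \pi $ is a bijection, and from the definition of $ E_*, s_*, F_* $ it is an isomorphism from $ \mathcal{A} $ to $ \mathcal{A}_* $. The same argument gives an isomorphism from $ \mathcal{A}' $ to $ \mathcal{A}'_* $.

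Finally, since $ (\mathcal{A}, \le) $ and $ (\mathcal{A}', \le') $ both lie in $ C $, Lemma~\ref{lem:quotientautomotaisomorphic} gives an isomorphism $ \phi $ from $ \mathcal{A}_* $ to $ \mathcal{A}'_* $. Composing $ \pi $, $ \phi $ and the inverse of the second projection yields the desired isomorphism from $ \mathcal{A} $ to $ \mathcal{A}' $.

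I expect the main obstacle to be verifying Property~2 for $ \pi $, that is, convexity of images. This is where Lemma~\ref{lem:quotient}(1), the fact that distinct classes are strictly separated in $ \le $, is essential. If it turns out to be delicate, I would instead argue that in the minimal case $ \equiv_{\mathcal{A}, \le} $ must already be the identity; but the route through the projection seems cleanest.
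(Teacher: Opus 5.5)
Your proposal is correct and follows the paper's own proof. The paper also composes the isomorphism between quotients from Lemma~\ref{lem:quotientautomotaisomorphic} with the isomorphisms $\mathcal{A} \to \mathcal{A}_*$ and $\mathcal{A}' \to \mathcal{A}'_*$ from Lemma~\ref{lem:stateminimalisomorphism}, which rests on the projection being a Wheeler bisimulation (Lemma~\ref{lem:bisimulationtoquotient}); you re-derive those last two lemmas inline.
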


\begin{corollary}\label{cor:quotientisstateminimal}
   Consider a $ \sim_{\mathfrak{C}} $-equivalence class $ C $, let $ (\mathcal{A}, \le) $ be an element of $ C $, and let $ (\mathcal{A}_*, \le_*) $ be the Wheeler NFA of Lemma~\ref{lem:quotient}. Then, $ (\mathcal{A}_*, \le_*) $ is a minimal element of $ C $. 
\end{corollary}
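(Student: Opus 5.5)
We must show that $(\mathcal{A}_*, \le_*)$ is a minimal element of $C$. The plan is to prove two facts: that $(\mathcal{A}_*, \le_*)$ belongs to $C$, and that every element of $C$ has at least $|Q_*|$ states.

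\textbf{Step 1: $(\mathcal{A}_*, \le_*) \in C$.} We show that the relation $\pi \subseteq Q \times Q_*$ given by $u \;\pi\; U$ iff $u \in U$ (the canonical projection $u \mapsto [u]_{\equiv_{\mathcal{A}, \le}}$) is a Wheeler bisimulation. By Lemma~\ref{lem:quotient}, $(\mathcal{A}_*, \le_*)$ is a Wheeler NFA, so it suffices to check the three properties.

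Property~1 is the standard fact that the projection onto the quotient by an autobisimulation that is an equivalence relation is a bisimulation. The forward condition holds by the definition of $E_*$. For the backward condition, suppose $u \in U$ and $(U, V, a) \in E_*$, witnessed by some $u' \in U$ and $v' \in V$ with $(u', v', a) \in E$. Since $u \equiv_{\mathcal{A}, \le} u'$ and this relation is an autobisimulation, there is some $v$ with $(u, v, a) \in E$ and $v \equiv_{\mathcal{A}, \le} v'$, so $v \in V$. Finality is consistent across a class because bisimilar states agree on $F$, and $\pi(s) = s_*$ by definition.

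Property~2: if $C'$ is $\le$-convex, then $\pi(C')$ is $\le_*$-convex. Take $U \le_* W \le_* V$ with $U, V \in \pi(C')$, and pick $u \in U \cap C'$, $v \in V \cap C'$. By Lemma~\ref{lem:quotient}(1), every element of $W$ lies between $u$ and $v$ in $\le$ (after handling the degenerate cases $W = U$ or $W = V$). Convexity of $C'$ then gives $W \cap C' \neq \emptyset$, so $W \in \pi(C')$.

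Property~3: $\pi^{-1}(\mathcal{D})$ is the union of the classes in $\mathcal{D}$. Each class is $\le$-convex by Theorem~\ref{theor:maximumbisimulationproperties}, and by Lemma~\ref{lem:quotient}(1) the classes are ordered consecutively. Hence a $\le_*$-convex family of classes has a $\le$-convex union.

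\textbf{Step 2: every element of $C$ has at least $|Q_*|$ states.} Take a minimal element $(\mathcal{B}, \le')$ of $C$ and let $(\mathcal{B}_*, \le'_*)$ be its quotient. By Lemma~\ref{lem:quotientautomotaisomorphic}, $\mathcal{B}_*$ has exactly $n = |Q_*|$ states. By Step~1 applied to $\mathcal{B}$, $(\mathcal{B}_*, \le'_*) \in C$, so minimality of $\mathcal{B}$ gives $|Q_\mathcal{B}| \le n$. Conversely, $|Q_\mathcal{B}| \ge n$ because the quotient map is surjective, so $|Q_\mathcal{B}|$ is at least its number of classes, which is $n$. Therefore every element of $C$ has at least as many states as the minimal element $\mathcal{B}$, which has exactly $n$. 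Hence $(\mathcal{A}_*, \le_*)$ has the minimum number of states and is a minimal element of $C$.

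\textbf{Main obstacle.} The one delicate point is the convexity bookkeeping in Step~1 (Properties~2 and~3). Everything there rests on Lemma~\ref{lem:quotient}(1) and on the convexity of the equivalence classes, so the work is in applying those carefully, including the degenerate cases.
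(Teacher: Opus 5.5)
Your proof is correct and takes essentially the paper's route: Step~1 is exactly Lemma~\ref{lem:bisimulationtoquotient}, which the paper cites rather than reproves, and Step~2 combines Lemma~\ref{lem:quotientautomotaisomorphic} with the partition bound $|Q_*| \le |Q|$ used in Lemma~\ref{lem:stateminimalisomorphism}, comparing state counts where the paper composes isomorphisms. As a small simplification, you do not need the minimal element $\mathcal{B}$: every element of $C$ has at least as many states as its own quotient, and that quotient has exactly $|Q_*|$ states by Lemma~\ref{lem:quotientautomotaisomorphic}.
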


We now show that every Wheeler language admits a unique minimal Wheeler DFA (up to isomorphism). This result was already proved in~\cite{alanko2020regular} using a different method; here, we use Wheeler bisimulations. In addition, we will show that \emph{all} Wheeler DFAs $ (\mathcal{A}, \le) $ such that $ \mathcal{L}(\mathcal{A}) = \mathcal{L} $ must be in the same $ \sim_{\mathfrak{C}} $-equivalence class $ T_\mathcal{L} $, and the minimal element of $ T_\mathcal{L} $ (unique up to isomorphism by Theorem~\ref{theorem:minimalbisimilar}) is the minimal Wheeler DFA. Note that a similar result holds for standard bisimulations: all DFAs recognizing a given regular language must be in the same bisimulation class, and the miminal element of this bisimulation class is the minimal DFA of the language~\cite{engelfriet1985determinancy}.

To prove our results, we will use the following lemma. Intuitively, \emph{in the deterministic case}, equivalent Wheeler automata are always related through a Wheeler bisimulation.

\begin{lemma}\label{lem:allDFAsbismilar}
    Let $ (\mathcal{A}, \le) $ and $ (\mathcal{A}', \le') $ be Wheeler DFAs, with $ \mathcal{A} = (Q, E, s, F) $ and $ \mathcal{A}' = (Q', E', s', F') $. If $ \mathcal{L}(\mathcal{A}) = \mathcal{L}(\mathcal{A}') $, there exists a Wheeler bisimulation $ R \subseteq Q \times Q' $ from $ (\mathcal{A}, \le) $ to $ (\mathcal{A}', \le') $.
\end{lemma}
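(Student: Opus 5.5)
The natural candidate is the relation that links states reached by a common string:
\[
R = \{(u, u') \in Q \times Q' \;|\; I_u^\mathcal{A} \cap I_{u'}^{\mathcal{A}'} \neq \emptyset\}.
\]
Since both automata are DFAs and $\mathcal{L}(\mathcal{A}) = \mathcal{L}(\mathcal{A}')$, we have $\Pref(\mathcal{L}(\mathcal{A})) = \Pref(\mathcal{L}(\mathcal{A}'))$, so every string in some $I_u^\mathcal{A}$ reaches exactly one state of $\mathcal{A}'$, and conversely. The proof checks Properties~1--3 of Definition~\ref{def:wheelerbisimulation} for this $R$.

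\emph{Property~1.} We have $s \; R \; s'$ because $\epsilon$ reaches both initial states.

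For the forth condition, suppose $u \; R \; u'$ via a string $\alpha$, and $(u,v,a) \in E$. Then $\alpha a \in I_v^\mathcal{A} \subseteq \Pref(\mathcal{L})$, so $\alpha a$ is read in $\mathcal{A}'$ from $s'$. Since $\mathcal{A}'$ is deterministic, that run passes through $u'$ after $\alpha$ and then takes an edge $(u',v',a) \in E'$. Hence $\alpha a \in I_{v'}^{\mathcal{A}'}$, so $v \; R \; v'$. The back condition is symmetric.

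For finality, if $u \; R \; u'$ via $\alpha$, then by determinism $\alpha \in \mathcal{L}(\mathcal{A})$ iff $u \in F$, and $\alpha \in \mathcal{L}(\mathcal{A}')$ iff $u' \in F'$. Equality of the languages gives $u \in F$ iff $u' \in F'$.

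\emph{Properties~2 and~3.} These are the main step. I would first prove a monotonicity claim: if $u_1 \; R \; u'_1$, $u_2 \; R \; u'_2$ and $u_1 < u_2$, then $u'_1 \le' u'_2$. Choose $\alpha_1 \in I_{u_1}^\mathcal{A} \cap I_{u'_1}^{\mathcal{A}'}$ and $\alpha_2 \in I_{u_2}^\mathcal{A} \cap I_{u'_2}^{\mathcal{A}'}$. By the DFA restatement of Lemma~\ref{lem:consistencystatesstrings}, $u_1 < u_2$ gives $\alpha_1 \prec \alpha_2$. If $u'_1 \neq u'_2$, the same lemma in $\mathcal{A}'$ turns $\alpha_1 \prec \alpha_2$ into $u'_1 <' u'_2$. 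The symmetric claim holds for $R^{-1}$.

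Now let $C$ be $\le$-convex, and let $x' <' y' <' z'$ with $x', z' \in R(C)$, say $x \; R \; x'$ and $z \; R \; z'$ for some $x, z \in C$. Since $y'$ is a state of $\mathcal{A}'$, it is reachable, so some $y \in Q$ satisfies $y \; R \; y'$.

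It remains to show $y \in C$. Suppose $y < x$ where $x \le z$ (the other orderings are handled similarly). The monotonicity claim then gives $y' \le' x'$, contradicting $x' <' y'$. Likewise $y > \max(x,z)$ is impossible. Hence $y$ lies between $x$ and $z$ in $\le$, and convexity of $C$ gives $y \in C$, so $y' \in R(C)$.

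One subtlety is the case $x = z$ with $x' \neq z'$, since a single state can be $R$-related to several states. This is handled by applying monotonicity in the direction from $\mathcal{A}'$ to $\mathcal{A}$: from $x' <' y'$ we get $x \le y$, and from $y' <' z'$ we get $y \le z$. Using both monotonicity directions in this way covers every case.

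Property~3 follows by the symmetric argument with the roles of $\mathcal{A}$ and $\mathcal{A}'$ exchanged. The obstacle I expect is exactly this careful case handling when several states are related to one, and the uniform fix is to apply monotonicity in both directions.
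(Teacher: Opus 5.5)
Your proof is correct and follows the paper's route. You use the same relation $R$ (states sharing a reaching string), and for Property~2 you choose $y$ through a string $\beta \in I_{y'}^{\mathcal{A}'}$. You then apply Lemma~\ref{lem:consistencystatesstrings} first in $\mathcal{A}'$ to place $\beta$ between the two witnesses, and then in $\mathcal{A}$ to obtain $x \le y \le z$; this is exactly your ``reverse'' monotonicity. That direction alone settles every case at once, and your forward case analysis also already covers $x = z$, so the ``subtlety'' needs no separate treatment. The only difference is that the paper cites Engelfriet for Property~1, whereas you verify it directly.
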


\begin{proof}
    Let $ R \subseteq Q \times Q' $ be the relation such that, for every $ u \in Q $ and for every $ u' \in Q' $, we have $ u \; R \; u' $ if and only if $ I_u^\mathcal{A} \cap I_{u'}^{\mathcal{A}'} \not = \emptyset $. Let us prove that $ R $ is a Wheeler bisimulation from $ (\mathcal{A}, \le) $ to $ (\mathcal{A}', \le') $.
    
    \begin{itemize}
        \item Property~1 is true because $ R $ is a bisimulation from $ \mathcal{A} $ to $ \mathcal{A}' $ (see~\cite{engelfriet1985determinancy}).
        \item Let us prove Property~2. Let $ C \subseteq Q $ be a $ \le $-convex set. We must prove that $ R(C) $ is $ \le' $-convex. Fix $ u', v', w' \in Q' $ such that $ u' \le' v' \le w' $ and $ u', w' \in R(C) $. We must prove that $ v' \in R(C) $. If $ (v' = u') \lor (v' = w') $, the conclusion is immediate, so we can assume $ u' <' v' <' w' $.
        
        Since $ u', w' \in R(C) $, there exist $ u, w \in C $ such that $ u \; R \; u' $ and $ w \; R \; w' $. This means that there exist $ \alpha \in I_u^\mathcal{A} \cap I_{u'}^{\mathcal{A}'} $ and $ \gamma \in I_w^\mathcal{A} \cap I_{w'}^{\mathcal{A}'} $. We know that $ I_{v'}^{\mathcal{A}'} \not = \emptyset $, so we can pick $ \beta \in I_{v'}^{\mathcal{A}'} $. In particular, $ \beta \in \Pref (\mathcal{L}(\mathcal{A}')) $, so $ \beta \in \Pref (\mathcal{L}(\mathcal{A})) $ (we have $ \Pref (\mathcal{L}(\mathcal{A})) = \Pref (\mathcal{L}(\mathcal{A}')) $ because $ \mathcal{L}(\mathcal{A}) = \mathcal{L}(\mathcal{A}') $), and we conclude that there exists $ v \in Q $ such that $ \beta \in I_v^\mathcal{A} $.

        Let us prove that $ \alpha \prec \beta \prec \gamma $. We only prove that $ \alpha \prec \beta $ because one can prove analogously that $ \beta \prec \gamma $. Since $ \mathcal{A}' $ is a DFA and $ u' \not = v' $, we have $ I_{u'}^{\mathcal{A}'} \cap I_{v'}^{\mathcal{A}'} = \emptyset $. We have $ u' <' v' $, $ \alpha \in I_{u'}^{\mathcal{A}'} $ and $ \beta \in I_{v'}^{\mathcal{A}'} $, so by Lemma~\ref{lem:consistencystatesstrings} we obtain $ \alpha \prec \beta $.

        Let us prove that $ u \le v \le w $. We only prove that $ u \le v $ because one can prove analogously that $ v \le w $. Assume that $ u \not = v $. We need to prove that $ u < v $. Since $ \mathcal{A} $ is a DFA and $ u \not = v $, we have $ I_u^\mathcal{A} \cap I_v^\mathcal{A} = \emptyset $. We have $ \alpha \prec \beta $, $ \alpha \in I_u^\mathcal{A} $ and $ \beta \in I_v^\mathcal{A} $, so by Lemma~\ref{lem:consistencystatesstrings} we obtain $ u < v $.

        We know that $ C $ is $ \le $-convex, so from $ u \le v \le w $ and $ u, w \in C $ we conclude $ v \in C $. We have $ v \; R \; v' $ because $ \beta \in I_v^\mathcal{A} \cap I_{v'}^{\mathcal{A}'} $, so $ v' \in R(C) $.

        \item Let us prove Property~3. Let $ C' \subseteq Q' $ be a $ \le' $-convex set. We must prove that $ R^{-1}(C') $ is $ \le $-convex. The proof is analogous to the previous point. \qedhere
    \end{itemize}
\end{proof}

Fix a Wheeler language $ \mathcal{L} \subseteq \Sigma^* $. We define $ D_\mathcal{L} $ and $ T_\mathcal{L} $ as follows.
\begin{itemize}
    \item Let $ D_\mathcal{L} $ be the class of all Wheeler DFAs $ (\mathcal{A}, \le) $ such that $ \mathcal{L}(\mathcal{A}) = \mathcal{L} $.
    \item By Lemma~\ref{lem:fromNFAtoDFA}, there exists some Wheeler DFA $ (\mathcal{A}, \le) $ such that $ \mathcal{L}(\mathcal{A}) = \mathcal{L} $. Let $ T_\mathcal{L} $ be the $ \sim_{\mathfrak{C}} $-equivalence class to which $ (\mathcal{A}, \le) $ belongs. By Lemma~\ref{lem:allDFAsbismilar}, the definition of $ T_\mathcal{L} $ does not depend on the choice of $ (\mathcal{A}, \le) $. This means that, for every Wheeler DFA $ (\mathcal{A}', \le') $, if $ \mathcal{L}(\mathcal{A}') = \mathcal{L} $, then $ (\mathcal{A}', \le') $ belongs to $ T_\mathcal{L} $. Notice that this is consistent with Example~\ref{ex:canhaveinfiniteclasses} because the Wheeler NFAs $ \mathcal{A}^{(k)} $'s used in Example~\ref{ex:canhaveinfiniteclasses} are not Wheeler DFAs.
\end{itemize}

In particular, we have $ \emptyset \subsetneqq D_\mathcal{L} \subseteq T_\mathcal{L} $.

The key result is that $ D_\mathcal{L} $ and $ T_\mathcal{L} $ have the same minimal elements, as shown in the next lemma.

\begin{lemma}\label{lem:relatingdeterminismandbisimulation}
    Let $ \mathcal{L} \subseteq \Sigma^* $ be a Wheeler language, and let $ (\mathcal{A}, \le) $ be a Wheeler NFA. Then, $ (\mathcal{A}, \le) $ is a minimal element of $ D_\mathcal{L} $ if and only if it is a minimal element of $ T_\mathcal{L} $. In particular, every minimal element of $ T_\mathcal{L} $ is a Wheeler DFA.
\end{lemma}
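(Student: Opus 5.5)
The core claim to establish is that some minimal element of $ T_\mathcal{L} $ is a Wheeler DFA. Since $ D_\mathcal{L} \subseteq T_\mathcal{L} $, the equivalence then follows from a counting argument, using the uniqueness of minimal elements from Theorem~\ref{theorem:minimalbisimilar}.

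To find a minimal DFA in $ T_\mathcal{L} $, pick any $ (\mathcal{A}, \le) \in D_\mathcal{L} $, which exists by Lemma~\ref{lem:fromNFAtoDFA}. Form its quotient $ (\mathcal{A}_*, \le_*) $ as in Lemma~\ref{lem:quotient}. By Corollary~\ref{cor:quotientisstateminimal}, $ (\mathcal{A}_*, \le_*) $ is a minimal element of the $ \sim_{\mathfrak{C}} $-class of $ (\mathcal{A}, \le) $, namely $ T_\mathcal{L} $. By Lemma~\ref{lem:quotient}(4), $ \mathcal{A}_* $ is a DFA. It remains to show $ \mathcal{L}(\mathcal{A}_*) = \mathcal{L} $. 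I would do this by noting that Property~1 makes the Wheeler bisimulation between $ (\mathcal{A}, \le) $ and $ (\mathcal{A}_*, \le_*) $ a conventional bisimulation, and bisimilar NFAs recognize the same language. Concretely, this follows by induction on the length of a string read along walks from the initial states, using the forth and back clauses together with the final-state clause. Hence $ (\mathcal{A}_*, \le_*) \in D_\mathcal{L} $. Let $ m $ be its number of states; then $ m $ is both the minimum size in $ T_\mathcal{L} $ and, since $ D_\mathcal{L} \subseteq T_\mathcal{L} $, the minimum size in $ D_\mathcal{L} $.

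For the forward direction, let $ (\mathcal{B}, \le') $ be minimal in $ D_\mathcal{L} $. It has $ m $ states and lies in $ T_\mathcal{L} $, so it is minimal in $ T_\mathcal{L} $.

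For the converse, let $ (\mathcal{B}, \le') $ be minimal in $ T_\mathcal{L} $. Theorem~\ref{theorem:minimalbisimilar} gives an isomorphism $ \phi $ from $ \mathcal{A}_* $ to $ \mathcal{B} $. Isomorphisms preserve determinism and the recognized language, so $ \mathcal{B} $ is a DFA with $ \mathcal{L}(\mathcal{B}) = \mathcal{L} $. The order $ \le' $ is a Wheeler order by hypothesis, since elements of $ T_\mathcal{L} $ are Wheeler NFAs. Therefore $ (\mathcal{B}, \le') \in D_\mathcal{L} $ with $ m $ states, so it is minimal in $ D_\mathcal{L} $. This also yields the final sentence of the lemma.

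The main obstacle is the language-preservation step, because the existence of a Wheeler bisimulation from $ (\mathcal{A}, \le) $ to $ (\mathcal{A}_*, \le_*) $ is only implicit in the statement of Corollary~\ref{cor:quotientisstateminimal}, through $ (\mathcal{A}_*, \le_*) \in T_\mathcal{L} $. I would take that membership as given and then invoke the standard fact that bisimilarity preserves languages. If a more self-contained argument is needed, I would instead check directly that the canonical projection $ u \mapsto [u]_{\equiv_{\mathcal{A}, \le}} $ is a bisimulation onto $ \mathcal{A}_* $. This uses that $ \equiv_{\mathcal{A}, \le} $ is an autobisimulation, so equivalent states agree on finality and matching transitions, which makes the definitions of $ E_* $ and $ F_* $ consistent.
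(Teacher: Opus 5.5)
Your proof is correct and follows essentially the same route as the paper: quotient a Wheeler DFA in $ T_\mathcal{L} $ (a DFA by Lemma~\ref{lem:quotient}, minimal in $ T_\mathcal{L} $ by Corollary~\ref{cor:quotientisstateminimal}), then transfer determinism to an arbitrary minimal element of $ T_\mathcal{L} $ through the isomorphism given by Theorem~\ref{theorem:minimalbisimilar}, with $ D_\mathcal{L} \subseteq T_\mathcal{L} $ handling the size comparison. The differences are only cosmetic. The paper gets $ \mathcal{L}(\mathcal{A}) = \mathcal{L} $ for the minimal element directly from Lemma~\ref{lem:bisimulationandequivalence}, since every member of $ T_\mathcal{L} $ recognizes $ \mathcal{L} $. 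The step you flag as the main obstacle is already covered by Lemma~\ref{lem:bisimulationtoquotient}.
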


We can now prove that every Wheeler language admits a unique minimal Wheeler DFA (up to isomorphism).

\begin{theorem}\label{theor:determinismmaintheor}
    Let $ \mathcal{L} \subseteq \Sigma^* $ be a Wheeler language, and let $ (\mathcal{A}, \le) $ and $ (\mathcal{A}', \le') $ be two minimal elements of $ D_\mathcal{L} $. Then, there exists an isomorphism from $ \mathcal{A} $ to $ \mathcal{A}' $.
\end{theorem}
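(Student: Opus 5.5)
The plan is to reduce the statement to Theorem~\ref{theorem:minimalbisimilar} via Lemma~\ref{lem:relatingdeterminismandbisimulation}, so that the uniqueness of the minimal Wheeler DFA becomes a special case of the uniqueness of minimal elements of a $ \sim_{\mathfrak{C}} $-equivalence class.

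First, we let $ (\mathcal{A}, \le) $ and $ (\mathcal{A}', \le') $ be two minimal elements of $ D_\mathcal{L} $. By Lemma~\ref{lem:relatingdeterminismandbisimulation}, applied to each of them separately, both $ (\mathcal{A}, \le) $ and $ (\mathcal{A}', \le') $ are minimal elements of $ T_\mathcal{L} $. Next, we use that $ T_\mathcal{L} $ is by definition a single $ \sim_{\mathfrak{C}} $-equivalence class. This is well defined by Lemma~\ref{lem:allDFAsbismilar}, which guarantees that every Wheeler DFA recognizing $ \mathcal{L} $ lies in the same class. Hence both automata are minimal elements of the same class $ C = T_\mathcal{L} $. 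Theorem~\ref{theorem:minimalbisimilar} then yields an isomorphism from $ \mathcal{A} $ to $ \mathcal{A}' $, which is the claim.

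There is essentially no obstacle at this level: all the difficulty is carried by Lemma~\ref{lem:relatingdeterminismandbisimulation} and Theorem~\ref{theorem:minimalbisimilar}, both of which we assume. The only point to check is bookkeeping. A minimal element of $ D_\mathcal{L} $ has as few states as any Wheeler DFA for $ \mathcal{L} $, but a priori not as few as every Wheeler NFA in $ T_\mathcal{L} $. Transferring minimality from $ D_\mathcal{L} $ to $ T_\mathcal{L} $ is exactly the ``only if'' direction of Lemma~\ref{lem:relatingdeterminismandbisimulation}, so we must invoke that direction rather than argue directly.

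If we wanted an additional sanity check, we could also note the following. The isomorphism produced by Theorem~\ref{theorem:minimalbisimilar} is order-preserving, by the construction in Lemma~\ref{lem:quotientautomotaisomorphic}, which maps the $ i $-th state to the $ i $-th state. This is consistent with the uniqueness of Wheeler orders on DFAs, though it is not needed for the statement.
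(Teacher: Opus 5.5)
Your proposal is correct and follows the paper's proof exactly: Lemma~\ref{lem:relatingdeterminismandbisimulation} shows that both automata are minimal elements of the single class $T_\mathcal{L}$, and Theorem~\ref{theorem:minimalbisimilar} then gives the isomorphism. Your remark that the resulting isomorphism is order-preserving is accurate but, as you note, not needed for the statement.
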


\begin{proof}
    By Lemma~\ref{lem:relatingdeterminismandbisimulation}, $ (\mathcal{A}, \le) $ and $ (\mathcal{A}', \le') $ are two minimal elements of $ T_\mathcal{L} $. By Theorem~\ref{theorem:minimalbisimilar}, there exists an isomorphism from $ \mathcal{A} $ to $ \mathcal{A}' $. \qedhere
\end{proof}

Lastly, the minimal Wheeler DFA of a Wheeler language can be obtained by quotienting any Wheeler DFA that recognizes the same language.

\begin{corollary}\label{cor:determinismmaincor}
    Let $ \mathcal{L} \subseteq \Sigma^* $ be a Wheeler language, let $ (\mathcal{A}, \le) $ be a Wheeler DFA such that $ \mathcal{L}(\mathcal{A}) = \mathcal{L} $, and let $ (\mathcal{A}_*, \le_*) $ be the Wheeler automaton built in Lemma~\ref{lem:quotient}. Then, $ (\mathcal{A}_*, \le_*) $ is a minimal element of $ D_\mathcal{L} $.
\end{corollary}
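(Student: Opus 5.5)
We prove Corollary~\ref{cor:determinismmaincor} by combining the deterministic-case results with the quotient construction. There are three steps: show that $ (\mathcal{A}_*, \le_*) $ belongs to $ D_\mathcal{L} $, show that it is a minimal element of $ T_\mathcal{L} $, and then transfer minimality to $ D_\mathcal{L} $.

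\textbf{Step 1: $ (\mathcal{A}_*, \le_*) \in D_\mathcal{L} $.} By Lemma~\ref{lem:quotient}, $ (\mathcal{A}_*, \le_*) $ is a Wheeler NFA. Since $ \mathcal{A} $ is a DFA, point~4 of the same lemma shows that $ \mathcal{A}_* $ is a DFA, so $ (\mathcal{A}_*, \le_*) $ is a Wheeler DFA. It remains to check that $ \mathcal{L}(\mathcal{A}_*) = \mathcal{L} $, and for this I will exhibit a bisimulation between the two automata. The natural candidate is the projection $ \pi = \{(u, [u]_{\equiv_{\mathcal{A}, \le}}) \;|\; u \in Q \} $. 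This is the standard fact that quotienting by an autobisimulation which is an equivalence relation yields a bisimilar automaton:
\begin{itemize}
\item the forth condition follows from the definition of $ E_* $;
\item the back condition follows from the fact that $ \equiv_{\mathcal{A}, \le} $ is a bisimulation (Theorem~\ref{theor:maximumbisimulationproperties});
\item the final-state condition follows because equivalent states agree on membership in $ F $.
\end{itemize}
Bisimilar NFAs recognize the same language, so $ \mathcal{L}(\mathcal{A}_*) = \mathcal{L} $ and $ (\mathcal{A}_*, \le_*) \in D_\mathcal{L} $.

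\textbf{Step 2: $ (\mathcal{A}_*, \le_*) $ is a minimal element of $ T_\mathcal{L} $.} I would like to verify directly that $ \pi $ is a Wheeler bisimulation. Property~2 holds because the image of a convex set under $ \pi $ is convex, by point~1 of Lemma~\ref{lem:quotient}. Property~3 holds because the preimage of a $ \le_* $-convex set is a union of consecutive convex classes, and such a union is convex. This would give $ (\mathcal{A}_*, \le_*) \in T_\mathcal{L} $ directly.

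However, the paper already handles this through Corollary~\ref{cor:quotientisstateminimal}, and I will rely on that instead. Since $ (\mathcal{A}, \le) $ is a Wheeler DFA with $ \mathcal{L}(\mathcal{A}) = \mathcal{L} $, it lies in $ D_\mathcal{L} \subseteq T_\mathcal{L} $. Corollary~\ref{cor:quotientisstateminimal}, applied to the class $ C = T_\mathcal{L} $, then says that $ (\mathcal{A}_*, \le_*) $ is a minimal element of $ T_\mathcal{L} $. Being an element of the class is part of that statement.

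\textbf{Step 3: transfer to $ D_\mathcal{L} $.} By Lemma~\ref{lem:relatingdeterminismandbisimulation}, every minimal element of $ T_\mathcal{L} $ is a minimal element of $ D_\mathcal{L} $, which concludes the proof.

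With Step~3 in hand, Step~1 is not strictly needed: the lemma already forces minimal elements of $ T_\mathcal{L} $ to be Wheeler DFAs in $ D_\mathcal{L} $. Step~1 serves only as a sanity check.

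The one point to verify carefully is the membership claim in Step~2, namely that the quotient lies in the same $ \sim_{\mathfrak{C}} $-class as $ (\mathcal{A}, \le) $. I expect Corollary~\ref{cor:quotientisstateminimal} to cover it. If it does not, I fall back on the direct check of Properties~2 and~3 for $ \pi $ sketched at the start of Step~2.
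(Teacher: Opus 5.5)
Your argument is correct and is the paper's proof: Corollary~\ref{cor:quotientisstateminimal}, applied to $T_\mathcal{L} \supseteq D_\mathcal{L} \ni (\mathcal{A}, \le)$, gives minimality in $T_\mathcal{L}$, and Lemma~\ref{lem:relatingdeterminismandbisimulation} transfers it to $D_\mathcal{L}$; Step~1 is redundant, as you note. Your membership worry is also settled in the paper, since the proof of Corollary~\ref{cor:quotientisstateminimal} invokes Lemma~\ref{lem:bisimulationtoquotient}, which is exactly your fallback check that the projection $\pi$ is a Wheeler bisimulation.
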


\begin{proof}
    By Corollary~\ref{cor:quotientisstateminimal}, $ (\mathcal{A}_*, \le_*) $ is a minimal element of $ T_\mathcal{L} $, so by Lemma~\ref{lem:relatingdeterminismandbisimulation} $ (\mathcal{A}_*, \le_*) $ is a minimal element of $ D_\mathcal{L} $. \qedhere
\end{proof}

\section{Algorithms}\label{sec:algorithms}

Consider a Wheeler NFA $ (\mathcal{A}, \le) $, with $ \mathcal{A} = (Q, E, s, F) $, and assume that the edge labels can be sorted in $ O(|E|) $ time. The main result of this section is that the quotient Wheeler NFA can be built in linear time.

\begin{theorem}\label{theor:lineartimeminimization}
    Let $ (\mathcal{A}, \le) $ be a Wheeler NFA, with $ \mathcal{A} = (Q, E, s, F) $. We can build the Wheeler NFA $ (\mathcal{A}_*, \le_*) $ of Lemma~\ref{lem:quotient} in $ O(|E|) $ time.
\end{theorem}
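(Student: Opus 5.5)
The plan is to compute $\equiv_{\mathcal{A}, \le}$ directly as a set of \emph{cuts}. By Theorem~\ref{theor:maximumbisimulationproperties}, $\equiv_{\mathcal{A}, \le}$ is a $\le$-convex equivalence relation. So if $Q = \{u_1 < \dots < u_n\}$, its classes are intervals, and it is determined by the set of indices $i$ such that $u_i \not\equiv_{\mathcal{A}, \le} u_{i+1}$.

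The first step is a local characterization of reflexive convex Wheeler autobisimulations that are equivalences. For such an $\equiv$, Properties~2 and~3 are automatic: the image of a convex set is the union of the classes meeting it, and this union is convex by Lemma~\ref{lem:convexunion}. So only Property~1 matters. Because $a$-labelled edges do not cross (Axiom~3), and the classes are intervals, the set of classes reached by $a$-edges from a state $u$ is a contiguous interval of classes. It is therefore determined by the class of $u$'s minimum $a$-successor and the class of its maximum $a$-successor.

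This should give the criterion I would prove as the key lemma. A convex partition is stable, i.e.\ is a Wheeler autobisimulation, if and only if, for every pair of consecutive states $u_i, u_{i+1}$ in the same block, three conditions hold: they agree on finality; for each $a$, either both or neither have an $a$-successor; and when both do, their minimum $a$-successors lie in the same block and their maximum $a$-successors lie in the same block. The argument chains along consecutive pairs, so equality of successor classes propagates through the whole block. The maximum Wheeler autobisimulation is then the coarsest convex partition satisfying this criterion, i.e.\ the one with the fewest cuts. It is the least fixpoint of the cut-propagation rule below, starting from the \emph{forced} cuts.

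The algorithm then runs as follows.
\begin{enumerate}
\item Sort the edges by label and, within each label, by target. By Axiom~3 this simultaneously sorts the $a$-edges by source, and Axiom~2 makes the labels monotone along $\le$. This takes $O(|E|)$ time under the stated assumption.
\item Precompute, for each state and label, the minimum and maximum $a$-successors.
\item Insert the initial cuts by a single linear scan. These are the cuts between consecutive states that differ in finality, or that differ in the set of labels of their out-edges. The latter needs care so that the total cost stays $O(|E|)$; I would compare the sorted out-label lists of consecutive states, each in time proportional to their degrees.
\item Propagate with a worklist. When a new cut between $u_j$ and $u_{j+1}$ is created, every violated pair must involve an edge entering $u_j$ or $u_{j+1}$ that is extremal, i.e.\ a minimum or maximum $a$-successor of its source. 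For each such edge $(u, u_j, a)$ or $(u, u_{j+1}, a)$, only the pairs $(u_{k-1}, u_k)$ and $(u_k, u_{k+1})$ with $u = u_k$ can become violated. Each is checked in $O(1)$ time, given a union-find-free block labelling maintained by the cut positions.
\end{enumerate}

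Each cut is created at most once and charges work proportional to the in-degrees of two states, so the total work is $O(|E| + |Q|) = O(|E|)$, using that every state is reachable. Finally, $E_*$, $s_*$, $F_*$ and $\le_*$ are read off by one more scan, with duplicate quotient edges removed using the sorted order.

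The main obstacle is the key lemma and its use in the propagation step. I must show that checking only consecutive pairs, and only min/max successors, is both sound and complete, and that a new cut can only invalidate pairs adjacent to sources of edges entering the two cut states. Completeness relies on the non-crossing property: if $u < u'$ and the $a$-successor sets are noncrossing, a disagreement in successor classes must show up at an extreme successor. Checking a block's class at a successor in $O(1)$ also needs care. I would maintain, for every state, the index of its block's left endpoint lazily, or simply re-evaluate predicates using cut bits between the specific successor positions, which is $O(1)$ since only whether a cut separates two specific states matters.
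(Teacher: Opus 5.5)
Your reduction to cuts, your initial cuts (finality and out-label sets) and the worklist structure all match the paper's Algorithm~\ref{alg:computingmaximum}. Your local stability criterion is also correct, although the contiguity of $a$-successor sets needs Axiom~2 and reachability, not only Axiom~3.

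The gap is in step~4. You claim that every pair newly violated by a cut between $u_j$ and $u_{j+1}$ involves an \emph{extremal} edge entering $u_j$ or $u_{j+1}$. This is false. Whether two states share a block depends on every cut \emph{between} them, not only on cuts adjacent to them. So a cut strictly inside the $a$-successor interval of $u_k$ separates the minimum $a$-successors of $u_k$ and $u_{k+1}$ without touching any extremal successor.

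For instance, let $a \prec b$ and write $p/q$ for a cut between consecutive states $p<q$. Take $s<x_1<x_2<v_1<v_2<v_3<v_4$ with edges $(s,x_1,a)$, $(s,x_2,a)$, $(x_1,v_i,b)$ for $1\le i\le 4$, and $(q,v_4,b)$ for every $q\in\{x_2,v_1,v_2,v_3,v_4\}$, and let $F=\{v_3,v_4\}$. This is a Wheeler NFA, and your initial cuts are $s/x_1$ and $v_2/v_3$. The only extremal edge entering a cut state is $(s,x_1,a)$, and $s$ has no uncut neighbouring pair. The edges $(x_1,v_2,b)$ and $(x_1,v_3,b)$ are not extremal, because the extreme $b$-successors of $x_1$ are $v_1$ and $v_4$. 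So the worklist empties with $x_1$ and $x_2$ in one block. Yet no bisimulation relates them: $(x_1,v_1,b)\in E$ with $v_1\notin F$, while the only $b$-successor of $x_2$ is $v_4\in F$.

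The repair is to drop the extremality filter. Suppose the new cut violates $(u_k,u_{k+1})$ for label $a$. Then $u_j$ and $u_{j+1}$ both lie in the union of the $a$-successor intervals of $u_k$ and $u_{k+1}$, which is contiguous, so some edge from $u_k$ or $u_{k+1}$ enters $u_j$. Hence inspecting the two pairs around the source of \emph{every} edge entering $u_j$ or $u_{j+1}$ is complete. It still costs $O(|E|)$ overall, because each state borders at most two cuts. Your $O(1)$ test must then ask whether the \emph{new} cut separates the relevant successors, not whether \emph{some} cut does; the latter is a range query.

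The paper's rule is sharper. For a cut between $u_{i-1}$ and $u_i$, it unconditionally cuts immediately left of $u_{\ell_i^{\min}}$ and immediately right of $u_{\ell_{i-1}^{\max}}$. That is, it uses the extreme \emph{sources} of the two cut states, rather than the extreme \emph{successors} of their sources. Lemma~\ref{lem:ideabehindalgorithm} shows these cuts are always forced, so each cut costs $O(1)$ and needs no block queries. In the example above, the second rule applied to $v_2/v_3$ produces exactly the missing cut $x_1/x_2$. Completeness then follows as in your plan, by showing that the final partition is a Wheeler autobisimulation (Lemma~\ref{lem:correctnessbisimulation}).
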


In the special case where $ \mathcal{A} $ is a DFA, Theorem~\ref{theor:lineartimeminimization} states that we can compute the minimal Wheeler DFA in linear time (see Section~\ref{sec:minimality}), as already proved in~\cite{alanko2022linear}. However, in the non-deterministic setting, the algorithm and the correctness proof become much more involved. At the same time, our algorithm only requires a queue to achieve linear time, while the algorithm in~\cite{alanko2022linear} relies on auxiliary data structures.

Let us discuss the main ideas required to prove Theorem~\ref{theor:lineartimeminimization}. Fix a Wheeler NFA $ (\mathcal{A}, \le) $, with $ \mathcal{A} = (Q, E, s, F) $, and write $ Q = \{u_1, u_2, \dots, u_{|Q|} \} $, where $ u_1 < u_2 < \dots < u_{|Q|} $ (in particular, $ s = u_1 $ by Axiom~1). Recall that the set of all states of the quotient Wheeler NFA is the set of all $ \equiv_{\mathcal{A}, \le} $-equivalence classes (see Lemma~\ref{lem:quotient}). To prove Theorem~\ref{theor:lineartimeminimization}, the most difficult part is to efficiently compute a suitable representation of $ \equiv_{\mathcal{A}, \le} $.

A \emph{bit array} is an array whose entries are equal to zero or one. Let $ B_{\mathcal{A}, \le} [2, |Q|] $ be the bit array such that for every $ 2 \le i \le |Q| $ we have $ B_{\mathcal{A}, \le}[i] = 1 $ if and only if $ u_{i - 1} \not \equiv_{\mathcal{A}, \le} u_i $. The idea is that $ B_{\mathcal{A}, \le}[2, |Q|] $ stores all the information required to compute $ \equiv_{\mathcal{A}, \le} $ because $ \equiv_{\mathcal{A}, \le} $ is $ \le $-convex by Theorem~\ref{theor:maximumbisimulationproperties}. More precisely, we have the following result.

\begin{lemma}\label{lem:B_Aisallweneed}
    Let $ (\mathcal{A}, \le) $ be a Wheeler NFA, with $ \mathcal{A} = (Q, E, s, F) $. Then, for every $ 1 \le i, i' \le |Q| $, we have $ u_i \equiv_{\mathcal{A}, \le} u_{i'} $ if and only if for every $ \min \{i, i' \} + 1 \le k \le \max \{i, i' \} $ we have $ B_{\mathcal{A}, \le}[k] = 0 $.
\end{lemma}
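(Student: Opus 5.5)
Since $\equiv_{\mathcal{A}, \le}$ is an equivalence relation and is $\le$-convex (Theorem~\ref{theor:maximumbisimulationproperties}), the lemma follows from a two-directional argument. By symmetry of $\equiv_{\mathcal{A}, \le}$ we may assume $i \le i'$. If $i = i'$, the range of $k$ is empty and the claim holds by reflexivity. So we take $i < i'$, and the condition to verify reads $B_{\mathcal{A}, \le}[k] = 0$ for all $i + 1 \le k \le i'$, i.e.\ $u_{k-1} \equiv_{\mathcal{A}, \le} u_k$ for all such $k$.

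For the \emph{if} direction we use transitivity. Assume $B_{\mathcal{A}, \le}[k] = 0$ for every $i+1 \le k \le i'$. Then by definition of $B_{\mathcal{A}, \le}$ we have $u_i \equiv_{\mathcal{A}, \le} u_{i+1} \equiv_{\mathcal{A}, \le} \dots \equiv_{\mathcal{A}, \le} u_{i'}$. Chaining these with transitivity gives $u_i \equiv_{\mathcal{A}, \le} u_{i'}$.

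For the \emph{only if} direction we use convexity. Assume $u_i \equiv_{\mathcal{A}, \le} u_{i'}$, and let $C = [u_i]_{\equiv_{\mathcal{A}, \le}}$; this class contains both $u_i$ and $u_{i'}$. By Theorem~\ref{theor:maximumbisimulationproperties}, $C$ is $\le$-convex. Since $u_i \le u_j \le u_{i'}$ for every $i \le j \le i'$, each such $u_j$ lies in $C$. Now fix $k$ with $i+1 \le k \le i'$. Both $u_{k-1}$ and $u_k$ are in $C$, so $u_{k-1} \equiv_{\mathcal{A}, \le} u_k$, which means $B_{\mathcal{A}, \le}[k] = 0$.

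The argument is routine; its only substantive step is the appeal to $\le$-convexity of $\equiv_{\mathcal{A}, \le}$, which is supplied by Theorem~\ref{theor:maximumbisimulationproperties}.
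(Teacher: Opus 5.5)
Your proof is correct and follows essentially the same route as the paper: transitivity along the chain of consecutive zero entries gives one direction, and $\le$-convexity of $\equiv_{\mathcal{A}, \le}$ (Theorem~\ref{theor:maximumbisimulationproperties}) gives the other. The only cosmetic difference is that you reduce to $i \le i'$ by symmetry at the start, whereas the paper carries $\min\{i, i'\}$ and $\max\{i, i'\}$ throughout.
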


\begin{proof}
    Recall that $ \equiv_{\mathcal{A}, \le} $ is a $ \le $-convex equivalence relation on $ Q $ by Theorem~\ref{theor:maximumbisimulationproperties}. Fix $ 1 \le i, i' \le |Q| $. We must prove that $ u_i \equiv_{\mathcal{A}, \le} u_{i'} $ if and only if for every $ \min \{i, i' \} + 1 \le k \le \max \{i, i' \} $ we have $ B_{\mathcal{A}, \le}[k] = 0 $.
    
    $ (\Leftarrow) $ For every $ \min \{i, i' \} + 1 \le k \le \max \{i, i' \} $, we have $ B_{\mathcal{A}, \le}[k] = 0 $, so $ u_{k - 1} \equiv_{\mathcal{A}, \le} u_k $. Since $ \equiv_{\mathcal{A}, \le} $ is transitive, we have $ u_{\min \{i, i' \}} \equiv_{\mathcal{A}, \le} u_{\max \{i, i' \}} $. We know that $ \equiv_{\mathcal{A}, \le} $ is symmetric, so $ u_i \equiv_{\mathcal{A}, \le} u_{i'} $.

    $ (\Rightarrow) $ Since $ u_i \equiv_{\mathcal{A}, \le} u_{i'} $ and $ \equiv_{\mathcal{A}, \le} $ is symmetric, we obtain $ u_{\min \{i, i' \}} \equiv_{\mathcal{A}, \le} u_{\max \{i, i' \}} $. For every $ \min \{i, i' \} + 1 \le k \le \max \{i, i' \} $, we have $ u_{\min \{i, i' \}} \equiv_{\mathcal{A}, \le} u_k $ because $ \min \{i, i' \} + 1 \le k \le \max \{i, i' \} $ and $ \equiv_{\mathcal{A}, \le} $ is $ \le $-convex by Theorem~\ref{theor:maximumbisimulationproperties}. We know that $ \equiv_{\mathcal{A}, \le} $ is an equivalence relation on $ Q $, so $ u_{k - 1} \equiv_{\mathcal{A}, \le} u_k $ for every $ \min \{i, i' \} + 1 \le k \le \max \{i, i' \} $, and we conclude $ B_{\mathcal{A}, \le}[k] = 0 $ for every $ \min \{i, i' \} + 1 \le k \le \max \{i, i' \} $. \qedhere
\end{proof}

Lemma~\ref{lem:B_Aisallweneed} suggests that computing $ B_{\mathcal{A}, \le} [2, |Q|] $ is the key step required to prove Theorem~\ref{theor:lineartimeminimization}. To this end, let us define $ a_i^{\min} $, $ \ell_i^{\min} $, $ a_i^{\max} $, $ \ell_i^{\max} $ and $ Out(i) $ for every $ 1 \le i \le |Q| $. 
\begin{itemize}
    \item For $ 2 \le i \le |Q| $, let $ a_i^{\min} $ be the smallest character labeling some edge reaching $ u_i $. In other words, let $ a_i^{\min} $ be the smallest $ c \in \Sigma $ (with respect to the total order $ \preceq $) such that $ (u_j, u_i, c) \in E $ for some $ 1 \le j \le |Q| $. Note that $ a_i^{\min} $ is well defined because we know that $ u_i $ must have at least one incoming edge (because $ u_i $ is reachable from the initial state $ s $). If $ u_1 $ has at least one incoming edge, define $ a_1^{\min} $ in the same way; otherwise, let $ a_1^{\min} = \bot $.
    \item For $ 2 \le i \le |Q| $, let $ \ell_i^{\min} $ be the smallest $ 1 \le j \le |Q| $ such that $ (u_j, u_i, a_i^{\min}) \in E $. If $ a_1^{\min} \not = \bot $, define $ \ell_1^{\min} $ in the same way; if $ a_1^{\min} = \bot $, define $ \ell_1^{\min} = \bot $.
    \item For $ 1 \le i \le |Q| $, define $ a_i^{\max} $ and $ \ell_i^{\max} $ in the same way as $ a_i^{\min} $and $ \ell_i^{\min} $, but replacing ``smallest'' with ``largest''.
    \item For $ 1 \le i \le |Q| $, let $ Out(i) $ be the set of all characters labeling some edge leaving $ u_i $. In other words, let $ Out(i) $ be the set of all $ a \in \Sigma $ for which there exists $ 1 \le j \le |Q| $ such that $ (u_i, u_j, a) \in E $.
\end{itemize}

Let us discuss some situations in which we have $ B_{\mathcal{A}, \le}[i] = 1 $.

\begin{figure}
     \centering
           \begin{subfigure}[b]{0.3\textwidth}
        \centering
        \scalebox{.7}{
        \begin{tikzpicture}[->,>=stealth', semithick, auto, scale=1]
\node[state, draw=none] (2)    at (1,0)	{$ u_{i} $};
\node[state, draw=none] (4)    at (4, 0)	{$ u_{i'} $};
\node[state, draw=none] (1')    at (-2, -3)	{$ u_{\ell_i - 1}^{\min} $};
\node[state, draw=none] (2')    at (0,-3)	{$ u_{\ell_i}^{\min} $};

\draw (2') edge [bend left] node [] {$ a_i^{\min} $} (2);
\draw (1') edge [] node [] {$ a_i^{\min} $} (4);

\end{tikzpicture}
}
\end{subfigure}
     \begin{subfigure}[b]{0.3\textwidth}
        \centering
        \scalebox{.7}{
        \begin{tikzpicture}[->,>=stealth', semithick, auto, scale=1]
\node[state, draw=none] (2)    at (1,0)	{$ u_i $};
\node[state, draw=none] (1')    at (-2, -3)	{$ u_{\ell_i - 1}^{\min} $};
\node[state, draw=none] (2')    at (0,-3)	{$ u_{\ell_i}^{\min} $};

\draw (2') edge [bend right] node [] {$ a_i^{\min} $} (2);
\draw (1') edge [] node [] {$ a_i^{\min} $} (2);

\end{tikzpicture}
}
\end{subfigure}
     \begin{subfigure}[b]{0.3\textwidth}
        \centering
        \scalebox{.7}{
        \begin{tikzpicture}[->,>=stealth', semithick, auto, scale=1]
\node[state, draw=none] (2)    at (1,0)	{$ u_{i'} $};
\node[state, draw=none] (4)    at (4, 0)	{$ u_i $};
\node[state, draw=none] (1')    at (-2, -3)	{$ u_{\ell_i - 1}^{\min} $};
\node[state, draw=none] (2')    at (0,-3)	{$ u_{\ell_i}^{\min} $};

\draw (2') edge [] node [] {$ a_i^{\min} $} (4);
\draw (1') edge [] node [] {$ a_i^{\min} $} (2);

\end{tikzpicture}
}
\end{subfigure}  
 	\caption{The three cases considered in the proof of Lemma~\ref{lem:ideabehindalgorithm}. \emph{Left:} the case $ i < i' $ would imply that equally labeled edges cross (see Figure~\ref{fig:examplewheeler}). \emph{Center:} the case $ i' = i $ would contradict the minimality of $ \ell_i^{\min} $. \emph{Right:} the case $ i' < i $ would imply $ B_{\mathcal{A}, \le}[i] \not = 1 $ because $ \equiv_{\mathcal{A}, \le} $ is $ \le $-convex.}\label{fig:notcrossingalgorithms}
\end{figure}

\begin{lemma}\label{lem:ideabehindalgorithm}
    Let $ (\mathcal{A}, \le) $ be a Wheeler NFA, with $ \mathcal{A} = (Q, E, s, F) $. Fix $ 2 \le i \le |Q| $ (in particular, $ \ell_i^{\min} \not = \bot $).
    \begin{enumerate}
        \item If $ Out(i - 1)  \not = Out(i) $, then $ B_{\mathcal{A}, \le}[i] = 1 $.
        \item If $ (u_{i - 1} \in F \land u_i \not \in F) \lor (u_{i - 1} \not \in F \land u_i  \in F) $, then $ B_{\mathcal{A}, \le}[i] = 1 $.
        \item Assume that $ B_{\mathcal{A}, \le}[i] = 1 $. If $ \ell_i^{\min} \ge 2 $, then $ B_{\mathcal{A}, \le}[\ell_i^{\min}] = 1 $, and if $ (\ell_{i - 1}^{\max} \not = \bot) \land (\ell_{i - 1}^{\max} \le |Q| - 1) $, then $ B_{\mathcal{A}, \le}[\ell_{i - 1}^{\max} + 1] = 1 $.
    \end{enumerate}
\end{lemma}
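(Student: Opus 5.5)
Throughout, write $\equiv$ for $\equiv_{\mathcal{A}, \le}$. All three items are proved by contraposition: assume $B_{\mathcal{A}, \le}[i] = 0$ (respectively $B_{\mathcal{A}, \le}[\ell_i^{\min}] = 0$), that is, $u_{i-1} \equiv u_i$, and use that $\equiv$ is a Wheeler autobisimulation and a $\le$-convex equivalence relation (Theorem~\ref{theor:maximumbisimulationproperties}).

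\emph{Items 1 and 2.} Assume $u_{i-1} \equiv u_i$. For Item~2, the last clause of Definition~\ref{def:bisimulation} (Property~1) gives $u_{i-1} \in F$ if and only if $u_i \in F$. For Item~1, take $a \in Out(i-1)$, so $(u_{i-1}, v, a) \in E$ for some $v$. The first clause of Definition~\ref{def:bisimulation} gives $v'$ with $(u_i, v', a) \in E$, so $a \in Out(i)$. By symmetry of $\equiv$ the reverse inclusion also holds, hence $Out(i-1) = Out(i)$.

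\emph{Item 3, first claim.} Assume $B_{\mathcal{A}, \le}[i] = 1$ and $\ell := \ell_i^{\min} \ge 2$, and suppose toward a contradiction that $u_{\ell-1} \equiv u_\ell$.
\begin{enumerate}
\item From $(u_\ell, u_i, a_i^{\min}) \in E$ and the bisimulation property applied to $u_\ell \equiv u_{\ell-1}$, obtain $u_{i'}$ with $(u_{\ell-1}, u_{i'}, a_i^{\min}) \in E$ and $u_{i'} \equiv u_i$.
\item If $i < i'$: the edges $(u_\ell, u_i, a_i^{\min})$ and $(u_{\ell-1}, u_{i'}, a_i^{\min})$ with $u_i < u_{i'}$ give $u_\ell \le u_{\ell-1}$ by Axiom~3, which is impossible.
\item If $i' = i$: the edge $(u_{\ell-1}, u_i, a_i^{\min})$ contradicts the minimality of $\ell_i^{\min}$.
\item If $i' < i$: then $i' \le i-1 < i$, and $u_{i'} \equiv u_i$ together with $\le$-convexity of $[u_i]_\equiv$ gives $u_{i-1} \equiv u_i$, contradicting $B_{\mathcal{A}, \le}[i] = 1$.
\end{enumerate}
Every case is contradictory, so $B_{\mathcal{A}, \le}[\ell_i^{\min}] = 1$.

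\emph{Item 3, second claim.} This is the mirror image. Set $\ell := \ell_{i-1}^{\max} \le |Q|-1$ and suppose $u_\ell \equiv u_{\ell+1}$. From $(u_\ell, u_{i-1}, a_{i-1}^{\max}) \in E$, obtain $u_{i'}$ with $(u_{\ell+1}, u_{i'}, a_{i-1}^{\max}) \in E$ and $u_{i'} \equiv u_{i-1}$.
\begin{enumerate}
\item If $i' < i-1$: Axiom~3 gives $u_{\ell+1} \le u_\ell$, which is impossible.
\item If $i' = i-1$: the edge contradicts the maximality of $\ell_{i-1}^{\max}$.
\item If $i' > i-1$: then $i' \ge i$, and convexity yields $u_{i-1} \equiv u_i$, a contradiction.
\end{enumerate}

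The main point requiring care is the last case of each claim. It relies on the convexity of the equivalence classes established in Theorem~\ref{theor:maximumbisimulationproperties}, which is exactly where Wheeler bisimilarity differs from ordinary bisimilarity. The Axiom~3 case is where the ``no crossing'' picture of Figure~\ref{fig:notcrossingalgorithms} is used.
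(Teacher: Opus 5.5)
Your proof is correct and takes essentially the same approach as the paper. Items~1 and~2 follow from the bisimulation clauses of $\equiv_{\mathcal{A}, \le}$ (you spell out the $Out$ inclusion a bit more explicitly). Item~3 uses the same three-way case split on $i'$, ruled out respectively by Axiom~3, by the minimality (resp.\ maximality) of $\ell$, and by the $\le$-convexity of $\equiv_{\mathcal{A}, \le}$ from Theorem~\ref{theor:maximumbisimulationproperties}.
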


\begin{proof}
    \emph{(Sketch)} The first two claims follow immediately because $ \equiv_{\mathcal{A}, \le} $ is a bisimulation from $ \mathcal{A} $ to $ \mathcal{A} $. We only focus on the first part of the third claim.

    Assume that $ B_{\mathcal{A}, \le}[i] = 1 $ and $ \ell_i^{\min} \ge 2 $. We know that $ \ell_i^{\min} \not = \bot $ (which implies $ a_i^{\min} \not = \bot $), and $ B_{\mathcal{A}, \le}[\ell_i^{\min}]  $ is well defined because $ \ell_i^{\min} \ge 2 $. Suppose for the sake of a contradiction that $ B_{\mathcal{A}, \le}[\ell_i^{\min}] \not = 1 $. This means that $ u_{\ell_i^{\min} - 1} \equiv_{\mathcal{A}, \le} u_{\ell_i^{\min}} $. Since $ \equiv_{\mathcal{A}, \le} $ is a bisimulation from $ \mathcal{A} $ to $ \mathcal{A} $ and $ (u_{\ell_i^{\min}}, u_i, a_i^{\min}) \in E $, there exists $ 1 \le i' \le |Q| $ such that $ u_{i'} \equiv_{\mathcal{A}, \le} u_i $ and $ (u_{\ell_i^{\min} - 1}, u_{i'}, a_i^{\min}) \in E $. We distinguish three cases, and we show that of all them lead to a contradiction (see Figure~\ref{fig:notcrossingalgorithms}).
    \begin{itemize}
        \item Case 1: $ i < i' $. From $ (u_{\ell_i^{\min}}, u_i, a_i^{\min}), (u_{\ell_i^{\min} - 1}, u_{i'}, a_i^{\min}) \in E $ and $ i < i' $ we conclude $ \ell_i^{\min} \le \ell_i^{\min} - 1 $ by Axiom~3, which is a contradiction.
        \item Case 2: $ i' = i $. We have $ (u_{\ell_i^{\min} - 1}, u_i, a_i^{\min}) \in E $, so by the minimality of $ \ell_i^{\min} $ we conclude $ \ell_i^{\min} \le \ell_i^{\min} - 1 $, which is a contradiction.
        \item Case 3: $ i' < i $. We know that $ \equiv_{\mathcal{A}, \le} $ is $ \le $-convex (by Theorem~\ref{theor:maximumbisimulationproperties}), $ i' \le i - 1 < i $ and $ u_{i'} \equiv_{\mathcal{A}, \le} u_i $, so we obtain that $ u_{i - 1} \equiv_{\mathcal{A}, \le} u_i $ and we conclude that $ B_{\mathcal{A}, \le}[i] \not = 1 $, which is a contradiction. \qedhere
    \end{itemize} 
\end{proof}

We are now ready to compute $ B_{\mathcal{A}, \le} [2, |Q|] $. Informally, we use Algorithm~\ref{alg:computingmaximum} to implement the properties stated in Lemma~\ref{lem:ideabehindalgorithm}. At the beginning of the algorithm, we initialize $ B[2, |Q|] $ to zero, and at the end of the algorithm, we will have $ B[2, |Q|] = B_{\mathcal{A}, \le}[2, |Q|] $. We compute the bit array $ Z[2, |Q|] $ such that, for every $ 2 \le i \le |Q| $, we have $ Z[i] = 1 $ if and only if $ Out(i - 1)  \not = Out(i) $. Then, Lines~\ref{line:mainstart}-\ref{line:mainintermidate1} implement the first and the second properties of Lemma~\ref{lem:ideabehindalgorithm}, and Lines~\ref{line:mainintermediate2}-\ref{line:mainendbefore} propagate the third property of Lemma~\ref{lem:ideabehindalgorithm}. 

\begin{algorithm}
\caption{The input is a Wheeler NFA $ (\mathcal{A}, \le) $, with $ \mathcal{A} = (Q, E, s, F) $. The output is $ B_{\mathcal{A}, \le}[2, |Q|] $. The algorithm uses a queue $ K $ and two arrays $ Z[2, |Q|] $ and $ B[2, |Q|] $.}\label{alg:computingmaximum}
{
\begin{algorithmic}[1]
    \small
    \For{$ i \gets 1, 2, \dots, |Q| $}\Comment{Initialization}\label{line:initializationstart}
        \State Compute $ \ell_i^{\min} $ and $ \ell_i^{\max} $.
    \EndFor
    \State compute $ Z[2, |Q|] $ \Comment{$ Z[i] = 1 $ if $ Out(i - 1)  \not = Out(i) $ and $ Z[i] = 0 $ if $ Out(i - 1)  = Out(i) $}
\State initialize $ B[2, |Q|] $ to zero
\State initialize an empty queue $ K $ \label{line:initializationend}
    \\

    \For{$ i \gets 2, 3, \dots, |Q| $} \Comment{Main part}\label{line:mainstart}
        \If{$ (u_{i - 1} \in F \land u_i \not \in F) \lor (u_{i - 1} \not \in F \land u_i  \in F)  \lor Z[i] = 1 $}
            \State $ Enqueue(K, i) $ \label{line:next1}
            \State $ B[i] \gets 1 $ \label{line:firstB1}
        \EndIf
    \EndFor\label{line:mainintermidate1}

    \While{$ K $ is nonempty}\label{line:mainintermediate2}
        \State $ i \gets Dequeue (K) $ \label{line:dequeue}
        \If{$ (\ell_i^{\min} \ge 2) \land (B[\ell_i^{\min}] = 0) $}\label{line:recursivecheck}
            \State $ Enqueue(K, \ell_i^{\min}) $ \label{line:next2}
            \State $ B[\ell_i^{\min}] \gets 1 $ \label{line:firstB2}
        \EndIf
        \If{$ (\ell^{\max}_{i - 1} \not = \bot) \land (\ell^{\max}_{i - 1} \le |Q| - 1) \land (B[\ell^{\max}_{i - 1} + 1] = 0) $}\label{line:recursivecheckbis}
            \State $ Enqueue(K, \ell^{\max}_{i - 1} + 1) $ \label{line:next2bis}
            \State $ B[\ell^{\max}_{i - 1} + 1] \gets 1 $ \label{line:firstB2bis}
        \EndIf
    \EndWhile\label{line:mainendbefore}

    \State \Return $ B[2, |Q|] $\label{line:mainend}
    
    \end{algorithmic}
}
\end{algorithm}

Let us show that Algorithm~\ref{alg:computingmaximum} correctly computes $ B_{\mathcal{A}, \le}[2, |Q|] $.

\begin{lemma}\label{lem:algcorrectness}
    Let $ (\mathcal{A}, \le) $ be a Wheeler NFA, with $ \mathcal{A} = (Q, E, s, F) $. On input $ (\mathcal{A}, \le) $, Algorithm~\ref{alg:computingmaximum} returns the array $ B_{\mathcal{A}, \le}[2, |Q|] $. 
\end{lemma}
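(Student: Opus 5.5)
Let $B$ denote the array returned by Algorithm~\ref{alg:computingmaximum}. I will show $B[i] = 1$ if and only if $B_{\mathcal{A}, \le}[i] = 1$ for every $2 \le i \le |Q|$, proving the two inclusions separately.

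\textbf{Soundness ($B[i]=1 \Rightarrow B_{\mathcal{A},\le}[i]=1$).} Every position set to one is also enqueued, so I argue by induction on the order in which positions are set. Positions set in Lines~\ref{line:mainstart}--\ref{line:mainintermidate1} satisfy $B_{\mathcal{A},\le}[i]=1$ by the first two claims of Lemma~\ref{lem:ideabehindalgorithm}. Positions set in Lines~\ref{line:firstB2} and~\ref{line:firstB2bis} are of the form $\ell_i^{\min}$ or $\ell_{i-1}^{\max}+1$ for a dequeued $i$. By induction $B_{\mathcal{A},\le}[i]=1$, so the third claim of Lemma~\ref{lem:ideabehindalgorithm} gives the conclusion.

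\textbf{Completeness ($B_{\mathcal{A},\le}[i]=1 \Rightarrow B[i]=1$).} Let $\sim$ be the $\le$-convex equivalence relation whose classes are the maximal intervals $u_j,\dots,u_k$ with $B[j+1]=\dots=B[k]=0$, that is, the relation determined by $B$ as in Lemma~\ref{lem:B_Aisallweneed}. The aim is to show that $\sim$ is a Wheeler autobisimulation. Maximality in Theorem~\ref{theor:maximumbisimulationproperties} then gives ${\sim} \subseteq {\equiv_{\mathcal{A},\le}}$, which is exactly the completeness direction.

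\emph{Easy parts.} Reflexivity and $s \sim s$ are immediate. Agreement on final states holds because adjacent states in a class agree on $F$ (Line~\ref{line:mainstart} loop), and this propagates along the interval. Properties~2 and~3 hold because the classes of $\sim$ are convex. The image of a convex set $C$ is therefore the union of the classes meeting $C$, and these classes form a contiguous run, hence a convex set. Since $\sim$ is symmetric, Property~3 reduces to Property~2.

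\emph{Main obstacle: the bisimulation transfer condition.} By symmetry and transitivity along the interval, it suffices to treat adjacent states: if $u_{i-1} \sim u_i$ (so $B[i]=0$) and $(u_i, u_t, a) \in E$, I must find $u_{t'} \sim u_t$ with $(u_{i-1}, u_{t'}, a) \in E$, together with the symmetric statement. Since $Z[i]=0$, the state $u_{i-1}$ has some $a$-edge, say to $u_{t'}$.

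Suppose for contradiction that no $a$-successor of $u_{i-1}$ is $\sim$-equivalent to $u_t$. The $a$-successors of $u_{i-1}$ and of $u_i$ are arranged non-crossingly by Axiom~3, so there is a class boundary $B[k]=1$ lying between an $a$-successor of $u_{i-1}$ and $u_t$. I will choose $k$ adjacent to the relevant edges, so that the target $u_k$ or $u_{k-1}$ is reached by an $a$-edge from $u_{i-1}$ or $u_i$.

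The key fact is that all $a$-edges entering a state $u_k$ have the same label. By Axiom~2, a state with a predecessor of a strictly intermediate label would contradict the edges reaching $u_{k-1}$ and $u_k$. Hence $a_k^{\min} = a$ or $a_{k-1}^{\max} = a$. Since $k$ was dequeued, the algorithm set $B[\ell_k^{\min}]=1$ and $B[\ell_{k-1}^{\max}+1]=1$. Axiom~3 forces $\ell_k^{\min} \le i-1$ and $\ell_{k-1}^{\max} \ge i$, or the mirror inequalities. Thus the algorithm placed a one in some position $j$ with $\min \le j \le i$, within the segment $i-1, i$. This should yield $B[i]=1$, a contradiction.

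Making this case analysis precise is the hardest step. The delicate points are the choice of $k$ and the use of $\ell^{\min}$ for the left side versus $\ell^{\max}$ for the right side, which is what makes the algorithm asymmetric.
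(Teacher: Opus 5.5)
Your soundness argument and your overall plan for completeness match the paper: read off $\sim$ from the final $B$, show it is a Wheeler autobisimulation, and conclude by maximality. Reducing the transfer condition to adjacent pairs $u_{i-1}\sim u_i$ is also legitimate, because $\sim$ is an equivalence relation. The gap is that the transfer condition is the whole substance of the completeness direction, and you do not prove it. You leave the final step as ``should yield $B[i]=1$'', and the inequalities you propose are the wrong ones.

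If $t'<k\le t$ and $B[k]=1$, the two propagation rules put ones at positions $\ell_k^{\min}$ and $\ell_{k-1}^{\max}+1$. To contradict $B[i]=0$ you therefore need exactly $\ell_k^{\min}=i$ or $\ell_{k-1}^{\max}=i-1$. A one at position $i-1$ is useless, since $B[i-1]$ concerns the pair $u_{i-2},u_{i-1}$. Your pair $\ell_k^{\min}\le i-1$, $\ell_{k-1}^{\max}\ge i$ is exactly the configuration in which neither rule reaches position $i$, and it cannot even occur. Once $a_{k-1}^{\max}=a_k^{\min}=a$, Axiom~3 applied to the edges entering $u_{k-1}$ and $u_k$ gives $\ell_{k-1}^{\max}\le\ell_k^{\min}$. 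The mirror inequalities do not help either, since they fix neither index.

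Similarly, you only claim $a_k^{\min}=a$ \emph{or} $a_{k-1}^{\max}=a$, but both hold and both are needed. For $t'<k<t$, every edge entering $u_k$ is labeled $a$ by Axiom~2 applied on both sides. At the endpoints ($k=t$, resp.\ $k-1=t'$), Axiom~2 on one side plus minimality (resp.\ maximality) of the label suffices. No special choice of $k$ is needed.

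The missing case analysis is short.
\begin{itemize}
\item Axiom~3 against $(u_{i-1},u_{t'},a)$ gives $\ell_k^{\min}\ge i-1$.
\item Axiom~3 against $(u_i,u_t,a)$, or minimality of $\ell_t^{\min}$ when $k=t$, gives $\ell_k^{\min}\le i$.
\item If $\ell_k^{\min}=i$, dequeuing $k$ sets $B[i]\gets 1$.
\item Otherwise $\ell_k^{\min}=i-1$. The lower bound $\ell_{k-1}^{\max}\ge i-1$ comes from Axiom~3 against $(u_{i-1},u_{t'},a)$, or from maximality when $k-1=t'$. Together with $\ell_{k-1}^{\max}\le\ell_k^{\min}=i-1$ this forces $\ell_{k-1}^{\max}=i-1$, so the second rule sets $B[\ell_{k-1}^{\max}+1]=B[i]\gets 1$.
\end{itemize}
This is the paper's argument. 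The paper runs it directly for arbitrary $u_i\sim u_{i'}$ with $i\neq i'$ and an arbitrary $a$-successor $u_{j'}$ of $u_{i'}$.

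Alternatively, your instinct about extremal choices works if you make the extremal choice on the witness rather than on $k$. In the first direction, take $u_{t'}$ to be the largest $a$-successor of $u_{i-1}$. Then every $u_k$ with $t'<k\le t$ has $a_k^{\min}=a$ and $\ell_k^{\min}=i$:
\begin{itemize}
\item $\ell_k^{\min}\ge i$ by Axiom~3 and the maximality of $t'$;
\item $\ell_k^{\min}\le i$ by Axiom~3, or for $k=t$ because $(u_i,u_t,a)\in E$.
\end{itemize}
So the first rule alone gives the contradiction. In the converse direction, take $u_t$ to be the smallest $a$-successor of $u_i$. This gives $\ell_{k-1}^{\max}=i-1$ for every $t'\le k-1<t$, so the second rule alone suffices. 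This is where the $\ell^{\min}$/$\ell^{\max}$ asymmetry of the algorithm really enters.
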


\begin{proof}
    \emph{(Sketch)} Fix $ 2 \le i \le |Q| $. We must prove that at the end of the algorithm we have $ B[i] = B_{\mathcal{A}, \le}[i] $. If $ B[i] = 1 $, then we intuitively obtain $ B_{\mathcal{A}, \le}[i] = 1 $ by Lemma~\ref{lem:ideabehindalgorithm}. The difficult part is to show that, if $ B[i] = 0 $, then $ B_{\mathcal{A}, \le}[i] = 0 $. In other words, we need to show that the properties stated in Lemma~\ref{lem:ideabehindalgorithm} are sufficient to detect whether $ B[i] = 1 $. Intuitively, we proceed as follows. Let $ \sim $ be the relation from $ Q $ to $ Q $ such that for every $ 1 \le i, i' \le |Q| $ we have $ u_i \sim u_{i'} $ if and only if for every $ \min \{i, i' \} + 1 \le k \le \max \{i, i' \} $ we have $ B[k] = 0 $ at the end of the algorithm. In particular, by the definition of $ \sim $, we have $ u_{i - 1} \sim u_i $ for every $ 2 \le i \le |Q| $ such that $ B[i] = 0 $ at the end of the algorithm. We only need to prove $ \sim $ is a Wheeler autobisimulation on $ (\mathcal{A}, \le) $, because then by the maximality of $ \equiv_{\mathcal{A}, \le} $ (see Theorem~\ref{theor:maximumbisimulationproperties}) we have $ u_{i - 1} \equiv_{\mathcal{A}, \le} u_i $ for every $ 2 \le i \le |Q| $ such that $ B[i] = 0 $ at the end of the algorithm, and so $ B_{\mathcal{A}, \le}[i] = 0 $ for every $ 2 \le i \le |Q| $ such that $ B[i] = 0 $ at the end of the algorithm. We prove that $ \sim $ is a Wheeler autobisimulation on $ (\mathcal{A}, \le) $ in the appendix. \qedhere
\end{proof}

Let discuss the running time of Algorithm~\ref{alg:computingmaximum} (Lemma~\ref{lem:algrunning}). We start with a remark.

\begin{remark}\label{rem:complexitystates}
Let $ (\mathcal{A}, \le) $ be a Wheeler NFA, with $ \mathcal{A} = (Q, E, s, F) $. Recall that every $ u \in Q $ is reachable in $ \mathcal{A} $, so every $ u \in Q \setminus \{s \} $ must have at least one incoming edge, which implies $ |E| \ge |Q| - 1 $. In particular, if an algorithm has time complexity $ O(|Q| + |E|) $, we can simply write that it has time complexity $ O(|E|) $.
\end{remark}

\begin{lemma}\label{lem:algrunning}
    Let $ (\mathcal{A}, \le) $ be a Wheeler NFA, with $ \mathcal{A} = (Q, E, s, F) $. Algorithm~\ref{alg:computingmaximum} can be implemented to run in $ O(|E|) $ time on input $ (\mathcal{A}, \le) $.
\end{lemma}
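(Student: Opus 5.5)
We bound separately the initialization (Lines~\ref{line:initializationstart}--\ref{line:initializationend}), the first loop (Lines~\ref{line:mainstart}--\ref{line:mainintermidate1}) and the queue-processing loop (Lines~\ref{line:mainintermediate2}--\ref{line:mainendbefore}), and use Remark~\ref{rem:complexitystates} to absorb every $ O(|Q|) $ term into $ O(|E|) $. We assume the states are given as $ u_1, \dots, u_{|Q|} $ in Wheeler order, so each state is identified with its index.

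\textbf{Computing $ \ell_i^{\min} $ and $ \ell_i^{\max} $.} For each $ i $ we scan the incoming edges of $ u_i $ twice. In the first pass we find $ a_i^{\min} $ (resp.\ $ a_i^{\max} $) by taking the minimum (resp.\ maximum) label, which requires only comparisons between labels. In the second pass we take the minimum (resp.\ maximum) source index among the edges carrying that label. Both passes cost $ O(\text{in-degree}(u_i) + 1) $, so the total is $ O(|E| + |Q|) = O(|E|) $. States with no incoming edge get $ \bot $; by Axiom~1 and reachability, only $ u_1 $ can be such a state.

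\textbf{Computing $ Z $.} This is the only step where sorting is needed. Recall the standing assumption that edge labels can be sorted in $ O(|E|) $ time. We sort all edges by the pair (label, source) with a stable radix-style sort: first a bucket pass by source, which costs $ O(|E| + |Q|) $, then a stable pass by label. This gives, for every $ i $, the list of labels on the out-edges of $ u_i $ in sorted order. Removing duplicates yields $ Out(i) $ as a sorted list, with $ \sum_i |Out(i)| \le |E| $. Testing $ Out(i-1) = Out(i) $ is then a merge-like comparison of two sorted lists, costing $ O(|Out(i-1)| + |Out(i)| + 1) $. Summed over $ i $, each list is involved in at most two comparisons, so the total is $ O(|E| + |Q|) $. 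Label equality is checked directly; no hashing is needed.

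\textbf{Loops.} The first loop is $ O(|Q|) $. For the while loop, the key invariant is that an index $ j $ is enqueued only at the moment $ B[j] $ is switched from $0$ to $1$. This holds in all three places where enqueuing occurs: Line~\ref{line:next1} fires at most once per index, and Lines~\ref{line:next2} and~\ref{line:next2bis} are guarded by the test $ B[\cdot] = 0 $. Since no entry of $ B $ is ever reset to $0$, each index is enqueued at most once, so there are at most $ |Q| - 1 $ dequeues. Each dequeue performs $ O(1) $ work using the precomputed arrays $ \ell^{\min} $ and $ \ell^{\max} $. The loop is therefore $ O(|Q|) $.

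The only step that needs care is computing $ Z $ within linear time without a $ \log $ factor. We rely entirely on the linear-time label sorting assumption, combined with a bucket sort by source, to avoid this factor. Every other step is a direct counting argument.
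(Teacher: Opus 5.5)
Your proposal is correct and follows essentially the same route as the paper: compute $\ell_i^{\min}$ and $\ell_i^{\max}$ by a per-target minimum/maximum in linear time, obtain $Z$ from a radix sort of the edges by source and label (using the linear-time label-sorting assumption) followed by a merge/two-pointer comparison of consecutive $Out$ sets, and bound the while loop by noting that every index enters the queue at most once because all later enqueues are guarded by $B[\cdot]=0$ and entries of $B$ are never reset. The differences are cosmetic: the paper computes $\ell_i^{\min}$ in a single pass over the edges in arbitrary order, so it needs no incoming-edge lists; and your two radix passes (source, then label) give a label-major order, so you need one more stable distribution by source, or the passes in the reverse order, before you can read off the sorted lists $Out(i)$.
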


\begin{proof}
\emph{(Sketch)} The initialization (Lines~\ref{line:initializationstart}-\ref{line:initializationend}) can be implemented in linear time because the edge labels can be sorted in linear time. The main part of the algorithm (Lines~\ref{line:mainstart}-\ref{line:mainend}) is executed in linear time because every $ 2 \le i \le |Q| $ is added to the queue $ K $ at most once, so we only need $ O(1) $ time per state. \qedhere
\end{proof}

Theorem~\ref {theor:lineartimeminimization} implies that in linear time we can check whether two Wheeler NFAs are in the same $ \sim_{\mathfrak{C}} $-equivalence class.

\begin{corollary}\label{cor:checkingifbisimlar}
     Let $ (\mathcal{A}, \le) $ and $ (\mathcal{A}', \le') $ be  Wheeler NFAs, with $ \mathcal{A} = (Q, E, s, F) $ and $ \mathcal{A}' = (Q', E', s', F') $. We can determine whether there exists a Wheeler bisimulation from $ (\mathcal{A}, \le) $ to $ (\mathcal{A}', \le') $ in $ O(|E| + |E|') $ time.
\end{corollary}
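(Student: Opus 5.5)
The plan is to reduce the question to comparing quotients: compute both quotients in linear time, then test whether they are identical.

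\emph{Step 1 (characterization).} I would show that a Wheeler bisimulation from $(\mathcal{A}, \le)$ to $(\mathcal{A}', \le')$ exists if and only if the map $\phi(U_i) = U'_i$ between the quotients $(\mathcal{A}_*, \le_*)$ and $(\mathcal{A}'_*, \le'_*)$ of Lemma~\ref{lem:quotient} is an isomorphism from $\mathcal{A}_*$ to $\mathcal{A}'_*$. Here the two quotients must have the same number of states, and the states are listed in $\le_*$- and $\le'_*$-order.

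The forward direction is Lemma~\ref{lem:quotientautomotaisomorphic}: both automata lie in the same $\sim_{\mathfrak{C}}$-class, so $n = n'$ and $\phi$ is an isomorphism.

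For the converse, I would proceed as follows.
\begin{itemize}
\item The natural projection $\pi: Q \to Q_*$, $u \mapsto [u]_{\equiv_{\mathcal{A}, \le}}$, is a Wheeler bisimulation from $(\mathcal{A}, \le)$ to $(\mathcal{A}_*, \le_*)$. Property~1 is the standard fact that the projection onto the classes of a bisimulation equivalence is a bisimulation. Properties~2 and~3 follow from convexity of the classes and from Lemma~\ref{lem:quotient}(1): the image of a convex set is a block of consecutive classes, and the preimage of a block of consecutive classes is convex.
\item An order-preserving isomorphism is trivially a Wheeler bisimulation.
\item Composing $\pi$, then $\phi$, then $\pi'^{-1}$ gives a Wheeler bisimulation from $(\mathcal{A}, \le)$ to $(\mathcal{A}', \le')$. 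This uses closure of Wheeler bisimulations under inverse and composition, which is what makes $\sim_{\mathfrak{C}}$ an equivalence (Lemma~\ref{lem:bisimequiv}). Convexity transfers directly, since images and preimages under compositions are iterated images and preimages.
\end{itemize}

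\emph{Step 2 (algorithm).} By Theorem~\ref{theor:lineartimeminimization}, I would build $(\mathcal{A}_*, \le_*)$ and $(\mathcal{A}'_*, \le'_*)$ in $O(|E| + |E'|)$ time, with states numbered $1, \dots, n$ according to the orders. The bit array $B$ directly yields class indices by prefix counting. I would then check the following in turn.
\begin{itemize}
\item Check that $n = n'$.
\item Check the initial states; they are automatically $U_1$ and $U'_1$ by Axiom~1.
\item Check the final-state sets index by index.
\item Check that the edge sets coincide as sets of triples $(i, j, a)$.
\end{itemize}

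\emph{Main obstacle.} The edge comparison is the only delicate step. I would first represent each quotient edge set as a list of triples, with duplicates produced during quotienting removed. I would then radix-sort each list by $(i, j, a)$, using $O(|E|)$ buckets for the state indices and the assumption that labels sort in linear time. Finally, I would compare the two sorted lists in a linear scan. This gives the overall bound $O(|E| + |E'|)$.
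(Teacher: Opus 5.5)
Your proposal is correct and follows essentially the same route as the paper: build both quotients in linear time via Theorem~\ref{theor:lineartimeminimization}, use Lemma~\ref{lem:quotientautomotaisomorphic} for the ``only if'' direction, and for the ``if'' direction compose the order-preserving isomorphism with the projections of Lemma~\ref{lem:bisimulationtoquotient}, invoking Lemma~\ref{lem:bisimequiv}. The only difference is in implementation: you test the isomorphism by radix-sorting and comparing the two edge lists, while the paper checks in both directions that each mapped edge and final state is present, which is the same linear-time verification.
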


\begin{proof}
   \emph{(Sketch)} By Theorem~\ref{theor:lineartimeminimization}, in linear time we can build the quotient Wheeler NFAs of $ (A, \le) $ and $ (\mathcal{A}', \le') $. Then, Lemma~\ref{lem:quotientautomotaisomorphic} suggests that there exists a Wheeler bisimulation from $ (A, \le) $ to $ (\mathcal{A}', \le') $ if and only if the two quotient Wheeler NFAs are isomorphic. We can check whether the two quotient Wheeler NFAs are isomorphic in linear time because (i)~if the two quotient Wheeler NFAs do not have the same number of states, they cannot be isomorphic, and (ii)~if the two quotient Wheeler NFAs have the same number of states, then by Lemma~\ref{lem:quotientautomotaisomorphic} they are isomorphic if and only if the (unique) bijection that respects the Wheeler orders of the quotient Wheeler NFAs is an isomorphism. \qedhere
\end{proof}

Let $ (\mathcal{A}, \le) $ be a Wheeler NFA, with $ \mathcal{A} = (Q, E, s, F) $. Theorem~\ref{theor:lineartimeminimization} states that we can build the quotient Wheeler NFA in linear time, under the assumption that the edge labels can be sorted in linear time. It is natural to wonder if it is possible to achieve linear time even when the edge labels cannot be sorted in linear time. In principle, this does not seem unreasonable: we know the Wheeler order $ \le $, and for every $ u, v \in Q $, if $ u < v $, then by Axiom~2 all characters labeling some edge reaching $ u $ must be smaller than all characters labeling some edge reaching $ v $. In other words, the Wheeler order $ \le $ \emph{inherently} contains information on the mutual order of the edge labels. However, from the Wheeler order we cannot retrieve any information on the mutual order of the characters labeling some edge reaching the \emph{same} state, and we will see that this prevents us from achieving linear time.

More concretely, we consider the \emph{comparison model}, in which the only query allowed to compare the elements of the alphabet $ \Sigma $ is the following: given $ a_1, a_2 \in \Sigma $, decide whether $ a_1 \preceq a_2 $. The model assumes that each such query can be solved in $ O(1) $ time. We can show that, in the comparison model, building the quotient NFA requires $ \Omega(|E| \log |E|) $ time.

\begin{theorem}\label{theor:lowerboundquotienting}
    Let $ (\mathcal{A}, \le) $ be a Wheeler NFA, with $ \mathcal{A} = (Q, E, s, F) $. In the comparison model, the time complexity of building the Wheeler NFA $ (\mathcal{A}_*, \le_*) $ of Lemma~\ref{lem:quotient} is $ \Theta(|E| \log |E|) $. This is true even if we know that $ \mathcal{A} $ is a DFA.
\end{theorem}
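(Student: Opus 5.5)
The statement has two halves: an $O(|E| \log |E|)$ upper bound in the comparison model, and an $\Omega(|E| \log |E|)$ lower bound that already holds for Wheeler DFAs.

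\textbf{Upper bound.} I would first sort the edge labels by comparison sorting, in $O(|E| \log |E|)$ time. After that, each label is replaced by its rank, which gives an integer alphabet that can be sorted in linear time. The construction behind Theorem~\ref{theor:lineartimeminimization} then runs in $O(|E|)$ time: Algorithm~\ref{alg:computingmaximum} (Lemmas~\ref{lem:algcorrectness} and~\ref{lem:algrunning}) followed by building $E_*$ from $B_{\mathcal{A}, \le}$. The comparison model is only used to compute ranks, so the total is $O(|E| \log |E|)$.

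\textbf{Lower bound: the reduction.} I would reduce from a problem with a known comparison lower bound of $\Omega(N \log N)$. The natural candidate is \emph{element distinctness}, or equivalently counting distinct values of a multiset $x_1, \dots, x_N \in \Sigma$. I pick distinctness because the output of the quotient only reveals which states merge, not a sorted order.

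The target automaton is a Wheeler DFA in which all relevant edges enter the same region. Since Axiom~2 forces labels entering smaller states to be $\preceq$, the $N$ unsorted labels should all label edges into states whose relative order the algorithm cannot exploit. The difficulty is that in a DFA, distinct labels entering distinct states must be ordered consistently with the states.

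So I would instead put all $N$ labels on edges \emph{leaving} a single state. Take $s \xrightarrow{z} u$, where $z$ is a fresh character smaller than all the $x_j$. Then add edges $u \xrightarrow{x_j} v_j$, one per distinct value. This does not work directly: we need to \emph{build} the DFA, and doing so requires placing the $v_j$ in Wheeler order, which requires sorting.

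\textbf{Fixing the reduction.} A better plan is to reduce from sorting, or from deciding whether a given sequence is sorted with multiplicities. Given the input sequence, construct an NFA whose quotient's state count or structure encodes the sorted order. The obstacle is producing a valid Wheeler DFA without already knowing the order.

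I would try the following gadget. Make all $N$ edges enter one state $w$ with labels $x_1, \dots, x_N$, coming from distinct source states $p_1 < \dots < p_N$. This is valid Wheeler since only one target is involved, but it is not a DFA unless the sources differ, and here they do. Each $p_j$ is reached by $s \xrightarrow{y_j} p_j$ using fresh sorted labels $y_1 \prec \dots \prec y_N$, all smaller than the $x_j$. Then Axiom~3 is trivial, and the automaton is deterministic.

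The quotient then computes $a^{\min}$ and $a^{\max}$ of $w$, which yields only the minimum and maximum, at cost $O(N)$. That is not enough. To force $\Omega(N \log N)$, I would add distinguishing structure so that the merge pattern of the $p_j$ depends on the multiset of outgoing labels $Out(j)$. Concretely, give each $p_j$ edges labeled $x_j$ to a common sink, so that $p_{j-1} \equiv p_j$ iff $x_{j-1} = x_j$ and the corresponding out-sets match. The output then reveals the equalities among the $x_j$.

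\textbf{Main obstacle.} Making the quotient's output reveal enough comparisons is the step I expect to be hardest. I would aim for an output that determines element distinctness, for example by arranging via Axiom~2 that all $x_j$ appear on edges into a single state, so that merging is decided by set equality $Out(i-1) = Out(i)$, with each $Out$ set of size $\Theta(N)$. I would then finish with the standard algebraic decision tree or adversary bound for deciding equality of two sets, which is $\Omega(N \log N)$.
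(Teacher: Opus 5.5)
Your upper bound is the same as the paper's: comparison-sort the labels, relabel by rank, run the construction of Theorem~\ref{theor:lineartimeminimization}, and map the labels back. It is fine. The lower bound, however, is not established.

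The gadgets you actually write down do not work. In the $p_j \to w$ gadget (equivalently, the common-sink variant), $s$ and $w$ have out-label sets different from every $p_j$. So the only merges are $p_{j-1} \equiv_{\mathcal{A},\le} p_j$ exactly when $x_{j-1} = x_j$. These adjacent tests cost $O(N)$ comparisons, so on this family the quotient is computable in linear time and no $\Omega(N \log N)$ bound follows. The $u \to v_j$ gadget, as you note, already needs sorting just to produce $\le$. Your closing paragraph only describes what a working reduction should achieve, and it conflates element distinctness with set equality. It never specifies an automaton, checks that it is a Wheeler DFA buildable with $O(N)$ comparisons, or shows how the quotient's output decides the instance.

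The missing ingredient is the concrete gadget the paper uses, which reduces from \emph{set equivalence}. Given pairwise distinct $b_1, \dots, b_n$ and pairwise distinct $c_1, \dots, c_n$, decide whether $\{b_1,\dots,b_n\} = \{c_1,\dots,c_n\}$; by Reingold's theorem this needs $\Omega(n \log n)$ comparisons. The construction is as follows:
\begin{itemize}
\item States $s < z_1 < z_2 < t$, with $F = \{t\}$.
\item Edges $(s, z_1, \#_1)$ and $(s, z_2, \#_2)$, with fresh $\#_1 \prec \#_2 \prec a$ for all $a \in \Sigma$.
\item Edges $(z_1, t, b_i)$ and $(z_2, t, c_i)$ for every $i$.
\end{itemize}
All unsorted labels enter the single state $t$, so Axioms~2 and~3 impose no constraint among them, and the Wheeler order is available in $O(n)$ time without sorting. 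The distinctness within each list is exactly what makes $\mathcal{A}$ a DFA, and $|E| = 2n+2$.

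Now $s$ and $t$ have out-label sets (and, for $t$, finality) different from all other states, so they form singleton classes. Since $z_1, z_2$ are consecutive, $\{z_1, z_2\}$ is convex, and $z_1 \equiv_{\mathcal{A},\le} z_2$ holds iff $Out(z_1) = Out(z_2)$. Indeed, if the out-sets agree, the equivalence with classes $\{s\}, \{z_1, z_2\}, \{t\}$ is a convex bisimulation, hence a Wheeler autobisimulation. So $\mathcal{A}_*$ has three states iff the two sets are equal. An $f(|E|)$-time algorithm therefore gives an $f(2n+2) + O(n)$ algorithm for set equivalence, forcing $f(|E|) \in \Omega(|E| \log |E|)$.
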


Theorem~\ref{theor:lowerboundquotienting} is consistent with Theorem~\ref{theor:lineartimeminimization} because it is well known that the complexity of sorting $ n $ elements in the comparison model is not linear but $ \Omega (n \log n) $ (see, e.g.,~\cite{knuth1998art, cormen2022introduction, cotumaccio2025inverting}).

\section{Conclusions and Future Work}\label{sec:conclusions}

In this paper, we have introduced Wheeler bisimulations. We have shown that they induce a unique minimal Wheeler NFA for every Wheeler language, which can be built in linear time.

Definition~\ref{def:bisimulation} refers to the most popular notion of bisimulation, often called \emph{strong} bisimulation. However, many variants appear in the literature. The properties listed in Definition~\ref{def:bisimulation} ensure that a bisimulation is propagated when following edges in a \emph{forward} fashion, but it is possible to require the same compatibility to hold when edges are followed in a \emph{backward} fashion (or both in a forward fashion and a backward fashion). In addition, several notions of \emph{weak} bisimulation have been proposed (such as stutter bisimulation and normed bisimulation), which typically still induce a unique minimal NFA. Usually, every (strong) bisimulation is a weak bisimulation, so weak bisimulations yield smaller minimal NFAs. For details, we refer the reader to~\cite{ciric2014nondeterministic, baier2008principles}. Moreover, as we have seen in the introduction, bisimulations appear in different domains, so there has been much effort to provide a unified view of bisimulation~\cite{roggenbach2000towards}, also via algebraic and co-algebraic approaches~\cite{bezhanishvili2023minimisation, bonchi2014algebra, jacobs2015trace, rot2016coalgebraic, malacaria1995studying, abramsky1991domain}. Consequently, our theory of Wheeler bisimulations can be extended in two directions: (i)~by modifying the definition of Wheeler bisimulation, and (ii)~by switching from Wheeler automata to other formalisms (which should probably capture similar ideas of convexity, see~\cite{cotumaccio2024convex} for Petri nets).

The definitions of bisimulation and Wheeler bisimulation are symmetrical: each property must hold even when the roles of the two automata are exchanged. This is not a mandatory requirement: when this symmetry is broken, one works with \emph{simulations} (and not bisimulations), see~\cite{baier2008principles}. It seems reasonable to define \emph{Wheeler simulations} by removing Property~3 in Definition~\ref{def:wheelerbisimulation} (because Property~2 and Property~3 are symmetrical).

One important point that needs further clarification is the relationship between Wheeler bisimulations and logic. It is well known that the largest bisimulation on an NFA can be characterized through several logics, notably CTL and CTL* (\cite{clarke1981design}, see also~\cite{baier2008principles}). In general, the relationship between Wheeler languages and logic is not well understood. We have seen in the introduction that Wheeler languages enjoy the typical properties of the class of regular languages: determinism and non-determinism have the same expressive power, there exists a unique minimal Wheeler DFA, and Wheeler languages admit a Myhill-Nerode theorem. Nonetheless, no logical characterization of Wheeler languages is known, while regular languages can be characterized via the monadic second-order logic~\cite{straubing2012finite}. A characterization of Wheeler languages in terms of syntactic monoids is also missing.

Consider the class of all regular languages. Bisimulations lead to the most interesting notion of uniqueness (up to isomorphism) for NFAs, but other formalisms achieve the same goal, such as residual automata~\cite{denis2002residual}, the universal automaton~\cite{lombardy2008universal}, and the átomaton~\cite{brzozowski2014} (see~\cite{maarand2024yet} for more information and more options). The natural question is whether we can also extend these results to Wheeler automata. We conjecture that this is possible at least for residual automata. Indeed, residual automata are closely related to the Myhill-Nerode theorem, and we know that there exists a Myhill-Nerode theorem for Wheeler automata~\cite{alanko2021wheeler}.

\newpage

\bibliography{lipics-v2021-sample-article}

\appendix

\section{Omitted Proofs and Results from Section~\ref{sec:minimality}}

In this section, we discuss the relationship between Wheeler automata and minimality.

\subsection{Properties of Wheeler Bisimulations}

Let $ (\mathcal{A}, \le) $ and $ (\mathcal{A}', \le') $ be Wheeler NFAs. Every Wheeler bisimulation from $ (\mathcal{A}, \le) $ to $ (\mathcal{A}', \le') $ is a bisimulation from $ \mathcal{A} $ to $ \mathcal{A}' $ by Property~1, so every Wheeler bisimulation must satisfy the properties of a standard bisimulation. Keeping this in mind, we can easily infer the following three lemmas.

The first lemma shows that Wheeler bisimulations can only relate Wheeler NFAs that recognize the same language.

\begin{lemma}[see {\cite[Theorem B.4]{kozen2007automata}}]\label{lem:bisimulationandequivalence}
    Let $ (\mathcal{A}, \le) $ and $ (\mathcal{A}', \le') $ be Wheeler NFAs, with $ \mathcal{A} = (Q, E, s, F) $ and $ \mathcal{A}' = (Q', E', s', F') $. If there exists a Wheeler bisimulation $ R \subseteq Q \times Q' $ from $ (\mathcal{A}, \le) $ to $ (\mathcal{A}', \le') $, then $ \mathcal{L}(\mathcal{A}) = \mathcal{L}(\mathcal{A}') $.
\end{lemma}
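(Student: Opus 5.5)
By Property~1, $ R $ is a bisimulation from $ \mathcal{A} $ to $ \mathcal{A}' $. From this point on the Wheeler structure plays no role, so this is really the classical fact that bisimilar NFAs recognize the same language. I would prove it directly by showing both inclusions through a common inductive claim.

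\textbf{Step 1 (forward claim).} For every $ \alpha \in \Sigma^* $ and every $ u \in Q $ with $ \alpha \in I_u^\mathcal{A} $, there is $ u' \in Q' $ with $ u \; R \; u' $ and $ \alpha \in I_{u'}^{\mathcal{A}'} $. I would prove this by induction on $ |\alpha| $.
\begin{itemize}
\item For $ \alpha = \epsilon $, the only walks reading $ \epsilon $ are empty, so $ u = s $. We take $ u' = s' $, using $ s \; R \; s' $ from Definition~\ref{def:bisimulation}.
\item For $ \alpha = \beta a $, a walk from $ s $ to $ u $ reading $ \alpha $ ends with an edge $ (v, u, a) \in E $, where $ \beta \in I_v^\mathcal{A} $.
\item By the inductive hypothesis there is $ v' $ with $ v \; R \; v' $ and $ \beta \in I_{v'}^{\mathcal{A}'} $.
\item The first (forth) clause of Definition~\ref{def:bisimulation} then gives $ u' $ with $ (v', u', a) \in E' $ and $ u \; R \; u' $. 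Hence $ \alpha \in I_{u'}^{\mathcal{A}'} $.
\end{itemize}

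\textbf{Step 2 (backward claim).} Symmetrically, for every $ \alpha $ and every $ u' \in Q' $ with $ \alpha \in I_{u'}^{\mathcal{A}'} $, there is $ u \in Q $ with $ u \; R \; u' $ and $ \alpha \in I_u^\mathcal{A} $. The proof is the same induction, using the second (back) clause of the definition.

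\textbf{Step 3 (conclude).} Recall $ \mathcal{L}(\mathcal{A}) = \bigcup_{u \in F} I_u^\mathcal{A} $.
\begin{itemize}
\item If $ \alpha \in \mathcal{L}(\mathcal{A}) $, pick $ u \in F $ with $ \alpha \in I_u^\mathcal{A} $. Step~1 yields $ u' $ with $ u \; R \; u' $ and $ \alpha \in I_{u'}^{\mathcal{A}'} $. The fourth clause of the definition gives $ u' \in F' $, so $ \alpha \in \mathcal{L}(\mathcal{A}') $.
\item The reverse inclusion follows in the same way from Step~2.
\end{itemize}

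There is no real obstacle. The only point needing care is the base case: it must use that the walk reading $ \epsilon $ is empty, so that $ u = s $. I would also note that the fourth clause in the paper's definition should read $ u' \in Q' $ rather than $ u' \in Q $, which is a harmless typo.
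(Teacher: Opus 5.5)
Your proof is correct and takes the same route as the paper: the paper also notes that Property~1 makes $R$ an ordinary bisimulation and then cites the classical result \cite[Theorem B.4]{kozen2007automata}, and your induction on $|\alpha|$ together with the final-state clause is just that classical argument written out. Your remark that the fourth clause of Definition~\ref{def:bisimulation} should quantify over $u' \in Q'$ is also right.
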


The second lemma shows that, intuitively, all states must play a role in every Wheeler bisimulation. This follows from the fact that every state is reachable from the initial state of the corresponding automaton.

\begin{lemma}[see {\cite[Lemma B.5]{kozen2007automata}}]\label{lem:R(u)nonempty}
    Let $ (\mathcal{A}, \le) $ and $ (\mathcal{A}', \le') $ be Wheeler NFAs, and let $ R \subseteq Q \times Q' $ be a Wheeler bisimulation from $ (\mathcal{A}, \le) $ to $ (\mathcal{A}', \le') $, with $ \mathcal{A} = (Q, E, s, F) $ and $ \mathcal{A}' = (Q', E', s', F') $. Then:
    \begin{enumerate}
        \item For every $ u \in Q $, $ R(u) \not = \emptyset $.
        \item For every $ u' \in Q' $, $ R^{-1}(u') \not = \emptyset $.
    \end{enumerate}
\end{lemma}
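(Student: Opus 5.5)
Only Property~1 matters here: $R$ is a bisimulation from $\mathcal{A}$ to $\mathcal{A}'$. The plan is to use reachability of every state and propagate the relation forward from the initial states by induction on walk length.

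For the first claim, fix $u \in Q$. Since $u$ is reachable in $\mathcal{A}$, there is a walk $s = v_0, v_1, \dots, v_k = u$ with edges $(v_{j-1}, v_j, a_j) \in E$ for $1 \le j \le k$. We prove by induction on $j$ that there exists $v'_j \in Q'$ with $v_j \; R \; v'_j$.
\begin{itemize}
\item Base case $j = 0$: take $v'_0 = s'$, since $s \; R \; s'$ by the third condition of Definition~\ref{def:bisimulation}.
\item Inductive step: given $v_{j-1} \; R \; v'_{j-1}$ and $(v_{j-1}, v_j, a_j) \in E$, the first condition of Definition~\ref{def:bisimulation} yields $v'_j \in Q'$ with $(v'_{j-1}, v'_j, a_j) \in E'$ and $v_j \; R \; v'_j$.
\end{itemize}
At $j = k$ this gives $v'_k \in R(u)$, so $R(u) \neq \emptyset$.

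The second claim is symmetric. Fix $u' \in Q'$ and take a walk from $s'$ to $u'$ in $\mathcal{A}'$, which exists because every state of $\mathcal{A}'$ is reachable. Start again from $s \; R \; s'$ and use the second condition of Definition~\ref{def:bisimulation} to build a matching walk in $\mathcal{A}$, step by step, whose states are related via $R$ to the corresponding states of the walk in $\mathcal{A}'$. The last state $v$ of that walk satisfies $v \; R \; u'$, so $R^{-1}(u') \neq \emptyset$.

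There is no real obstacle. The only care needed is to use the reachability assumption, which the paper imposes on all NFAs, and to apply the correct transfer condition in each direction.
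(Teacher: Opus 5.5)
Your proof is correct and is essentially the paper's argument. The paper gives no separate proof: it cites Kozen's Lemma~B.5 and notes that the claim follows from every state being reachable, which is exactly your induction along a walk from $s$ (resp.\ $s'$), starting from $s \; R \; s'$ and using the forth (resp.\ back) transfer condition, with only Property~1 needed.
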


The third lemma shows that every bijective Wheeler bisimulation must be an isomorphism.

\begin{lemma}[see {\cite[Exercise 63]{kozen2007automata}}]\label{lem:bisimulationisomorphism}
    Let $ (\mathcal{A}, \le) $ and $ (\mathcal{A}', \le') $ be Wheeler NFAs, and let $ R \subseteq Q \times Q' $ be a Wheeler bisimulation from $ (\mathcal{A}, \le) $ to $ (\mathcal{A}', \le') $, with $ \mathcal{A} = (Q, E, s, F) $ and $ \mathcal{A}' = (Q', E', s', F') $. If $ R $ is a bijective function from $ Q $ to $ Q' $, then $ R $ is an isomorphism from $ \mathcal{A} $ to $ \mathcal{A}' $.
\end{lemma}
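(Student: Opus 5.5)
We have to show that a bijective Wheeler bisimulation $R$ (viewed as a bijection $\phi\colon Q \to Q'$) is an isomorphism from $\mathcal{A}$ to $\mathcal{A}'$. Only Property~1 is needed, i.e., that $R$ is a conventional bisimulation; the convexity Properties~2 and~3 play no role. We verify the three clauses (i)--(iii) of the definition of isomorphism in turn.

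\emph{Clause (ii).} By the third condition of Definition~\ref{def:bisimulation} we have $s \; R \; s'$. Since $R$ is a function, $R(s)$ is a single state, so $\phi(s) = s'$.

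\emph{Clause (iii).} For every $u \in Q$ we have $u \; R \; \phi(u)$. The fourth condition of Definition~\ref{def:bisimulation} then gives $u \in F$ if and only if $\phi(u) \in F'$.

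\emph{Clause (i), forward direction.} Let $(u, v, a) \in E$. Since $u \; R \; \phi(u)$, the first bisimulation condition gives some $v' \in Q'$ with $v \; R \; v'$ and $(\phi(u), v', a) \in E'$. Because $R$ is a function, $v \; R \; v'$ forces $v' = \phi(v)$. Hence $(\phi(u), \phi(v), a) \in E'$.

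\emph{Clause (i), backward direction.} Let $(\phi(u), \phi(v), a) \in E'$. Since $u \; R \; \phi(u)$, the second bisimulation condition gives some $w \in Q$ with $w \; R \; \phi(v)$ and $(u, w, a) \in E$. Now $w \; R \; \phi(v)$ means $\phi(w) = \phi(v)$, and injectivity of $\phi$ yields $w = v$. Hence $(u, v, a) \in E$.

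This is the only step with any content: it is precisely where injectivity is used, to identify the witness $w$ supplied by the bisimulation with the intended state $v$. Surjectivity is needed only so that $\phi$ is a bijection, as the definition of isomorphism requires.
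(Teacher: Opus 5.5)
Your proof is correct and takes essentially the paper's approach. The paper gives no separate argument: it notes that Property~1 makes every Wheeler bisimulation a conventional bisimulation and cites the standard fact (Kozen, Exercise~63). Your clause-by-clause check, with injectivity used to identify the witness $w$ with $v$ in the backward edge direction, is exactly the routine verification behind that citation.
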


The next lemma shows that the relation $ \sim_{\mathfrak{C}} $ defined in Section~\ref{sec:mainresults} is an equivalence relation.

\begin{lemma}\label{lem:bisimequiv}
    \begin{enumerate}
        \item Let $ (\mathcal{A}, \le) $ be a Wheeler NFA, with $ \mathcal{A} = (Q, E, s, F) $. Then, $ id_Q $ is a Wheeler bisimulation from $ (\mathcal{A}, \le) $ to $ (\mathcal{A}, \le) $.
        \item Let $ (\mathcal{A}, \le) $ and $ (\mathcal{A}', \le') $ be Wheeler NFAs, with $ \mathcal{A} = (Q, E, s, F) $ and $ \mathcal{A}' = (Q', E', s', F') $. If $ R \subseteq Q \times Q' $ is a Wheeler bisimulation from $ (\mathcal{A}, \le) $ to $ (\mathcal{A}', \le') $, then $ R^{-1} \subseteq Q' \times Q $ is a Wheeler bisimulation from $ (\mathcal{A}', \le') $ to $ (\mathcal{A}, \le) $.
        \item Let $ (\mathcal{A}, \le) $, $ (\mathcal{A}', \le') $ and $ (\mathcal{A}'', \le'') $ be Wheeler NFAs, with $ \mathcal{A} = (Q, E, s, F) $, $ \mathcal{A}' = (Q', E', s', F') $ and $ \mathcal{A}'' = (Q'', E'', s'', F'') $. If $ R \subseteq Q \times Q' $ is a Wheeler bisimulation from $ (\mathcal{A}, \le) $ to $ (\mathcal{A}', \le') $ and $ R' \subseteq Q' \times Q'' $ is a Wheeler bisimulation from $ (\mathcal{A}', \le') $ to $ (\mathcal{A}'', \le'') $, then $ R' \circ R \subseteq Q \times Q'' $ is a Wheeler bisimulation from $ (\mathcal{A}, \le) $ to $ (\mathcal{A}'', \le'') $.
    \end{enumerate}
\end{lemma}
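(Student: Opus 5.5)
We handle the three items of Lemma~\ref{lem:bisimequiv} separately. Property~1 (the bisimulation clauses) is standard in each case, so the real work is Properties~2 and~3 of Definition~\ref{def:wheelerbisimulation}. In each item we check these directly from the definitions, using only that images and preimages compose in the obvious way.

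For item~1, $id_Q$ is a bisimulation from $\mathcal{A}$ to $\mathcal{A}$. Indeed, if $u \; id_Q \; u$ and $(u,v,a) \in E$, then we may take $v$ itself as the matching successor, and $v \; id_Q \; v$. Moreover, $s \; id_Q \; s$, and finality is trivially preserved. For every $C \subseteq Q$ we have $id_Q(C) = id_Q^{-1}(C) = C$, so $\le$-convexity is preserved in both directions. This gives Properties~2 and~3.

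For item~2, we first check that $R^{-1}$ is a bisimulation from $\mathcal{A}'$ to $\mathcal{A}$. The two transfer clauses of Definition~\ref{def:bisimulation} for $R^{-1}$ are exactly the two clauses for $R$ with their roles exchanged. Furthermore, $s' \; R^{-1} \; s$ holds, and the finality condition is symmetric. For convexity, note that $R^{-1}(C') = R^{-1}(C')$ and $(R^{-1})^{-1}(C) = R(C)$. Hence Property~2 for $R^{-1}$ is Property~3 for $R$, and Property~3 for $R^{-1}$ is Property~2 for $R$.

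For item~3, we check that $R' \circ R$ is a bisimulation by chaining the transfer clauses. Suppose $u \; (R' \circ R) \; u''$, witnessed by some $u'$ with $u \; R \; u'$ and $u' \; R' \; u''$, and suppose $(u,v,a) \in E$. Applying the forward clause of $R$ gives $v'$ with $(u',v',a) \in E'$ and $v \; R \; v'$. Applying the forward clause of $R'$ then gives $v''$ with $(u'',v'',a) \in E''$ and $v' \; R' \; v''$, so $v \; (R' \circ R) \; v''$. The backward clause is proved symmetrically. We have $s \; R \; s'$ and $s' \; R' \; s''$, so $s \; (R' \circ R) \; s''$. Finality transfers through the intermediate state $u'$. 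For convexity we use the identities
\[ (R' \circ R)(C) = R'(R(C)), \qquad (R' \circ R)^{-1}(C'') = R^{-1}(R'^{-1}(C'')), \]
which follow directly from the definition of composition. Now let $C$ be $\le$-convex. Then $R(C)$ is $\le'$-convex by Property~2 of $R$, and so $R'(R(C))$ is $\le''$-convex by Property~2 of $R'$. Property~3 follows in the same way, applying Property~3 of $R'$ first and then Property~3 of $R$.

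No step is a genuine obstacle here. The only point needing care is verifying the two set identities for images of compositions and inverses, because the paper's notation $R(U)$ is defined via unions. Both identities hold, including the degenerate case where some $R(u)$ is empty, since both sides are then empty. Together the three items show that $\sim_{\mathfrak{C}}$ is reflexive, symmetric and transitive.
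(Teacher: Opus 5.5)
Your proposal is correct and follows the paper's proof essentially step for step: item~2 reduces to exchanging Properties~2 and~3 via $(R^{-1})^{-1} = R$, and item~3 applies the convexity properties twice using $(R' \circ R)(C) = R'(R(C))$ and $(R' \circ R)^{-1}(C'') = R^{-1}((R')^{-1}(C''))$. The only difference is that you verify the bisimulation clauses for $R' \circ R$ by chaining the transfer conditions directly, where the paper simply cites Kozen's Lemma~B.2.
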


\begin{proof}
    The first statement is immediate, and the second statement is also immediate because $ (R^{-1})^{-1} = R $ (which ensures that $ R^{-1} $ satisfies Property~3 of Definition~\ref{def:wheelerbisimulation}). Hence, we focus on the third statement. Let us prove that $ R' \circ R $ is a Wheeler bisimulation from $ (\mathcal{A}, \le) $ to $ (\mathcal{A}'', \le'') $.

    Let us prove Property~1. We must prove that $ R' \circ R $ is a bisimulation from $ \mathcal{A} $ to $ \mathcal{A}'' $. In particular, $ R $ is a bisimulation from $ \mathcal{A} $ to $ \mathcal{A}' $ and $ R' $ is a bisimulation from $ \mathcal{A}' $ to $ \mathcal{A}'' $, so the conclusion follows from~\cite[Lemma B.2]{kozen2007automata}
    
    Let us prove Property~2. Fix $ C \subseteq Q $ such that $ C $ is $ \le $-convex. We must prove that $ (R' \circ R) (C) $ is $ \le'' $-convex. Since $ (R' \circ R) (C) = R'(R(C)) $, we need to prove that $ R'(R(C)) $ is $ \le'' $-convex. We know that $ R $ is a Wheeler bisimulation from $ (\mathcal{A}, \le) $ to $ (\mathcal{A}', \le') $ and $ C $ is $ \le $-convex, so we obtain that $ R(C) $ is $ \le' $-convex. We also know that $ R' $ is a Wheeler bisimulation from $ (\mathcal{A}', \le') $ to $ (\mathcal{A}'', \le'') $ and $ R(C) $ is $ \le' $-convex, so we conclude that $ R'(R(C)) $ is $ \le'' $-convex. 
    
    Let us prove Property~3. Fix $ C'' \subseteq Q'' $ such that $ C'' $ is $ \le'' $-convex. We must prove that $ ((R' \circ R)^{-1}) (C'') $ is $ \le $-convex. Since $ ((R' \circ R)^{-1}) (C'') = (R^{-1} \circ (R')^{-1})(C'') = R^{-1} ((R')^{-1}(C'')) $, we need to prove that $ R^{-1} ((R')^{-1}(C'')) $ is $ \le $-convex. The conclusion follows by arguing as we did for Property~2. \qedhere
\end{proof}

\subsection{Wheeler Autobisimulations}

The following lemma shows that we can canonically build Wheeler \emph{autobisimulations} starting from a Wheeler \emph{bisimulation}.

\begin{lemma}\label{lem:compositionisautobisimulation}
    Let $ (\mathcal{A}, \le) $ and $ (\mathcal{A}', \le') $ be Wheeler NFAs, with $ \mathcal{A} = (Q, E, s, F) $ and $ \mathcal{A}' = (Q', E', s', F') $, and let $ R \subseteq Q \times Q' $ a Wheeler bisimulation from $ (\mathcal{A}, \le) $ to $ (\mathcal{A}', \le') $. Then:
    \begin{enumerate}
        \item $ R^{-1} \circ R $ is a Wheeler autobisimulation on $ (\mathcal{A}, \le) $.
        \item $ R \circ R^{-1} $ is a Wheeler autobisimulation on $ (\mathcal{A}', \le') $.
    \end{enumerate}
\end{lemma}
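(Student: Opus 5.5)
By Lemma~\ref{lem:bisimequiv}, Properties~1--3 come essentially for free, so the only genuinely new ingredient is reflexivity. By part~2 of Lemma~\ref{lem:bisimequiv}, $ R^{-1} $ is a Wheeler bisimulation from $ (\mathcal{A}', \le') $ to $ (\mathcal{A}, \le) $. By part~3 of the same lemma, applied to $ R $ followed by $ R^{-1} $, the composition $ R^{-1} \circ R $ is a Wheeler bisimulation from $ (\mathcal{A}, \le) $ to $ (\mathcal{A}, \le) $. Symmetrically, composing $ R^{-1} $ followed by $ R $ shows that $ R \circ R^{-1} $ is a Wheeler bisimulation from $ (\mathcal{A}', \le') $ to $ (\mathcal{A}', \le') $. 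Thus Properties~1, 2 and 3 hold in both cases without further work.

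It remains to prove reflexivity, which is required by the definition of Wheeler autobisimulation. For the first claim, fix $ u \in Q $. By Lemma~\ref{lem:R(u)nonempty}(1), $ R(u) \neq \emptyset $, so we can pick $ u' \in Q' $ with $ u \; R \; u' $. Then also $ u' \; R^{-1} \; u $, and hence $ u \; (R^{-1} \circ R) \; u $.

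For the second claim, fix $ u' \in Q' $. By Lemma~\ref{lem:R(u)nonempty}(2), $ R^{-1}(u') \neq \emptyset $, so there is $ u \in Q $ with $ u \; R \; u' $. Then $ u' \; R^{-1} \; u $ and $ u \; R \; u' $, which gives $ u' \; (R \circ R^{-1}) \; u' $.

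There is no real obstacle here. The only point needing care is the composition convention: $ R' \circ R $ applies $ R $ first, so in part~3 of Lemma~\ref{lem:bisimequiv} we take $ R' = R^{-1} $ (with $ \mathcal{A}'' = \mathcal{A} $) to obtain $ R^{-1} \circ R $, and we swap roles to obtain $ R \circ R^{-1} $. The nonemptiness in Lemma~\ref{lem:R(u)nonempty} is exactly what guarantees reflexivity, and it in turn relies on all states being reachable.
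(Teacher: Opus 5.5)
Your proof is correct and follows the paper's argument: Properties~1--3 come from the inverse and composition parts of Lemma~\ref{lem:bisimequiv}, and reflexivity comes from the nonemptiness of $R(u)$ in Lemma~\ref{lem:R(u)nonempty}. Your explicit treatment of the second claim, using $R^{-1}(u') \neq \emptyset$, is exactly the ``analogous'' case that the paper leaves to the reader.
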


\begin{proof}
    We only prove the first statement because the proof of the second statement is analogous. By Lemma~\ref{lem:bisimequiv}, we know that $ R^{-1} $ is a Wheeler bisimulation from $ (\mathcal{A}', \le') $ to $ (\mathcal{A}, \le) $ and $ R^{-1} \circ R $ is a Wheeler bisimulation from $ (\mathcal{A}, \le) $ to $ (\mathcal{A}, \le) $. To conclude the proof, we only need to show that $ R^{-1} \circ R $ is reflexive. To this end, we need to prove that $ u \; (R^{-1} \circ R) \; u $ for every $ u \in Q $. Fix $ u \in Q $. By Lemma~\ref{lem:R(u)nonempty} we have $ R(u) \not = \emptyset $, so there exists $ u' \in Q $ such that $ u \; R \; u' $. Hence, we have $ u' \; R^{-1} \; u $ and we conclude $ u \; (R^{-1} \circ R) \; u $. \qedhere
\end{proof}

Since autobisimulations are reflexive by definition, we can use Lemma~\ref{lem:convexunion} to show that the union of autobisimulations must be an autobisimulation.

\begin{lemma}\label{lem:unionautobisimulation}
     Let $ (\mathcal{A}, \le) $ be a Wheeler NFA, with $ \mathcal{A} = (Q, E, s, F) $. Let $ R_1, R_2 \subseteq Q \times Q $ be Wheeler autobisimulations on $ (\mathcal{A}, \le) $. Then, $ R_1 \cup R_2 $ is a Wheeler autobisimulation on $ (\mathcal{A}, \le) $.
\end{lemma}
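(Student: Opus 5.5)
We check the four requirements for $ R_1 \cup R_2 $ directly: reflexivity, Property~1, Property~2 and Property~3. The arguments are short, and the only place where reflexivity matters is the convexity properties.

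\emph{Reflexivity and Property~1.} Reflexivity is immediate, since $ id_Q \subseteq R_1 \subseteq R_1 \cup R_2 $. For Property~1 we check the clauses of Definition~\ref{def:bisimulation} pairwise. Suppose $ u \; (R_1 \cup R_2) \; u' $. Then $ u \; R_j \; u' $ for some $ j \in \{1, 2\} $. The forth and back transfer conditions therefore hold with witnesses taken from $ R_j \subseteq R_1 \cup R_2 $, and the final-state condition is inherited from $ R_j $. We also have $ s \; R_1 \; s $, so $ s \; (R_1 \cup R_2) \; s $.

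\emph{Property~2.} Fix a $ \le $-convex set $ C \subseteq Q $. We have $ (R_1 \cup R_2)(C) = R_1(C) \cup R_2(C) $, and both sets are $ \le $-convex by Property~2 of $ R_1 $ and $ R_2 $. If $ C = \emptyset $, the union is empty and hence convex. Otherwise pick $ z \in C $. By reflexivity, $ z \in R_1(z) \subseteq R_1(C) $ and likewise $ z \in R_2(C) $. So $ R_1(C) \cap R_2(C) \neq \emptyset $, and Lemma~\ref{lem:convexunion} gives that the union is $ \le $-convex.

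\emph{Property~3.} The argument is the same, using $ (R_1 \cup R_2)^{-1}(C) = R_1^{-1}(C) \cup R_2^{-1}(C) $. Reflexivity of each $ R_j $ gives $ C \subseteq R_j^{-1}(C) $, so the two sets share any point of a nonempty $ C $, and Lemma~\ref{lem:convexunion} applies again.

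The only delicate step is the convexity of the union. This is exactly where reflexivity is needed, because it supplies a common element so that Lemma~\ref{lem:convexunion} can be applied; the empty case must be handled separately.
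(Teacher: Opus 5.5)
Your proof is correct and matches the paper's argument: reflexivity gives $C \subseteq R_1(C) \cap R_2(C)$ (and $C \subseteq R_1^{-1}(C) \cap R_2^{-1}(C)$), so Lemma~\ref{lem:convexunion} applies, with the case $C = \emptyset$ handled separately. The only difference is that you verify Property~1 directly, whereas the paper cites the standard fact that a union of bisimulations is a bisimulation.
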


\begin{proof}
    We know that $ R_1 $ and $ R_2 $ are Wheeler autobisimulations on $ \mathcal{A} $, so $ R_1 $ and $ R_2 $ are reflexive Wheeler bisimulations from $ (\mathcal{A}, \le) $ to $ (\mathcal{A}, \le) $. Since $ R_1 $ is reflexive, we have $ id_Q \subseteq R_1 $, so $ id_Q \subseteq R_1 \cup R_2 $ and we conclude that $ R_1 \cup R_2 $ is reflexive. Consequently, we only need to prove that $ R_1 \cup R_2 $ is a Wheeler bisimulation from $ (\mathcal{A}, \le) $ to $ (\mathcal{A}, \le) $.

    Let us prove Property~1. We must prove that $ R_1 \cup R_2 $ is a bisimulation from $ \mathcal{A} $ to $ \mathcal{A}' $. In particular, $ R_1 $ and $ R_2 $ are bisimulations from $ \mathcal{A} $ to $ \mathcal{A}' $, so the conclusion follows from~\cite[Lemma B.2]{kozen2007automata}

    Let us prove Property~2. Let $ C \subseteq Q $ such that $ C $ is $ \le $-convex. We must prove that $ (R_1 \cup R_2)(C) $ is $ \le $-convex. Since $ (R_1 \cup R_2)(C) = R_1(C) \cup R_2 (C) $, we need to prove that $ R_1(C) \cup R_2 (C) $ is $ \le $-convex. If $ C = \emptyset $, then $ R_1(C) \cup R_2 (C) = \emptyset $ and we are done, so in the rest of the proof we can assume $ C \not = \emptyset $. Since $ R_1 $ and $ R_2 $ are reflexive, we have $ \emptyset \subsetneqq C \subseteq R_1(C) \cap R_2 (C) $, so by Lemma~\ref{lem:convexunion} we only need to prove that $ R_1(C) $ and $ R_2 (C) $ are $ \le $-convex. This follows from the fact that $ R_1 $ and $ R_2 $ are Wheeler bisimulations from $ \mathcal{A} $ to $ \mathcal{A} $.

     Let us prove Property~3. Let $ C' \subseteq Q $ such that $ C' $ is $ \le $-convex. We must prove that $ (R_1 \cup R_2)^{-1}(C') $ is $ \le $-convex. Since $ (R_1 \cup R_2)^{-1}(C') = R_1^{-1}(C') \cup R_2^{-1} (C') $, we need to prove that $ R_1^{-1}(C') \cup R_2^{-1} (C') $ is $ \le $-convex. The conclusion follows by arguing as we did for Property~2. \qedhere
\end{proof}

In the proof of Theorem~\ref{theor:maximumbisimulationproperties} we will use the following remark.

\begin{remark}\label{rem:exautobisimulation}
    Let $ (\mathcal{A}, \le) $ be a Wheeler NFA, with $ \mathcal{A} = (Q, E, s, F) $. From Lemma~\ref{lem:bisimequiv} we obtain the following: (i)~$ id_Q $ is a Wheeler autobisimulation on $ (\mathcal{A}, \le) $; (ii)~if $ R \subseteq Q \times Q $ is a Wheeler autobisimulation on $ (\mathcal{A}, \le) $, then $ R^{-1} $ is a Wheeler autobisimulation on $ (\mathcal{A}, \le) $; (iii)~if $ R, R' \subseteq Q \times Q $ are Wheeler autobisimulations on $ (\mathcal{A}, \le) $, then $ R' \circ R $ is a Wheeler autobisimulation on $ (\mathcal{A}, \le) $.
\end{remark}

\noindent{\textbf{Statement of Theorem~\ref{theor:maximumbisimulationproperties}.}
    Let $ (\mathcal{A}, \le) $ be a Wheeler NFA, with $ \mathcal{A} = (Q, E, s, F) $. Then, there exists a (unique) maximum Wheeler autobisimulation $ \equiv_{\mathcal{A}, \le} $ on $ (\mathcal{A}, \le) $, that is, a Wheeler autobisimulation $ \equiv_{\mathcal{A}, \le} $ on $ (\mathcal{A}, \le) $ that contains every Wheeler autobisimulation on $ (\mathcal{A}, \le) $. Moreover, $ \equiv_{\mathcal{A}, \le} $ is a $ \le $-convex equivalence relation on $ Q $.

\begin{proof}
    To prove the existence of a (unique) maximum Wheeler autobisimulation $ \equiv_{\mathcal{A}, \le} $ on $ (\mathcal{A}, \le) $, we will proceed analogously to the corresponding result for standard bisimulations~\cite{kozen2007automata}. Let $ \mathcal{S} $ be the set of all Wheeler autobisimulations on $ (\mathcal{A}, \le) $. Notice that $ \mathcal{S} $ is (i)~nonempty because $ id_Q \in \mathcal{S} $ by Remark~\ref{rem:exautobisimulation} and (ii)~finite because $ Q $ is finite and every element of $ \mathcal{S} $ is a subset of $ Q \times Q $. Define $ \equiv_{\mathcal{A}, \le} $ as $\bigcup_{R \in \mathcal{S}} R $. Since $ \mathcal{S} $ is finite and nonempty, by Lemma~\ref{lem:unionautobisimulation} we obtain that $ \equiv_{\mathcal{A}, \le} $ is a Wheeler autobisimulation on $ (\mathcal{A}, \le) $. Moreover, for every Wheeler autobisulation $ R $ on $ \mathcal{A} $ we have $ R \in \mathcal{S} $, so $ R $ is a subset of $ \equiv_{\mathcal{A}, \le} $ and we conclude that $ \equiv_{\mathcal{A}, \le} $ is the (unique) maximum Wheeler bisimulation on $ (\mathcal{A}, \le) $.

    Let us prove that $ \equiv_{\mathcal{A}, \le} $ is an equivalence relation on $ Q $. First, we know that $ \equiv_{\mathcal{A}, \le} $ is reflexive because it is an autobisimulation. Second, let us prove that $ \equiv_{\mathcal{A}, \le} $ is symmetrical. Assume that $ u \equiv_{\mathcal{A}, \le} v $ for some $ u, v \in Q $. We must prove that $ v \equiv_{\mathcal{A}, \le} u $. By the definition of $ \equiv_{\mathcal{A}, \le} $, there exists $ R \in \mathcal{S} $ such that $ u \; R \; v $, and so $ v \; R^{-1} \; u $. Since $ R \in \mathcal{S} $, by Remark~\ref{rem:exautobisimulation} we have $ R^{-1} \in \mathcal{S} $, so by the definition of $ \equiv_{\mathcal{A}, \le} $ we conclude $ v \equiv_{\mathcal{A}, \le} u $. Third, let us prove that $ \equiv_{\mathcal{A}, \le} $ is transitive. Assume that $ u \equiv_{\mathcal{A}, \le} v $ and $ v \equiv_{\mathcal{A}, \le} w $ for some $ u, v, w \in Q $. We must prove that $ u \equiv_{\mathcal{A}, \le} w $. By the definition of $ \equiv_{\mathcal{A}, \le} $, there exists $ R, R' \in \mathcal{S} $ such that $ u \; R \; v $ and $ v \; R' \; w $, and so $ u \; (R' \circ R) \; z $. Since $ R, R' \in \mathcal{S} $, by Remark~\ref{rem:exautobisimulation} we have $ R' \circ R \in \mathcal{S} $, so by the definition of $ \equiv_{\mathcal{A}, \le} $ we conclude $ u \equiv_{\mathcal{A}, \le} w $.

    Let us prove that the equivalence relation $ \equiv_{\mathcal{A}, \le} $ is $ \le $-convex. Fix $ u \in Q $. We must prove that $ [u]_{ \equiv_{\mathcal{A}, \le}} $ is $ \le $-convex. Let $ v, w, x \in Q $ such that $ v \le w \le x $ and $ v, x \in [u]_{\equiv_{\mathcal{A}, \le}} $. We must prove that $ w \in [u]_{ \equiv_{\mathcal{A}, \le}} $. Since $ v, x \in [u]_{\equiv_{\mathcal{A}, \le}} $, we have $ v, x \in \{u' \in Q \;|\; u \equiv_{\mathcal{A}, \le} u' \} $. We know that $ \equiv_{\mathcal{A}, \le} $ is a Wheeler autobisimulation on $ (\mathcal{A}, \le) $, so the set $ \{u' \in Q \;|\; u \equiv_{\mathcal{A}, \le} u' \} $ is $ \le $-convex by Property~2. Since $ v \le w \le x $ and $ v, x \in \{u' \in Q \;|\; u \equiv_{\mathcal{A}, \le} u' \} $, we obtain that $ w \in \{u' \in Q \;|\; u \equiv_{\mathcal{A}, \le} u' \} $, so $ u \equiv_{\mathcal{A}, \le} w $ and we conclude that $ w \in [u]_{\equiv_{\mathcal{A}, \le}} $. \qedhere    
\end{proof}

\subsection{The Quotient Wheeler NFA}

Let $ (\mathcal{A}, \le) $ be a Wheeler NFA. We now show how to build a quotient Wheeler NFA $ (\mathcal{A}_*, \le_*) $.

\medskip

\noindent{\textbf{Statement of Lemma~\ref{lem:quotient}.}
    Let $ (\mathcal{A}, \le) $ be a Wheeler NFA, with $ \mathcal{A} = (Q, E, s, F) $. Let $ Q_* = \mathcal{P}_{\equiv_{\mathcal{A}, \le}} $, and let $ \le_* $ be the relation from $ Q_* $ to $ Q_* $ such that, for every $ U, V \in Q_* $, we have $ U \le_* V $ if and only if there exist $ u \in U $ and $ v \in V $ such that $ u \le v $.
    \begin{enumerate}
        \item Let $ U, V \in Q_* $ be such that $ U <_* V $. Then, for every $ u \in U $ and for every $ v \in V $ we have $ u < v $.
        \item $ \le_* $ is a total order on $ Q_* $.
        \item Let $ \mathcal{A}_* = (Q_*, E_*, s_*, F_*) $, where:
        \begin{itemize}
            \item $ E_* = \{(U, V, a) \in Q_* \times Q_* \times \Sigma \;|\; (\exists u \in U)(\exists v \in V)((u, v, a) \in E) \} $.
            \item $ s_* = [s]_{\equiv_{\mathcal{A}, \le}} $.
            \item $ F_* = \{U \in Q_* \;|\; (\exists u \in U)(u \in F) \} $.
        \end{itemize}
        Then, $ (\mathcal{A}_*, \le_*) $ is a Wheeler NFA.
        \item If $ \mathcal{A} $ is a DFA, then $ \mathcal{A}_* $ is a DFA.
    \end{enumerate}

\begin{proof}
    \begin{enumerate}
    \item Fix $ u \in U $ and $ v \in V $. We must prove that $ u < v $. We know that $ U <_* V $ or, equivalently, $ (U \le_* V) \land (U \not = V) $. From $ U \le_* V $ we obtain that there exist $ u' \in U $ and $ v' \in V $ such that $ u' \le v' $. By Theorem~\ref{theor:maximumbisimulationproperties} $ \equiv_{\mathcal{A}, \le} $ is $ \le $-convex, so $ U $ and $ V $ are $ \le $-convex. 

    Let us prove that $ u < v' $. Suppose for the sake of contradiction that $ v' \le u $. From  $ u' \le v' \le u $ and $ u, u' \in U $ we obtain $ v' \in U $ because $ U $ is $ \le $-convex, so $ U = [v']_{\equiv_{\mathcal{A}, \le}} = V $, which is a contradiction.

    Let us prove that $ u < v $. Suppose for the sake of contradiction that $ v \le u $. From $ v \le u < v' $ and $ v, v' \in V $ we obtain $ u \in V $ because $ V $ is $ \le $-convex, so $ U = [u]_{\equiv_{\mathcal{A}, \le}} = V $, which is a contradiction.
    
    \item Let us prove that $ \le_* $ is reflexive. For every $ U \in Q_* $ we have $ U \le_* U $ because if we pick any $ u \in U $ we have $ u \le u $.
    
    Let us prove that $ \le_* $ is antisymmetric. Assume that for some $ U, V \in Q_* $ we have $ U <_* V $ (or equivalently, $ (U \le_* V) \land (U \not = V) $). We need to prove that we have $ V \not \le_* U $. To this end, we only need to prove that for every $ u \in U $ and for every $ v \in V $ we have $ v \not \le u $. Fix $ u \in U $ and $ v \in V $. From $ U \le_* V $ and point 1 we obtain $ u \le v $, and from $ U \not = V $ we obtain $ u \not = v $. By the antisymmetry of $ \le $ we conclude $ v \not \le u $.
    
    Let us prove that $ \le_* $ is transitive. Assume that for some $ U, V, W \in Q_* $ we have $ U \le_* V $ and $ V \le_* W $. We must prove that $ U \le_* W $. If $ U = V $ the conclusion follows from $ V \le_* W $, and if $ V = W $ the conclusion follows from $ U \le_* V $, so we can assume $ U \not = V $ and $ V \not = W $. Pick $ u \in U $, $ v \in V $ and $ w \in W $. By point 1, we have $ u \le v $ and $ v \le w $, so we obtain $ u \le w $ by the transitivity of $ \le $ and we conclude $ U \le_* W $ by the definition of $ \le_* $.
    
    Let us prove that $ \le_* $ is total. Pick $ U, V \in Q_* $. We must prove that $ (U \le_* V) \lor (V \le_* U) $. Pick $ u \in U $ and $ v \in V $. We have $ (u \le v) \lor (v \le u) $ because $ \le $ is total, so we conclude $ (U \le_* V) \lor (V \le_* U) $.
    
    \item First, let us prove that $ \mathcal{A}_* $ is a well-defined NFA. Note that $ Q_* $ is finite (because $ Q $ is finite) and $ [s]_{\equiv_{\mathcal{A}, \le}} \in \mathcal{P}_{\equiv_{\mathcal{A}, \le}} $ (because $ s \in Q $). We only need to prove that every $ U \in Q^* $ is reachable in $ \mathcal{A}_* $ and co-reachable in $ \mathcal{A}_* $. Fix $ U \in Q $.
    
    \begin{itemize}
        \item Let us prove that $ U $ is reachable in $ \mathcal{A}_* $. Pick $ u \in U $. Then, $ u $ is reachable in $ \mathcal{A} $, so there exist $ t \ge 1 $, $ u_1, u_2, \dots, u_t \in Q $ and $ a_1, a_2, \dots, a_{t - 1} \in \Sigma $ such that (i)~$ u_1 = s $, (ii)~$ u_t = u $ and (iii)~$ (u_i, u_{i + 1}, a_i) \in E $ for $ 1 \le i \le t - 1 $. Define $ U_i = [u_i]_{\equiv_{\mathcal{A}, \le}} $ for $ 1 \le i \le t $. Then, we have (i)~$ U_1 = s^* $, (ii)~$ U_t = U $ and (iii)~$ (U_i, U_{i + 1}, a_i) \in E_* $ for $ 1 \le i \le t - 1 $, so $ U $ is reachable in $ \mathcal{A}_* $.
        \item Let us prove that $ U $ is co-reachable in $ \mathcal{A}_* $. Pick $ u \in U $. Then, $ u $ is co-reachable in $ \mathcal{A} $, so there exist $ t \ge 1 $, $ u_1, u_2, \dots, u_t \in Q $ and $ a_1, a_2, \dots, a_{t - 1} \in \Sigma $ such that (i)~$ u_1 = u $, (ii)~$ u_t \in F $ and (iii)~$ (u_i, u_{i + 1}, a_i) \in E $ for $ 1 \le i \le t - 1 $. Define $ U_i = [u_i]_{\equiv_{\mathcal{A}, \le}} $ for $ 1 \le i \le t $. Then, we have (i)~$ U_1 = U $, (ii)~$ U_t \in F_* $ and (iii)~$ (U_i, U_{i + 1}, a_i) \in E_* $ for $ 1 \le i \le t - 1 $, so $ U $ is co-reachable in $ \mathcal{A}_* $.
    \end{itemize}

    Now, let us prove that $ (\mathcal{A}_*, \le_*) $ is a Wheeler NFA. By point 2, $ \le_* $ is a total order on $ Q_* $, so we only need to prove Axiom~1, Axiom~2 and Axiom~3.

    \begin{itemize}
        \item Let us prove Axiom~1. Let $ U \in Q_* $. We must prove that $ S \le_* U $. Pick any $ u \in U $. We have $ s \in S $, and we have $ s \le u $ because $ \le $ is a Wheeler order on $ \mathcal{A} $, so we conclude $ S \le_* U $.
        \item Let us prove Axiom~2. Let $ U, V, U', V' \in Q_* $ and $ a, b \in \Sigma $ be such that we have $ (U, V, a), (U', V', b) \in E_* $ and $ V <_* V' $. We must prove that $ a \preceq b $. By the definition of $ E_* $, there exist $ u \in U $, $ v \in V $, $ u' \in U' $ and $ v' \in V' $ such that $ (u, v, a), (u', v', b) \in E $. We cannot have $ v' \le v $ because we would conclude $ V' \le_* V $, which is a contradiction because $ V <_* V' $ and $ \le_* $ is a total order by point 2. Since $ \le $ is also a total order, we obtain $ v < v' $. We know that $ \le $ is a Wheeler order on $ \mathcal{A} $, so from $ (u, v, a), (u', v', b) \in E $ and $ v < v' $ we conclude $ a \preceq b $.
        \item Let us prove Axiom~3. Let $ U, V, U', V' \in Q_* $ and $ a \in \Sigma $ be such that we have $ (U, V, a), (U', V', a) \in E_* $ and $ V <_* V' $. We must prove that $ U \le_* U' $. By the definition of $ E_* $, there exist $ u \in U $, $ v \in V $, $ u' \in U' $ and $ v' \in V' $ such that $ (u, v, a), (u', v', a) \in E $. As in the previous point, we obtain $ v < v' $. We know that $ \le $ is a Wheeler order on $ \mathcal{A} $, so from $ (u, v, a), (u', v', a) \in E $ and $ v < v' $ we conclude $ u \le u' $, which implies $ U \le_* U' $.
    \end{itemize}

    \item Let $ U, V_1, V_2 \in Q_* $ and $ a \in \Sigma $ such that $ (U, V_1, a), (U, V_2, a) \in E_* $. We must prove that $ V_1 = V_2 $. By the definition of $ E_* $, there exist $ u_1, u_2 \in U $, $ v_1 \in V_1 $ and $ v_2 \in V_2 $ such that $ (u_1, v_1, a), (u_2, v_2, a) \in E $. From $ u_1, u_2 \in U $ we obtain $ u_1 \equiv_{\mathcal{A}, \le} u_2 $. We know that $ \equiv_{\mathcal{A}, \le} $ is a bisimulation from $ \mathcal{A} $ to $ \mathcal{A} $, so from $ u_1 \equiv_{\mathcal{A}, \le} u_2 $ and $ (u_1, v_1, a) \in E $ we obtain that there exists $ \overline{v}_2 \in Q $ such that $ v_1 \equiv_{\mathcal{A}, \le} \overline{v}_2 $ and $ (u_2, \overline{v}_2, a) \in E $. Since $ \mathcal{A} $ is a DFA, from $ (u_2, v_2, a) \in E $ and $ (u_2, \overline{v}_2, a) \in E $ we obtain $ v_2 = \overline{v}_2 $, so $ v_1 \equiv_{\mathcal{A}, \le} v_2 $. From $ v_1 \in V_1 $, $ v_2 \in V_2 $ and $ v_1 \equiv_{\mathcal{A}, \le} v_2 $ we conclude $ V_1 = V_2 $. \qedhere
\end{enumerate}    
\end{proof}

The following lemma shows that, if we quotient a Wheeler NFA, the resulting Wheeler NFA is in the same $ \sim_{\mathfrak{C}} $-equivalence class.

\begin{lemma}\label{lem:bisimulationtoquotient}
    Let $ (\mathcal{A}, \le) $ be a Wheeler NFA, with $ \mathcal{A} = (Q, E, s, F) $. Consider the Wheeler NFA $ (\mathcal{A}_*, \le_*) $ of Lemma~\ref{lem:quotient}, with $ \mathcal{A}_* = (Q_*, E_*, s_*, F_*) $. Let $ R_{\mathcal{A}, \le} \subseteq Q \times Q_* $ be the relation defined by the pairs $ u \; R_{\mathcal{A}, \le} \; [u]_{\equiv_{\mathcal{A}, \le}} $, for every $ u \in Q $. Then, $ R_{\mathcal{A}, \le} $ is a Wheeler bisimulation from $ (\mathcal{A}, \le) $ to $ (\mathcal{A}_*, \le_*) $.
\end{lemma}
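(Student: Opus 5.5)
I will verify the three properties of Definition~\ref{def:wheelerbisimulation} directly for $R = R_{\mathcal{A}, \le}$. Write $[u]$ for $[u]_{\equiv_{\mathcal{A}, \le}}$. Note that $R$ is a function, $R(u) = [u]$. Its inverse maps each $U \in Q_*$ to $R^{-1}(U) = U$, viewed as a subset of $Q$. So $R^{-1}(\mathcal{C}) = \bigcup_{U \in \mathcal{C}} U$ for every $\mathcal{C} \subseteq Q_*$.

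\textbf{Property~1.} This is the standard fact that the projection onto the quotient by an equivalence autobisimulation is a bisimulation. I check the four clauses of Definition~\ref{def:bisimulation} in turn.
\begin{itemize}
\item Forward clause: if $(u, v, a) \in E$, then $([u], [v], a) \in E_*$ by the definition of $E_*$, and $v \; R \; [v]$.
\item Backward clause: suppose $([u], V, a) \in E_*$. Then there are $u_1 \in [u]$ and $v_1 \in V$ with $(u_1, v_1, a) \in E$. Since $u_1 \equiv_{\mathcal{A}, \le} u$ and $\equiv_{\mathcal{A}, \le}$ is an autobisimulation that is symmetric (Theorem~\ref{theor:maximumbisimulationproperties}), there is $v$ with $(u, v, a) \in E$ and $v_1 \equiv_{\mathcal{A}, \le} v$. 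Hence $[v] = V$, i.e., $v \; R \; V$.
\item Initial states: $s \; R \; s_*$ holds by the definition of $s_*$.
\item Final states: if $u \in F$, then $[u] \in F_*$. Conversely, if $[u] \in F_*$, some $u_1 \in [u]$ lies in $F$. Since $u_1 \equiv_{\mathcal{A}, \le} u$ and bisimulations preserve finality, $u \in F$.
\end{itemize}

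\textbf{Property~3.} Let $\mathcal{C} \subseteq Q_*$ be $\le_*$-convex, and take $u \le v \le w$ with $u, w \in R^{-1}(\mathcal{C})$. Then $[u], [w] \in \mathcal{C}$. Since $u \le v$, the definition of $\le_*$ gives $[u] \le_* [v]$, and similarly $[v] \le_* [w]$. By convexity of $\mathcal{C}$, we get $[v] \in \mathcal{C}$, so $v \in R^{-1}(\mathcal{C})$.

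\textbf{Property~2.} Let $C \subseteq Q$ be $\le$-convex, and suppose $U \le_* V \le_* W$ with $U, W \in R(C)$. I pick $u \in C \cap U$ and $w \in C \cap W$; the cases $V = U$ and $V = W$ are trivial, so assume $U <_* V <_* W$. Pick any $v \in V$. By Lemma~\ref{lem:quotient}(1), $u < v < w$. Convexity of $C$ then gives $v \in C$, so $V = [v] \in R(C)$.

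The only step needing care is Property~2. It is where the convexity of the classes (Theorem~\ref{theor:maximumbisimulationproperties}) enters, through Lemma~\ref{lem:quotient}(1): every element of $U$ lies strictly below every element of $V$. Without that, an arbitrary representative $v$ of $V$ might not lie between $u$ and $w$. Everything else is routine unfolding of definitions.
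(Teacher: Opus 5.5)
Your proof is correct and takes essentially the same route as the paper: you check the four bisimulation clauses by unfolding the definitions of $E_*$, $s_*$, $F_*$ and using that $\equiv_{\mathcal{A}, \le}$ is an autobisimulation, and you verify Properties~2 and~3 exactly as the paper does. The differences are cosmetic. In Property~2 you cite Lemma~\ref{lem:quotient}(1) to get $u < v < w$, whereas the paper re-derives $u \le v \le w$ directly from the totality and antisymmetry of $\le_*$. In the backward clause you use symmetry together with the forward clause, whereas the paper applies the second bisimulation clause directly.
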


\begin{proof}
    Let us prove that $ R_{\mathcal{A}, \le} $ is a Wheeler bisimulation from $ (\mathcal{A}, \le) $ to $ (\mathcal{A}_*, \le_*) $.

    Let us prove Property~1. We must prove that $ R_{\mathcal{A}, \le} $ is a bisimulation from $ \mathcal{A} $ to $ \mathcal{A}_* $. We will argue similarly to~\cite[Lemma B.9]{kozen2007automata}.
    
    \begin{itemize}
        \item Let $ u, v \in Q $, $ U \in Q_* $ and $ a \in \Sigma $ such that $ u \; R_{\mathcal{A}, \le} \; U $ and $ (u, v, a) \in E $. We must prove that there exists $ V \in Q_* $ such that $ v \; R_{\mathcal{A}, \le} \; V $ and $ (U, V, a) \in E_* $. Since $ u \; R_{\mathcal{A}, \le} \; U $, by the definition of $ R_{\mathcal{A}, \le} $ we have $ u \in U $. Let $ V = [v]_{\equiv_{\mathcal{A}, \le}} $. By the definition of $ R_{\mathcal{A}, \le} $ we have $ v \; R_{\mathcal{A}, \le} \; V $. Moreover, from $ (u, v, a) \in E $, $ u \in U $ and $ v \in V $ we conclude that $ (U, V, a) \in E_* $ by the definition of $ E_* $.
        \item Let $ u \in Q $, $ U, V \in Q_* $ and $ a \in \Sigma $ such that $ u \; R_{\mathcal{A}, \le} \; U $ and $ (U, V, a) \in E_* $. We must prove that there exists $ v \in Q $ such that $ v \; R_{\mathcal{A}, \le} \; V $ and $ (u, v, a) \in E $. Since $ u \; R_{\mathcal{A}, \le} \; U $, by the definition of $ R_{\mathcal{A}, \le} $ we have $ u \in U $. Since $ (U, V, a) \in E_* $, by the definition of $ E_* $ there exist $ u_1 \in U $ and $ v_1 \in V $ such that $ (u_1, v_1, a) \in E $. From $ u, u_1 \in U $ we obtain $ u \equiv_{\mathcal{A}, \le} u_1 $. We know that $ \equiv_{\mathcal{A}, \le} $ is a bisimulation from $ \mathcal{A} $ to $ \mathcal{A} $, so from $ u \equiv_{\mathcal{A}, \le} u_1 $ and $ (u_1, v_1, a) \in E $ we obtain that there exists $ v \in Q $ such that $ v \equiv_{\mathcal{A}, \le} v_1 $ and $ (u, v, a) \in E $. From $ v \equiv_{\mathcal{A}, \le} v_1 $ and $ v_1 \in V $ we obtain $ v \in V $. By the definition of $ R_{\mathcal{A}, \le} $, we have $ v \; R_{\mathcal{A}, \le} \; V $.
        \item We have $ s \; R_{\mathcal{A}, \le} \; s_* $ by the definition of $ R_{\mathcal{A}, \le} $ because by definition $ s_* = [s]_{\equiv_{\mathcal{A}, \le}} $.
        \item Let $ u \in Q $ and $ U \in U_* $ such that $ u \; R_{\mathcal{A}, \le} \; U $. We must prove that $ u \in F $ if and only if $ U \in F_* $. By the definition of $ R_{\mathcal{A}, \le} $ we have $ u \in U $. This means that, if $ u \in F $, then $ U \in F_* $ by the definition of $ F_* $. Conversely, assume that $ U \in F_* $. By the definition of $ F_* $, there exists $ u_1 \in U $ such that $ u_1 \in F $. From $ u_1, u \in U $ we obtain $ u_1 \equiv_{\mathcal{A}, \le} u $. We know that $ \equiv_{\mathcal{A}, \le} $ is a bisimulation from $ \mathcal{A} $ to $ \mathcal{A} $, so from $ u_1 \equiv_{\mathcal{A}, \le} u $ and $ u_1 \in F $ we conclude $ u \in F $.
    \end{itemize}
    
    Let us prove Property~2. Assume that $ C \subseteq Q $ is $ \le $-convex. We must prove that $ R_{\mathcal{A}, \le}(C) $ is $ \le_* $-convex. Let $ U, V, W \in Q_* $ such that $ U \le_* V \le_* W $ and $ U, W \in R_{\mathcal{A}, \le}(C) $. We must prove that $ V \in R_{\mathcal{A}, \le}(C) $. If $ (V = U) \lor (V = W) $ the conclusion is immediate, so we can assume $ U \not = V \not = W $. From $ U, W \in R_{\mathcal{A}, \le}(C) $ we obtain that there exist $ u, w \in C $ such that $ u \; R_{\mathcal{A}, \le} \; U $ and $ w \; R_{\mathcal{A}, \le} \; W $. From the definition of $ R_{\mathcal{A}, \le} $ we obtain $ u \in U $ and $ w \in W $. Pick $ v \in V $. Notice that we cannot have $ v < u $; otherwise, we would obtain $ V \le_* U $ by the definition of $ \le_* $, so from $ U \le_* V $ and $ V \le_* U $ we would conclude $ U = V $ because $ \le_* $ is a total order, which contradicts $ U \not = V $. Since $ \le $ is a total order, we must have $ u \le v $. Analogously, one shows that $ v \le w $. We know that $ C $ is $ \le $-convex, so from $ u \le v \le w $ and $ u, w \in C $ we obtain $ v \in C $. By the definition of $ R_{\mathcal{A}, \le} $ we have $ v \; R_{\mathcal{A}, \le} \; V $, so we conclude that $ V \in R_{\mathcal{A}, \le}(C) $.

    Let us prove Property~3. Assume that $ C_* \subseteq Q_* $ is $ \le_* $-convex. We must prove that $ R_{\mathcal{A}, \le}^{-1}(C_*) $ is $ \le $-convex. Let $ u, v, w \in Q $ such that $ u \le v \le w $ and $ u, w \in R_{\mathcal{A}, \le}^{-1}(C_*) $. We must prove that $ v \in R_{\mathcal{A}, \le}^{-1}(C_*) $. From $ u, w \in R_{\mathcal{A}, \le}^{-1}(C_*) $ we obtain that there exist $ U, W \in C_* $ such that $ u \; R_{\mathcal{A}, \le} \; U $ and $ w \; R_{\mathcal{A}, \le} \; W $. From the definition of $ R_{\mathcal{A}, \le} $ we obtain $ u \in U $ and $ w \in W $. Let $ V = [v]_{\equiv_{\mathcal{A}, \le}} $. From $ u \le v \le w $ we obtain $ U \le_* V \le_* W $ by the definition of $ \le_* $. We know that $ C_* $ is $ \le_* $-convex, so from $ U \le_* V \le_* W $ and $ U, W \in C_* $ we obtain $ V \in C_* $. By the definition of $ R_{\mathcal{A}, \le} $ we have $ v \; R_{\mathcal{A}, \le} \; V $, so we conclude that $ v \in R_{\mathcal{A}, \le}^{-1}(C_*) $. \qedhere
\end{proof}

In the next lemma, we show a crucial result: the identity relation is the \emph{unique} Wheeler autobisimulation on a quotient automaton. Intuitively, this is true by the maximality of $ \equiv_{\mathcal{A}, \le} $ because the quotient automaton of $ (\mathcal{A}, \le) $ is defined starting from $ \equiv_{\mathcal{A}, \le} $ (see Lemma~\ref{lem:quotient}).

\begin{lemma}\label{lem:uniquewheelerautobisimulation}
    Let $ (\mathcal{A}, \le) $ be a Wheeler NFA. Consider the Wheeler NFA $ (\mathcal{A}_*, \le_*) $ of Lemma~\ref{lem:quotient}, where $ \mathcal{A}_* = (Q_*, E_*, s_*, F_*) $. Then, $ id_{Q_*} $ is the unique Wheeler autobisimulation on $ (\mathcal{A}_*, \le_*) $. In particular, $ \equiv_{\mathcal{A}_*, \le_*} $ is equal to $ id_{Q_*} $.
\end{lemma}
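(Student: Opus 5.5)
The plan is to take an arbitrary Wheeler autobisimulation $ R_* $ on $ (\mathcal{A}_*, \le_*) $, pull it back to $ (\mathcal{A}, \le) $ through the Wheeler bisimulation $ R_{\mathcal{A}, \le} $ of Lemma~\ref{lem:bisimulationtoquotient}, and then use the maximality of $ \equiv_{\mathcal{A}, \le} $ to force $ R_* \subseteq id_{Q_*} $. Write $ R = R_{\mathcal{A}, \le} $ for short. Consider
\[ S = R^{-1} \circ R_* \circ R \subseteq Q \times Q. \]

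First, I show that $ S $ is a Wheeler autobisimulation on $ (\mathcal{A}, \le) $. By Lemma~\ref{lem:bisimulationtoquotient}, $ R $ is a Wheeler bisimulation from $ (\mathcal{A}, \le) $ to $ (\mathcal{A}_*, \le_*) $. By Lemma~\ref{lem:bisimequiv} (inverse and composition), $ R^{-1} $ is a Wheeler bisimulation in the reverse direction, and then $ S $ is a Wheeler bisimulation from $ (\mathcal{A}, \le) $ to itself. For reflexivity, fix $ u \in Q $. We have $ u \; R \; [u]_{\equiv_{\mathcal{A}, \le}} $, and $ R_* $ is reflexive, so $ [u] \; R_* \; [u] $. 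Finally $ [u] \; R^{-1} \; u $, hence $ u \; S \; u $.

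Next, by Theorem~\ref{theor:maximumbisimulationproperties}, $ S \subseteq \; \equiv_{\mathcal{A}, \le} $. Now take $ U, V \in Q_* $ with $ U \; R_* \; V $, and pick $ u \in U $ and $ v \in V $. Then $ u \; R \; U $, $ U \; R_* \; V $ and $ V \; R^{-1} \; v $, so $ u \; S \; v $. It follows that $ u \equiv_{\mathcal{A}, \le} v $, that is, $ U = [u] = [v] = V $. Hence $ R_* \subseteq id_{Q_*} $. Since $ R_* $ is reflexive, $ R_* = id_{Q_*} $.

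For existence, $ id_{Q_*} $ is itself a Wheeler autobisimulation by Lemma~\ref{lem:bisimequiv}(1) (it is trivially reflexive), so it is the unique one. The ``in particular'' clause follows because $ \equiv_{\mathcal{A}_*, \le_*} $ is a Wheeler autobisimulation on $ (\mathcal{A}_*, \le_*) $ by Theorem~\ref{theor:maximumbisimulationproperties}, and therefore must equal $ id_{Q_*} $.

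The only point needing care is verifying that the composite $ S $ is reflexive and is a Wheeler bisimulation. Both follow from Lemma~\ref{lem:bisimequiv} together with the fact that $ R $ is total on $ Q $ (every $ u $ is related to its own class), so I do not expect a real obstacle.
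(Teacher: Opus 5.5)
Your proposal is correct and matches the paper's proof. The paper also pulls an arbitrary Wheeler autobisimulation back to $Q$ via $R_{\mathcal{A}, \le}^{-1} \circ R \circ R_{\mathcal{A}, \le}$, uses Lemma~\ref{lem:bisimulationtoquotient}, Lemma~\ref{lem:bisimequiv} and the reflexivity of $R$ to get a Wheeler autobisimulation on $(\mathcal{A}, \le)$, and concludes $R \subseteq id_{Q_*}$ from the maximality of $\equiv_{\mathcal{A}, \le}$.
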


\begin{proof}
    By Remark~\ref{rem:exautobisimulation}, $ id_{Q_*} $ is a Wheeler autobisimulation on $ (\mathcal{A}_*, \le_*) $. Now, let $ R \subseteq Q_* \times Q_* $ be any Wheeler autobisimulation on $ (\mathcal{A}_*, \le_*) $. We must prove that $ R = id_{Q_*} $. Since $ R $ is an autobisimulation, $ R $ is reflexive, so we have $ id_{Q_*} \subseteq R $. We are only left with proving that $ R \subseteq id_{Q_*} $. Assume that $ U, V \in Q_* $ satisfy $ U \; R \; V $. We must prove that $ U = V $. Pick $ u \in U $ and $ v \in V $, and consider the relation $ R' = R_{\mathcal{A}, \le}^{-1} \circ R \circ R_{\mathcal{A}, \le} \subseteq Q \times Q $, where $ R_{\mathcal{A}, \le} \subseteq Q \times Q_* $ is defined in Lemma~\ref{lem:bisimulationtoquotient}. By the definition of $ R_{\mathcal{A}, \le} $, we have $ u \; R_{\mathcal{A}, \le} \; U $ and $ V \; R_{\mathcal{A}, \le}^{-1} \;v $, so $ u \; R' v $. By Lemma~\ref{lem:bisimulationtoquotient} and Lemma~\ref{lem:bisimequiv}, we know that $ R' $ is a bisimulation from $ (\mathcal{A}, \le) $ to $ (\mathcal{A}, \le) $. Moreover, for every $ w \in Q $ we have $ w \; R' \; w $ because $ w \; R_{\mathcal{A}, \le} \; W $, $ W \; R \; W $ and $ W \; R_{\mathcal{A}, \le}^{-1} w $, where $ W = [w]_{{\equiv_{\mathcal{A}, \le}}} $. This means that $ id_Q \subseteq R' $, so $ R' $ is an autobisimulation on $ (\mathcal{A}, \le) $. Since $ u \; R' \; v $, by the maximality of $ \equiv_{\mathcal{A}, \le} $ we obtain $ u \; \equiv_{\mathcal{A}, \le} \; v $, so from $ u \in U $ and $ v \in V $ we conclude $ U = V $. \qedhere
\end{proof}

\subsection{Combining the Previous Results}

We now have all the main ingredients required to prove Theorem~\ref{theorem:minimalbisimilar} and Corollary~\ref{cor:quotientisstateminimal}.

The next lemma shows that quotienting a quotient Wheeler NFA does not return a smaller Wheeler NFA.

\begin{lemma}\label{lem:stateminimalisomorphism}
    Consider a $ \sim_{\mathfrak{C}} $-equivalence class $ C $, and let $ (\mathcal{A}, \le) $ be a minimal element of $ C $. Consider the Wheeler NFA $ (\mathcal{A}_*, \le_*) $ of Lemma~\ref{lem:quotient}. Then, there exists an isomorphism from $ \mathcal{A} $ to $ \mathcal{A}_* $.
\end{lemma}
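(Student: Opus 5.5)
The plan is to show that the canonical map $R_{\mathcal{A}, \le}$ of Lemma~\ref{lem:bisimulationtoquotient}, which sends $u \in Q$ to $[u]_{\equiv_{\mathcal{A}, \le}}$, is a bijective Wheeler bisimulation. Lemma~\ref{lem:bisimulationisomorphism} then turns it into an isomorphism from $\mathcal{A}$ to $\mathcal{A}_*$. Two facts are already in hand. By Lemma~\ref{lem:bisimulationtoquotient}, $R_{\mathcal{A}, \le}$ is a Wheeler bisimulation from $(\mathcal{A}, \le)$ to $(\mathcal{A}_*, \le_*)$. By its definition it is a function from $Q$ to $Q_*$, and it is surjective, since every class $U \in Q_* = \mathcal{P}_{\equiv_{\mathcal{A}, \le}}$ is nonempty and equals $[u]_{\equiv_{\mathcal{A}, \le}}$ for any $u \in U$. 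What remains is injectivity.

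Injectivity comes from a counting argument using minimality. Since $R_{\mathcal{A}, \le}$ is a Wheeler bisimulation, $(\mathcal{A}_*, \le_*) \sim_{\mathfrak{C}} (\mathcal{A}, \le)$, and Lemma~\ref{lem:quotient} guarantees that $(\mathcal{A}_*, \le_*)$ is a genuine Wheeler NFA. Hence $(\mathcal{A}_*, \le_*)$ belongs to the same class $C$. Because $(\mathcal{A}, \le)$ is a minimal element of $C$, we get $|Q| \le |Q_*|$. On the other hand, $Q_*$ is a partition of $Q$, so $|Q_*| \le |Q|$. Therefore $|Q_*| = |Q|$. A surjective function between finite sets of equal size is a bijection, so every $\equiv_{\mathcal{A}, \le}$-class is a singleton and $R_{\mathcal{A}, \le}$ is bijective.

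Finally, I apply Lemma~\ref{lem:bisimulationisomorphism} to the bijective Wheeler bisimulation $R_{\mathcal{A}, \le}$ to conclude that it is an isomorphism from $\mathcal{A}$ to $\mathcal{A}_*$. I expect essentially no obstacle here. The only point needing care is checking that $(\mathcal{A}_*, \le_*)$ really lies in $C$, so that minimality may be invoked. This needs the quotient to be a well-defined Wheeler NFA, with all states reachable and co-reachable, which Lemma~\ref{lem:quotient} provides, and it needs $\sim_{\mathfrak{C}}$ to be the relation defining the classes, which Lemma~\ref{lem:bisimequiv} confirms is an equivalence relation.
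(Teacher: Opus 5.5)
Your proposal is correct and follows essentially the same route as the paper. Both place $(\mathcal{A}_*, \le_*)$ in $C$ via Lemma~\ref{lem:bisimulationtoquotient}, squeeze $|Q_*| = |Q|$ between minimality and the partition bound, and conclude that the map $u \mapsto [u]_{\equiv_{\mathcal{A}, \le}}$ is a bijection. The paper then reads off the isomorphism directly from the definition of $\mathcal{A}_*$ with singleton classes, whereas you route the last step through Lemma~\ref{lem:bisimulationisomorphism}, which works equally well.
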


\begin{proof}
    Let $ \mathcal{A} = (Q, E, s, F) $ and $ \mathcal{A}_* = (Q_*, E_*, s_*, F_*) $. By Lemma~\ref{lem:bisimulationtoquotient}, $ (\mathcal{A}_*, \le_*) $ is an element of $ C $. By the definition of $ \mathcal{A}_* $, we have $ Q_* = \mathcal{P}_{\equiv_{\mathcal{A}, \le}} $. Since $ Q_* $ is a partition of $ Q $, we have $ |Q_*| \le |Q| $. We know that $ (\mathcal{A}, \le) $ is a minimal element of $ C $ and $ (\mathcal{A}_*, \le_*) $ is an element of $ C $, so we have $ |Q_*| = |Q| $, or equivalently $ |\mathcal{P}_{\equiv_{\mathcal{A}, \le}}| = |Q| $. This means that $ \equiv_{\mathcal{A}, \le} $ is equal to $ id_Q $. By the definition of $ \mathcal{A}_* $, if $ \phi $ is the function from $ Q $ to $ Q_* $ such that $ \phi(u) = [u]_{\equiv_{\mathcal{A}, \le}} $ for every $ u \in Q $, then $ \phi $ is an isomorphism from $ \mathcal{A} $ to $ \mathcal{A}_* $. \qedhere
\end{proof}

The next lemma shows that Wheeler NFAs in the same $ \sim_{\mathfrak{C}} $-equivalence class must have the same quotient Wheeler NFA (up to isomorphism).

\medskip

\noindent{\textbf{Statement of Lemma~\ref{lem:quotientautomotaisomorphic}.}
    Consider a $ \sim_{\mathfrak{C}} $-equivalence class $ C $, let $ (\mathcal{A}, \le) $ and $ (\mathcal{A}', \le') $ be two elements of $ C $, and let $ (\mathcal{A}_*, \le_*) $ and $ (\mathcal{A}'_*, \le'_*) $ be the corresponding Wheeler NFAs built in Lemma~\ref{lem:quotient}, with $ \mathcal{A}_* = (Q_*, E_*, s_*, F_*) $ and $ \mathcal{A}'_* = (Q'_*, E'_*, s'_*, F'_*) $. Let $ Q_* = \{U_1, U_2, \dots, U_n \} $ and $ Q'_* = \{U'_1, U'_2, \dots, U'_{n'} \} $, where $ U_1 <_* U_2 <_* \dots <_* U_n $ and $ U'_1 <'_* U'_2 <'_* \dots <'_* U'_{n'} $. Then, $ n = n' $. Moreover, let $ \phi $ be the function from $ Q_* $ to $ Q'_* $ such that $ \phi(U_i) = U'_i $ for every $ 1 \le i \le n $. Then, (i)~$ \phi $ is a Wheeler bisimulation from $ (\mathcal{A}_*, \le_*) $ to $ (\mathcal{A}'_*, \le'_*) $, (ii)~$ \phi $ is the \emph{unique} Wheeler bisimulation from $ (\mathcal{A}_*, \le_*) $ to $ (\mathcal{A}'_*, \le'_*) $, and (iii)~$ \phi $ is an isomorphism from $ \mathcal{A}_* $ to $ \mathcal{A}'_* $.
\begin{proof}
    By Lemma~\ref{lem:bisimulationtoquotient}, $ (\mathcal{A}_*, \le_*) $ and $ (\mathcal{A}'_*, \le'_*) $ are elements of $ C $ because $ (\mathcal{A}, \le) $ and $ (\mathcal{A}', \le') $ are elements of $ C $. This means that there exists a Wheeler bisimulation $ R \subseteq Q_* \times Q'_* $ from $ (\mathcal{A}_*, \le_*) $ to $ (\mathcal{A}'_*, \le'_*) $. The conclusion will follow if we prove that $ R $ is a bijective function from $ Q_* $ to $ Q_*' $ (which implies that $ R $ is an isomorphism from $ \mathcal{A}_* $ to $ \mathcal{A}'_* $ by Lemma~\ref{lem:bisimulationisomorphism}, and in particular $ n = n' $) and that we must have $ R(U_i) = U'_i $ for every $ 1 \le i \le n $.
    
    By Lemma~\ref{lem:compositionisautobisimulation}, $ R^{-1} \circ R $ is a Wheeler autobisimulation on $ (\mathcal{A}_*, \le_*) $ and $ R \circ R^{-1} $ is a Wheeler autobisimulation on $ (\mathcal{A}'_*, \le'_*) $. By Lemma~\ref{lem:uniquewheelerautobisimulation}, we have $ R^{-1} \circ R = id_{Q_*} $ and $ R \circ R^{-1} = id_{Q_*'} $.

    Let us prove that $ R $ is a function. We must prove that $ |R(U)| = 1 $ for every $ U \in Q_* $. Fix $ U \in Q_* $. By Lemma~\ref{lem:R(u)nonempty}, we have $ |R(U)| \ge 1 $, so we only need to prove that $ |R(U)| \le 1 $. Assume that $ U'_1, U'_2 \in Q'_* $ satisfy $ U \; R \; U'_1 $ and $ U \; R \; U'_2 $. We must prove that $ U'_1 = U'_2 $. We have $ U'_1 \; R^{-1} \; U $ and $ U'_1 \; (R \circ R^{-1}) \; U'_2 $, so $ U'_1 = U'_2 $ because $ R \circ R^{-1} = id_{Q_*'} $.

    Let us prove that $ R $ is bijective. We must prove that $ |R^{-1}(U')| = 1 $ for every $ U' \in Q'_* $. Fix $ U' \in Q'_* $. By Lemma~\ref{lem:R(u)nonempty}, we have $ |R^{-1}(U')| \ge 1 $, so we only need to prove that $ |R^{-1}(U')| \le 1 $. Assume that $ U_1, U_2 \in Q_* $ satisfy $ U_1 \; R \; U' $ and $ U_2 \; R \; U' $. We must prove that $ U_1 = U_2 $. We have $ U' \; R^{-1} \; U_2 $ and $ U_1 \; (R^{-1} \circ R) \; U_2 $, so $ U_1 = U_2 $ because $ R^{-1} \circ R = id_{Q_*} $.

    We are only left with showing that $ R(U_i) = U'_i $ for every $ 1 \le i \le n $. We proceed by induction on $ i $. For $ i = 1 $, we have $ U_1 = s_* $ and $ U'_1 = s'_* $ by Axiom~1, and we know that $  R(s_*) = s'_* $ because $ R $ is an isomorphism from $ \mathcal{A}_* $ to $ \mathcal{A}'_* $, so $ R(U_i) = R(s_*) = s'_* = U'_1 $. Now assume that $ 2 \le i \le n $. By the inductive hypothesis, we know that $ R(U_j) = U'_j $ for every $ 1 \le j \le i - 1 $. We must show that $ R(U_i) = U'_i $. Let $ 1 \le k \le n $ be such that $ R(U_i) = U'_k $. We must prove that $ k = i $. We only need to prove that we cannot have $ k < i $ and we cannot have $ i < k $. We cannot have $ k < i $ because by the inductive hypothesis we would have $ R(U_k) = U'_k $, so we would have $ \{U_k, U_i \} \subseteq R^{-1}(U'_k) $ and $ R $ would not be injective. Moreover, we cannot have $ i < k $ because by the inductive hypothesis we would have $ R(\{U_1, U_2, \dots, U_{i - 2}, U_{i - 1}, U_i \}) = \{U'_1, U'_2, \dots, U'_{i - 2}, U'_{i - 1}, U'_k \} $, which would be a contradiction because $ R $ is a Wheeler bisimulation from $ (\mathcal{A}_*, \le_*) $ to $ (\mathcal{A}'_*, \le'_*) $, $ \{U_1, U_2, \dots, U_{i - 2}, U_{i - 1}, U_i \} $ is $ \le_* $-convex, but $ \{U'_1, U'_2, \dots, U'_{i - 2}, U'_{i - 1}, U'_k \} $ would not be $ \le'_* $-convex. \qedhere
\end{proof}

We can now prove that in every $ \sim_{\mathfrak{C}} $-equivalence class there is only one minimal Wheeler NFA (up to isomorphism).

\medskip

\noindent{\textbf{Statement of Theorem~\ref{theorem:minimalbisimilar}.}
    Consider a $ \sim_{\mathfrak{C}} $-equivalence class $ C $, and let $ (\mathcal{A}, \le) $ and $ (\mathcal{A}', \le') $ be two minimal elements of $ C $. Then, there exists an isomorphism from $ \mathcal{A} $ to $ \mathcal{A}' $.

\begin{proof}
    Let $ (\mathcal{A}_*, \le_*) $ and $ (\mathcal{A}'_*, \le'_*) $ be the Wheeler NFAs built in Lemma~\ref{lem:quotient} starting from $ (\mathcal{A}, \le) $ and $ (\mathcal{A}', \le') $, respectively. By Lemma~\ref{lem:quotientautomotaisomorphic}, there exists an isomorphism $ \phi $ from $ \mathcal{A}_* $ to $ \mathcal{A}'_* $. By Lemma~\ref{lem:stateminimalisomorphism}, there exists an isomorphism $ \phi_1 $ from $ \mathcal{A} $ to $ \mathcal{A}_* $ and there exists an isomorphism $ \phi_2 $ from $ \mathcal{A}' $ to $ \mathcal{A}'_* $. We conclude that $ \phi_2^{-1} \circ \phi \circ \phi_1 $ is an isomorphism from $ \mathcal{A} $ to $ \mathcal{A}' $. \qedhere
\end{proof}

Lastly, we show that the minimal Wheeler NFA of a $ \sim_{\mathfrak{C}} $-equivalence class can be obtained by quotienting \emph{any} Wheeler NFA in the class.

\medskip

\noindent{\textbf{Statement of Corollary~\ref{cor:quotientisstateminimal}.}
   Consider a $ \sim_{\mathfrak{C}} $-equivalence class $ C $, let $ (\mathcal{A}, \le) $ be an element of $ C $, and let $ (\mathcal{A}_*, \le_*) $ be the Wheeler NFA of Lemma~\ref{lem:quotient}. Then, $ (\mathcal{A}_*, \le_*) $ is a minimal element of $ C $. 

\begin{proof}
    Let $ (\mathcal{A}', \le') $ be a minimal element of $ C $. Notice that $ (\mathcal{A}_*, \le_*) $ is an element of $ C $ by Lemma~\ref{lem:bisimulationtoquotient}, so the conclusion will follow if we show that there exists an isomorphism from $ \mathcal{A}_* $ to $ \mathcal{A}' $.
    
    Let $ (\mathcal{A}'_*, \le'_*) $ the Wheeler NFA of Lemma~\ref{lem:quotient} obtained from $ (\mathcal{A}', \le') $. By Lemma~\ref{lem:quotientautomotaisomorphic}, there exists an isomorphism $ \phi $ from $ \mathcal{A}_* $ to $ \mathcal{A}_*' $. By Lemma~\ref{lem:stateminimalisomorphism}, there exists an isomorphism $ \phi_1 $ from $ \mathcal{A}' $ to $ \mathcal{A}'_* $. We conclude that $ \phi_1^{-1} \circ \phi $ is an isomorphism from $ \mathcal{A}_* $ to $ \mathcal{A}' $. \qedhere    
\end{proof}

\subsection{Determinism}

\noindent{\textbf{Statement of Lemma~\ref{lem:relatingdeterminismandbisimulation}.}
    Let $ \mathcal{L} \subseteq \Sigma^* $ be a Wheeler language, and let $ (\mathcal{A}, \le) $ be a Wheeler NFA. Then, $ (\mathcal{A}, \le) $ is a minimal element of $ D_\mathcal{L} $ if and only if it is a minimal element of $ T_\mathcal{L} $. In particular, every minimal element of $ T_\mathcal{L} $ is a Wheeler DFA.

\begin{proof}
    We know that $ D_\mathcal{L} \subseteq T_\mathcal{L} $, so we only need to prove that every minimal element of $ T_\mathcal{L} $ is in $ D_\mathcal{L} $. Assume that $ (\mathcal{A}, \le) $ is a minimal element of $ T_\mathcal{L} $. We need to prove that $ (\mathcal{A}, \le) $ is a Wheeler DFA such that $ \mathcal{L}(\mathcal{A}) = \mathcal{L} $.

    By Definition, $ T_\mathcal{L} $ contains a Wheeler DFA $ (\mathcal{A}', \le') $ such that $ \mathcal{L}(\mathcal{A}') = \mathcal{L} $, so by Lemma~\ref{lem:bisimulationandequivalence} we have $ \mathcal{L}(\mathcal{A}) = \mathcal{L}(\mathcal{A}') = \mathcal{L} $ because $ \mathcal{A} $ and $ \mathcal{A}' $ belong to $ T_\mathcal{L} $. We are only left with showing that $ \mathcal{A} $ is a DFA.

    Let $ (\mathcal{A}'_*, \le'_*) $ be the Wheeler NFA of Lemma~\ref{lem:quotient} built starting from $ (\mathcal{A}', \le') $. Since $ \mathcal{A}' $ is a DFA, by Lemma~\ref{lem:quotient} $ \mathcal{A}'_* $ is a DFA. By Corollary~\ref{cor:quotientisstateminimal}, $ (\mathcal{A}'_*, \le'_*) $ is a minimal element of $ T_\mathcal{L} $. By Theorem~\ref{theorem:minimalbisimilar}, there exists an isomorphism from $ \mathcal{A}'_* $ to $ \mathcal{A} $. Since $ \mathcal{A}'_* $ is a DFA, we conclude that $ \mathcal{A} $ is a DFA. \qedhere
\end{proof}

\section{Omitted Proofs and Results from Section~\ref{sec:algorithms}}

The main goal of this section is to prove that the quotient Wheeler NFA can be built in linear time (Theorem~\ref{theor:lineartimeminimization}). As discussed in Section~\ref{sec:introduction} and Section~\ref{sec:algorithms}, we assume that the edge labels can be sorted in linear time.

Fix a Wheeler NFA $ (\mathcal{A}, \le) $, with $ \mathcal{A} = (Q, E, s, F) $, and write $ Q = \{u_1, u_2, \dots, u_{|Q|} \} $, where $ u_1 < u_2 < \dots < u_{|Q|} $. In Section~\ref{sec:algorithms}, we defined the bit array $ B_{\mathcal{A}, \le} [2, |Q|] $. The idea is that $ B_{\mathcal{A}, \le}[2, |Q|] $ stores all the information required to compute $ \equiv_{\mathcal{A}, \le} $ because $ \equiv_{\mathcal{A}, \le} $ is $ \le $-convex by Theorem~\ref{theor:maximumbisimulationproperties} (see Lemma~\ref{lem:B_Aisallweneed}).

Our linear-time algorithm for building $ B_{\mathcal{A}, \le} [2, |Q|] $ (Algorithm~\ref{alg:computingmaximum}) is based on the following lemma.

\medskip

\noindent{\textbf{Statement of Lemma~\ref{lem:ideabehindalgorithm}.}
    Let $ (\mathcal{A}, \le) $ be a Wheeler NFA, with $ \mathcal{A} = (Q, E, s, F) $. Fix $ 2 \le i \le |Q| $ (in particular, $ \ell_i^{\min} \not = \bot $).
    \begin{enumerate}
        \item If $ Out(i - 1)  \not = Out(i) $, then $ B_{\mathcal{A}, \le}[i] = 1 $.
        \item If $ (u_{i - 1} \in F \land u_i \not \in F) \lor (u_{i - 1} \not \in F \land u_i  \in F) $, then $ B_{\mathcal{A}, \le}[i] = 1 $.
        \item Assume that $ B_{\mathcal{A}, \le}[i] = 1 $. If $ \ell_i^{\min} \ge 2 $, then $ B_{\mathcal{A}, \le}[\ell_i^{\min}] = 1 $, and if $ (\ell_{i - 1}^{\max} \not = \bot) \land (\ell_{i - 1}^{\max} \le |Q| - 1) $, then $ B_{\mathcal{A}, \le}[\ell_{i - 1}^{\max} + 1] = 1 $.
    \end{enumerate}

\begin{proof}
    \begin{enumerate}
        \item Assume that $ Out(i - 1)  \not = Out(i) \not = \emptyset $. Then, there exists $ a \in Out(i - 1) \setminus Out(i) $ or $ a \in Out(i) \setminus Out(i - 1) $. Since $ \equiv_{\mathcal{A}, \le} $ is a bisimulation from $ \mathcal{A} $ to $ \mathcal{A} $, we obtain $ u_{i - 1} \not \equiv_{\mathcal{A}, \le} u_i $, so $ B_{\mathcal{A}, \le}[i] = 1 $.
        \item Assume that $ (u_{i - 1} \in F \land u_i \not \in F) \lor (u_{i - 1} \not \in F \land u_i  \in F) $. Since $ \equiv_{\mathcal{A}, \le} $ is a bisimulation from $ \mathcal{A} $ to $ \mathcal{A} $, we obtain $ u_{i - 1} \not \equiv_{\mathcal{A}, \le} u_i $, so $ B_{\mathcal{A}, \le}[i] = 1 $.
        \item Assume that $ B_{\mathcal{A}, \le}[i] = 1 $. We prove the two properties separately.
        \begin{itemize}
            \item Assume that $ B_{\mathcal{A}, \le}[i] = 1 $ and $ \ell_i^{\min} \ge 2 $. We know that $ \ell_i^{\min} \not = \bot $ (which implies $ a_i^{\min} \not = \bot $), and $ B_{\mathcal{A}, \le}[\ell_i^{\min}]  $ is well defined because $ \ell_i^{\min} \ge 2 $. Suppose for the sake of a contradiction that $ B_{\mathcal{A}, \le}[\ell_i^{\min}] \not = 1 $. This means that $ u_{\ell_i^{\min} - 1} \equiv_{\mathcal{A}, \le} u_{\ell_i^{\min}} $. Since $ \equiv_{\mathcal{A}, \le} $ is a bisimulation from $ \mathcal{A} $ to $ \mathcal{A} $ and $ (u_{\ell_i^{\min}}, u_i, a_i^{\min}) \in E $, there exists $ 1 \le i' \le |Q| $ such that $ u_{i'} \equiv_{\mathcal{A}, \le} u_i $ and $ (u_{\ell_i^{\min} - 1}, u_{i'}, a_i^{\min}) \in E $. We distinguish three cases, and we show that of all them lead to a contradiction (see Figure~\ref{fig:notcrossingalgorithms}).
    \begin{itemize}
        \item Case 1: $ i < i' $. From $ (u_{\ell_i^{\min}}, u_i, a_i^{\min}), (u_{\ell_i^{\min} - 1}, u_{i'}, a_i^{\min}) \in E $ and $ i < i' $ we conclude $ \ell_i^{\min} \le \ell_i^{\min} - 1 $ by Axiom~3, which is a contradiction.
        \item Case 2: $ i' = i $. We have $ (u_{\ell_i^{\min} - 1}, u_i, a_i^{\min}) \in E $, so by the minimality of $ \ell_i^{\min} $ we conclude $ \ell_i^{\min} \le \ell_i^{\min} - 1 $, which is a contradiction.
        \item Case 3: $ i' < i $. We know that $ \equiv_{\mathcal{A}, \le} $ is $ \le $-convex (by Theorem~\ref{theor:maximumbisimulationproperties}), $ i' \le i - 1 < i $ and $ u_{i'} \equiv_{\mathcal{A}, \le} u_i $, so we obtain that $ u_{i - 1} \equiv_{\mathcal{A}, \le} u_i $ and we conclude that $ B_{\mathcal{A}, \le}[i] \not = 1 $, which is a contradiction.
    \end{itemize}    

            \item Assume that $ (\ell_{i - 1}^{\max} \not = \bot) \land (\ell_{i - 1}^{\max} \le |Q| - 1) $. We know that $ \ell_{i - 1}^{\max} \not = \bot $ (which implies $ a_{i - 1}^{\max} \not = \bot $), and $ B_{\mathcal{A}, \le}[\ell_{i - 1}^{\max} + 1]  $ is well defined because $ \ell_{i - 1}^{\max} \le |Q| - 1 $. From $ \ell_{i - 1}^{\max} \not = \bot $ we also obtain $ a_{i - 1}^{\max} \not = \bot $. Suppose for the sake of a contradiction that $ B_{\mathcal{A}, \le}[\ell_{i - 1}^{\max} + 1] \not = 1 $. This means that $ u_{\ell_{i - 1}^{\max}} \equiv_{\mathcal{A}, \le} u_{\ell_{i - 1}^{\max} + 1} $. Since $ \equiv_{\mathcal{A}, \le} $ is a bisimulation from $ \mathcal{A} $ to $ \mathcal{A} $ and $ (u_{\ell_{i - 1}^{\max}}, u_{i - 1}, a_{i - 1}^{\max}) \in E $, there exists $ 1 \le i' \le |Q| $ such that $ u_{i - 1} \equiv_{\mathcal{A}, \le} u_{i'} $ and $ (u_{\ell_{i - 1}^{\max} + 1}, u_{i'}, a_{i - 1}^{\max}) \in E $. We distinguish three cases, and we show that of all them lead to a contradiction.
            \begin{itemize}
                \item Case 1: $ i' < i - 1 $. From $ (u_{\ell_{i - 1}^{\max} + 1}, u_{i'}, a_{i - 1}^{\max}), (u_{\ell_{i - 1}^{\max}}, u_{i - 1}, a_{i - 1}^{\max}) \in E $ and $ i' < i - 1 $ we conclude $ \ell_{i - 1}^{\max} + 1 \le \ell_{i - 1}^{\max} $ by Axiom~3, which is a contradiction.
                \item Case 2: $ i' = i - 1 $. We have $ (u_{\ell_{i - 1}^{\max} + 1}, u_{i - 1}, a_{i - 1}^{\max}) \in E $, so by the maximality of $ \ell_{i - 1}^{\max} $ we conclude $ \ell_{i - 1}^{\max} + 1 \le \ell_{i - 1}^{\max} $, which is a contradiction.
                \item Case 3: $ i - 1 < i' $. We know that $ \equiv_{\mathcal{A}, \le} $ is $ \le $-convex (by Theorem~\ref{theor:maximumbisimulationproperties}), $ i - 1 < i \le i' $ and $ u_{i - 1} \equiv_{\mathcal{A}, \le} u_{i'} $, so we obtain that $ u_{i - 1} \equiv_{\mathcal{A}, \le} u_i $ and we conclude that $ B_{\mathcal{A}, \le}[i] \not = 1 $, which is a contradiction. \qedhere
            \end{itemize}
        \end{itemize}
    \end{enumerate}
\end{proof}

We now introduce several results that will be expedient to prove the correctness of Algorithm~\ref{alg:computingmaximum} (see Lemma~\ref{lem:algcorrectness}). Our first result is a general lemma.

\begin{lemma}\label{lem:convexequivalencerelationimpliesproperty2}
    Let $ Q $ be a set, let $ \le $ be a total order on $ Q $, and let $ \equiv $ be a $ \le $-convex equivalence relation on $ Q $. If $ C \subseteq Q $ is $ \le $-convex, then $ \bigcup_{u \in C}[u]_{\equiv} $ is $ \le $-convex.
\end{lemma}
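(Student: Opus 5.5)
We argue directly from the definition of $ \le $-convexity. Write $ D = \bigcup_{u \in C}[u]_{\equiv} $. If $ C = \emptyset $ then $ D = \emptyset $ and we are done, so assume $ C \not = \emptyset $. Fix $ x, y, z \in Q $ with $ x \le y \le z $ and $ x, z \in D $; we must show $ y \in D $. By the definition of $ D $ there exist $ u, w \in C $ with $ x \equiv u $ and $ z \equiv w $.

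The plan is a case analysis on where $ y $ lies relative to $ u $ and $ w $. First, if $ y $ lies between $ u $ and $ w $ (that is, $ \min\{u,w\} \le y \le \max\{u,w\} $), then $ y \in C $ by the $ \le $-convexity of $ C $, and since $ y \in [y]_\equiv $ we get $ y \in D $. Otherwise $ y $ lies strictly outside the interval spanned by $ u $ and $ w $; say $ y < \min\{u, w\} $ (the case $ y > \max\{u,w\} $ is symmetric). Then $ x \le y < \min\{u,w\} \le u $, so $ y $ lies between $ x $ and $ u $, i.e. $ x \le y \le u $. Since $ x \equiv u $, both $ x $ and $ u $ belong to the $ \le $-convex class $ [u]_\equiv $, hence $ y \in [u]_\equiv \subseteq D $. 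Symmetrically, if $ y > \max\{u,w\} $, then $ w \le y \le z $ with $ z, w \in [w]_\equiv $, giving $ y \in [w]_\equiv \subseteq D $.

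The only point requiring care is making sure the case split is exhaustive without assuming $ u \le w $ or $ x \le u $; phrasing it via $ \min\{u,w\} $ and $ \max\{u,w\} $ avoids this, since $ y < \min\{u,w\} $ together with $ x \le y $ automatically places $ y $ between $ x $ and $ u $ regardless of the relative order of $ u $ and $ w $. An alternative, slightly slicker route would be to observe that $ D $ is a union of the convex sets $ [v]_\equiv $ for $ v \in C $ chained through $ C $ and apply Lemma~\ref{lem:convexunion} repeatedly, but the direct three-case argument above seems cleaner and is what I would write.
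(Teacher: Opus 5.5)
Your proof is correct and follows essentially the same three-case argument as the paper. The paper also splits on whether the middle element lies below the first representative in $C$, above the second, or strictly between them, using convexity of the corresponding $\equiv$-class in the outer cases and convexity of $C$ in the middle case; your phrasing via $\min\{u,w\}$ and $\max\{u,w\}$ makes exhaustiveness explicit, and the separate $C=\emptyset$ case is harmless but unnecessary.
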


\begin{proof}
    Define $ C' = \bigcup_{u \in C}[u]_{\equiv} $, and note that $ C \subseteq C' $. Let $ v', w', x' \in Q $ such that $ v' \le w' \le x' $ and $ v', x' \in C' $. We must prove that $ w' \in C' $. Since $ v', x' \in C' $, there exists $ v, x \in C $ such that $ v' \in [v]_\equiv $ and $ x' \in [x]_\equiv $. At least one of the following three cases must occur.
    \begin{itemize}
        \item Case 1: $ w' \le v $, Then, $ v' \le w' \le v $, so from $ v, v' \in [v]_\equiv $ we conclude $ w' \in [v]_\equiv \subseteq C $ because $ [v]_\equiv $ is $ \le $-convex.
        \item Case 2: $ x \le w' $. Then, $ x \le w' \le x' $, so from $ x, x' \in [x]_\equiv $ we conclude $ w' \in [v]_\equiv \subseteq C $ because $ [v]_\equiv $ is $ \le $-convex.
        \item Case 3: $ v < w' < x $. Then, from $ v, x \in C $ we conclude $ w' \in C \subseteq C' $ because $ C $ is $ \le $-convex. \qedhere
    \end{itemize}
\end{proof}

Let us focus on Algorithm~\ref{alg:computingmaximum}. We start with a remark.

\begin{remark}\label{rem:basicalgorithm}
    Note that at the end of Algorithm~\ref{alg:computingmaximum} we have $ B[i] \in \{0, 1 \} $ for every $ 2 \le i \le |Q| $. Moreover, if at some time the algorithm sets $ B[i] \gets 1 $ for some $ 2 \le i \le |Q| $ (in Line~\ref{line:firstB1} or Line~\ref{line:firstB2} or Line~\ref{line:firstB2bis}), then we still have $ B[i] = 1 $ at the end of the algorithm.
\end{remark}

We now define a relation $ \sim $ that depends on the execution of Algorithm~\ref{alg:computingmaximum}.

\begin{definition}\label{def:usedincorrectness}
    Let $ (\mathcal{A}, \le) $ be a Wheeler NFA, with $ \mathcal{A} = (Q, E, s, F) $. Let $ \sim $ be the relation from $ Q $ to $ Q $ such that for every $ 1 \le i, i' \le |Q| $ we have $ u_i \sim u_{i'} $ if and only if for every $ \min \{i, i' \} + 1 \le k \le \max \{i, i' \} $ we have $ B[k] = 0 $ at the end of the execution of Algorithm~\ref{alg:computingmaximum} on input $ (\mathcal{A}, \le) $.
\end{definition}

To prove Lemma~\ref{lem:algcorrectness}, we will prove that the relation $ \sim $ of Definition~\ref{def:usedincorrectness} is a Wheeler autobisimulation on $ (\mathcal{A}, \le) $ (see Lemma~\ref{lem:correctnesswheelerautobisimulation}). We first need some auxiliary results.

\begin{lemma}\label{lem:correctnessequivalencerelation}
    Let $ (\mathcal{A}, \le) $ be a Wheeler NFA, with $ \mathcal{A} = (Q, E, s, F) $. Then, the relation $ \sim $ of Definition~\ref{def:usedincorrectness} is a $ \le $-convex equivalence relation on $ Q $.
\end{lemma}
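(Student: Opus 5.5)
The relation $ \sim $ of Definition~\ref{def:usedincorrectness} depends only on the final content of $ B[2, |Q|] $, which is a well-defined bit array by Remark~\ref{rem:basicalgorithm}. So the plan is to verify the equivalence axioms and convexity directly from the combinatorial definition, exactly as in the proof of Lemma~\ref{lem:B_Aisallweneed}, without using any property of the algorithm beyond this. No step is a genuine obstacle. The only point requiring care is the case analysis on the relative positions of the indices in transitivity, and the trick below removes it.

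Reflexivity and symmetry come first. For $ i = i' $ the range $ \min\{i,i'\}+1 \le k \le \max\{i,i'\} $ is empty, so $ u_i \sim u_i $ holds vacuously. The condition defining $ u_i \sim u_{i'} $ depends only on the unordered pair $ \{i, i'\} $, because it involves only $ \min $ and $ \max $, so symmetry is immediate.

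For transitivity, assume $ u_i \sim u_{i'} $ and $ u_{i'} \sim u_{i''} $. Write $ I(i,i') $ for the integer interval $ [\min\{i,i'\}+1, \max\{i,i'\}] $. The key observation is
\[
I(i,i'') \subseteq I(i,i') \cup I(i',i'').
\]
This holds because the closed real interval between $ i $ and $ i'' $ is contained in the union of the closed intervals between $ i $ and $ i' $ and between $ i' $ and $ i'' $. An integer $ k \in I(i,i'') $ therefore satisfies $ k-1, k $ both in $ [\min\{i,i''\}, \max\{i,i''\}] $, and the unit step $ [k-1,k] $ lies inside one of the two sub-intervals, because $ i' $ is an integer and cannot lie strictly between $ k-1 $ and $ k $. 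Alternatively, one can split into the cases given by the relative order of $ i, i', i'' $ and check each directly. In either form, $ B[k]=0 $ for every $ k \in I(i,i'') $, hence $ u_i \sim u_{i''} $.

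For convexity, fix $ u_i $ and suppose $ u_j, u_l \in [u_i]_\sim $ with $ j \le m \le l $. Since $ \sim $ is an equivalence relation, $ u_j \sim u_l $, so $ B[k]=0 $ for all $ j+1 \le k \le l $. In particular $ B[k]=0 $ for all $ j+1 \le k \le m $, which gives $ u_j \sim u_m $. Transitivity then yields $ u_i \sim u_m $, so $ u_m \in [u_i]_\sim $.
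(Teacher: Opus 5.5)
Your proof is correct and follows the paper's route. Reflexivity and symmetry come straight from the definition, and convexity is argued exactly as in the paper: from $u_j \sim u_l$ you get $u_j \sim u_m$, then apply transitivity. Your covering observation $I(i,i'') \subseteq I(i,i') \cup I(i',i'')$ is a compact repackaging of the paper's case split, which first reduces to $i \le i''$ by symmetry and then checks whether $k \le \max\{i,i'\}$; the step placing each $[k-1,k]$ inside one sub-interval is sound because all endpoints involved are integers.
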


\begin{proof}
    Let us prove that $ \sim $ is an equivalence relation on $ Q $. From the definition of $ \sim $, we immediately obtain that, for every $ 1 \le i \le |Q|$, $ u_i \sim u_i $  (that is, $ \sim $ is reflexive) and, for every $ 1 \le i, i' \le |Q| $, $ u_i \sim u_{i'} $ implies $ u_{i'} \sim u_i $ (that is, $ \sim $ is symmetrical). Let us prove that $ \sim $ is transitive. Assume that $ 1 \le i, i', i'' \le |Q| $ satisfy $ u_i \sim u_{i'} $ and $ u_{i'} \sim u_{i''} $. We must prove that $ u_i \sim u_{i''} $. We distinguish two cases.
        \begin{itemize}
            \item Assume that $ i \le i'' $. By the definition of $ \sim $, we need to prove that, for every $ i + 1 \le k \le i'' $, we have $ B[k] = 0 $ at the end of the algorithm. Fix $ i + 1 \le k \le i'' $. Note that $ \min \{i, i' \} \le \min \{i', i'' \} \le i' \le \max \{i, i' \} \le \max \{i', i'' \} $ and $ \min \{i, i' \} + 1 \le i + 1 \le k \le i'' \le \max \{i, i'' \} $. If $ k \le \max \{i, i' \} $, we have $ \min \{i, i' \} + 1 \le k \le \max \{i, i' \} $, and from $ u_i \sim u_{i'} $ we conclude that $ B[k] = 0 $ at the end of the algorithm. Otherwise, we have $ \max \{i, i' \} + 1 \le k $, so
            $ \min \{i', i'' \} + 1 \le \max \{i, i' \} + 1 \le k $ and $ \min \{i', i'' \} + 1 \le k \le \max \{i', i'' \} $, and from $ u_{i'} \sim u_{i''} $ we conclude that $ B[k] = 0 $ at the end of the algorithm.
            \item Assume that $ i'' < i $. By the symmetry of $ \sim $, we have $ u_{i''} \sim u_{i'} $ and $ u_{i'} \sim u_i $, so by the previous case we obtain $ u_{i''} \sim u_i $, and again by the symmetry of $ \sim $ we conclude $ u_i \sim u_{i''} $.
        \end{itemize}
        Let us prove that the equivalence relation $ \sim $ is $ \le $-convex. Fix $ 1 \le i \le |Q| $. We must prove that $ [u_i]_\sim $ is $ \le $-convex. Let $ 1 \le h_1 \le h_2 \le h_3 \le |Q| $ such that $ u_{h_1}, u_{h_3} \in [u_i]_\sim $. We must prove that $ u_{h_2} \in [u_i]_\sim $, that is, we need to prove that $ u_i \sim u_{h_2} $. Since $ u_{h_1}, u_{h_3} \in [u_i]_\sim $, we have $ u_i \sim u_{h_1} $ and $ u_i \sim u_{h_3} $. We know that $ \sim $ is an equivalence relation, so $ u_{h_1} \sim u_{h_3} $, and we only need to prove that $ u_{h_1} \sim u_{h_2} $. By the definition of $ \sim $, we have to prove that, for every $ h_1 + 1 \le k \le h_2 $, we have $ B[k] = 0 $ at the end of the algorithm. Fix $ h_1 + 1 \le k \le h_2 $. Since $ h_2 \le h_3 $, we have $ h_1 + 1 \le k \le h_3 $, so we have $ B[k] = 0 $ at the end of the algorithm because $ u_{h_1} \sim u_{h_3} $. \qedhere
\end{proof}

\begin{lemma}\label{lem:correctnessbisimulation}
    Let $ (\mathcal{A}, \le) $ be a Wheeler NFA, with $ \mathcal{A} = (Q, E, s, F) $. Then, the relation $ \sim $ of Definition~\ref{def:usedincorrectness} is a bisimulation from $ \mathcal{A} $ to $ \mathcal{A} $.
\end{lemma}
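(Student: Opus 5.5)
The plan is to verify the four conditions of Definition~\ref{def:bisimulation} for $ \sim $. The key reduction is to \emph{adjacent} pairs $ u_{i-1} \sim u_i $. The general case then follows by chaining, since $ \sim $ is an equivalence relation by Lemma~\ref{lem:correctnessequivalencerelation}.

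The initial-state condition is immediate from reflexivity. For final states, let $ u_i \sim u_{i'} $ with $ i < i' $. Then $ B[k] = 0 $ for all $ i < k \le i' $ at the end of the algorithm. By Remark~\ref{rem:basicalgorithm} and the first loop of Algorithm~\ref{alg:computingmaximum}, membership in $ F $ cannot change between consecutive states in that range. Hence $ u_i \in F $ if and only if $ u_{i'} \in F $.

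For the two transfer conditions, it is enough to show the following claim: whenever $ B[i] = 0 $ at the end, each $ a $-edge leaving one of $ u_{i-1}, u_i $ is matched by an $ a $-edge leaving the other, with $ \sim $-equivalent targets. Indeed, if $ u_i \sim u_{i'} $ with $ i < i' $, every intermediate pair is adjacent with $ B = 0 $. Applying the claim step by step along $ u_i, u_{i+1}, \dots, u_{i'} $ produces a chain of $ \sim $-equivalent targets, and transitivity of $ \sim $ closes the argument. The case $ i' < i $ is symmetric.

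To prove the claim, fix $ i $ with $ B[i] = 0 $. Then $ Z[i] = 0 $, so $ Out(i-1) = Out(i) $. Suppose $ (u_i, u_j, a) \in E $. Since $ a \in Out(i-1) $, there is some edge $ (u_{i-1}, u_{j'}, a) \in E $. By Axiom~3, $ j < j' $ would force $ i \le i-1 $, so $ j' \le j $. It remains to show that $ B[k] = 0 $ for every $ j' < k \le j $.

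Suppose instead that $ B[k] = 1 $ for such a $ k $. Then $ k $ was enqueued, so the while loop guarantees $ B[\ell_k^{\min}] = 1 $ when $ \ell_k^{\min} \ge 2 $, and $ B[\ell_{k-1}^{\max}+1] = 1 $. We now pin down these quantities.

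\begin{itemize}
\item Both $ u_{j'} $ and $ u_j $ are entered by $ a $-edges, and $ j' \le k-1 < k \le j $. By Axiom~2, every label entering $ u_k $ or $ u_{k-1} $ equals $ a $. Hence $ a_k^{\min} = a_{k-1}^{\max} = a $.
\item Apply Axiom~3 to the $ a $-edges into $ u_{j'} $, $ u_k $ (or $ u_{k-1} $), and $ u_j $. This gives $ \ell_k^{\min}, \ell_{k-1}^{\max} \in \{i-1, i\} $.
\item If $ \ell_k^{\min} = i $, then $ \ell_k^{\min} \ge 2 $ and $ B[i] = 1 $, a contradiction. So $ (u_{i-1}, u_k, a) \in E $.
\item If $ \ell_{k-1}^{\max} = i-1 $, then $ B[\ell_{k-1}^{\max}+1] = B[i] = 1 $, a contradiction. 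So $ (u_i, u_{k-1}, a) \in E $.
\item The two edges $ (u_i, u_{k-1}, a) $ and $ (u_{i-1}, u_k, a) $ have $ k-1 < k $. Axiom~3 then gives $ i \le i-1 $, a contradiction.
\end{itemize}

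Therefore $ u_{j'} \sim u_j $. The converse direction, starting from an edge out of $ u_{i-1} $, is handled in the same way with the roles of $ i-1 $ and $ i $ exchanged.

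The main obstacle is this last case analysis. The essential point is to combine Axiom~2, which forces every incoming label in the range to be $ a $, with Axiom~3, which confines the sources to $ \{u_{i-1}, u_i\} $. Together they ensure that the two propagation rules of Lemma~\ref{lem:ideabehindalgorithm} are exactly what is needed to rule out a split.
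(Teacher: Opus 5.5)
Your proof is correct, but it is organized differently from the paper's.

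\textbf{The paper's route.} The paper never reduces to adjacent pairs. It takes an arbitrary pair $u_i \sim u_{i'}$ with $i \neq i'$ and an edge $(u_{i'}, u_{j'}, a)$, and assumes $B[k]=1$ for some $\min\{j,j'\} < k \le \max\{j,j'\}$. It shows $a_k^{\min}=a$ and $\min\{i,i'\} \le \ell_k^{\min} \le \max\{i,i'\}$, and then splits into two cases.
\begin{itemize}
\item If $\ell_k^{\min} > \min\{i,i'\}$, the propagated $1$ at $\ell_k^{\min}$ falls in the range $\min\{i,i'\}+1,\dots,\max\{i,i'\}$, where $B$ must be zero.
\item If $\ell_k^{\min} = \min\{i,i'\}$, it pins down $\ell_{k-1}^{\max} = \min\{i,i'\}$. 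The upper bound uses the same non-crossing of $(u_{\ell_{k-1}^{\max}}, u_{k-1}, a)$ and $(u_{\ell_k^{\min}}, u_k, a)$ that you use at the end. It then uses $i \neq i'$ to place the propagated $1$ at $\min\{i,i'\}+1$ inside that range.
\end{itemize}

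\textbf{Your route.} You localize on the source side first, so the source interval is just $\{i-1,i\}$. Both propagation rules then land directly on $B[i]$, which forces $\ell_k^{\min}=i-1$ and $\ell_{k-1}^{\max}=i$, and a single application of Axiom~3 finishes. The cost is the chaining step. It is harmless here, since transitivity of $\sim$ is already available from Lemma~\ref{lem:correctnessequivalencerelation}. In return you get a shorter, symmetric case analysis that makes it transparent why exactly the two rules of Lemma~\ref{lem:ideabehindalgorithm} suffice. The paper's route avoids chaining and treats a general pair in one pass, at the price of a more asymmetric two-case argument.

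\textbf{Imprecisions at the endpoints $k=j$ and $k-1=j'$.}
\begin{itemize}
\item In your first bullet, it is not true that \emph{every} label entering $u_k$ (resp.\ $u_{k-1}$) equals $a$ when $k=j$ (resp.\ $k-1=j'$). This paper allows a state to have incoming edges with several labels (e.g.\ $u_2$ in Example~\ref{ex:bisimulationisnotenough}), and Axiom~2 gives only one inequality at these endpoints. What you actually need, $a_k^{\min}=a$ and $a_{k-1}^{\max}=a$, still holds because $a$ itself enters $u_j$ and $u_{j'}$.
\item In your second bullet, at these endpoints the targets coincide, so Axiom~3 does not apply. The bounds $\ell_k^{\min} \le i$ and $\ell_{k-1}^{\max} \ge i-1$ come instead from the minimality of $\ell_k^{\min}$ and the maximality of $\ell_{k-1}^{\max}$.
\item Before invoking the second propagation rule, state explicitly that $\ell_{k-1}^{\max} \neq \bot$. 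This holds because $u_{k-1}$ has an incoming edge: either $k-1 \ge 2$, or $k-1=j'$. The condition $\ell_{k-1}^{\max} \le |Q|-1$ is automatic in the case $\ell_{k-1}^{\max}=i-1$ where you use it.
\end{itemize}
This is exactly how the paper treats its boundary cases $k=\max\{j,j'\}$ and $k-1=\min\{j,j'\}$.
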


\begin{figure}
     \centering
           \begin{subfigure}[b]{0.49\textwidth}
        \centering
        \scalebox{.7}{
        \begin{tikzpicture}[->,>=stealth', semithick, auto, scale=1]
\tikzset{
  every node/.style={inner sep=0pt, outer sep=0pt}
}
\node[draw=none, inner sep=0pt, anchor=center] (1)    at (1,0)	{$ u_{\min \{j, j' \}} $};
\node[draw=none, inner sep=0pt, anchor=center] (2)    at (4, 0)	{$ u_k $};
\node[draw=none, inner sep=0pt, anchor=center] (3)    at (6, 0)	{$ u_{\max \{j, j' \}} $};
\node[draw=none, inner sep=0pt, anchor=center] (1')    at (-2, -3)	{$ u_{\min \{i, i' \}} $};
\node[draw=none, inner sep=0pt, anchor=center] (2')    at (0,-3)	{$ u_{\ell_k}^{\min} $};
\node[draw=none, inner sep=0pt, anchor=center] (3')    at (3.5,-3)	{$ u_{\max \{i, i' \}} $};

\draw (1') edge [] node [] {$ a $} (1);
\draw (2') edge [] node [] {$ a_k^{\min} $} (2);
\draw (3') edge [] node [] {$ a $} (3);

\end{tikzpicture}
}
\end{subfigure}
     \begin{subfigure}[b]{0.49\textwidth}
        \centering
        \scalebox{.7}{
        \begin{tikzpicture}[->,>=stealth', semithick, auto, scale=1]
\tikzset{
  every node/.style={inner sep=0pt, outer sep=0pt}
}
\node[draw=none, inner sep=0pt, anchor=center] (1)    at (1,0)	{$ u_{\min \{j, j' \}} $};
\node[draw=none, inner sep=0pt, anchor=center] (3)    at (4, 0)	{$ u_k $};
\node[draw=none, inner sep=0pt, anchor=center] (1')    at (-2, -3)	{$ u_{\min \{i, i' \}} $};
\node[draw=none, inner sep=0pt, anchor=center] (2')    at (0,-3)	{$ u_{\ell_k}^{\min} $};
\node[draw=none, inner sep=0pt, anchor=center] (3')    at (3.5,-3)	{$ u_{\max \{i, i' \}} $};

\draw (1') edge [] node [] {$ a $} (1);
\draw (2') edge [] node [] {$ a_k^{\min} $} (3);
\draw (3') edge [] node [] {$ a $} (3);

\end{tikzpicture}
}
\end{subfigure}  
 	\caption{Showing that $ a_k^{\min} = a $ in the proof of Lemma~\ref{lem:correctnessbisimulation}. The case $ k < \max \{j, j' \} $ is on the left and the case $ k = \max \{j, j' \} $ is on the right. In both cases, $ a \preceq a_k^{\min} $ follows from Axiom~2. The inequality $ a_k^{\min} \preceq a $ follows from Axiom~2 if $ k < \max \{j, j' \} $ and from the minimality of $ a_k^{\min} $ if $ k = \max \{j, j' \} $.}\label{fig:correctness1}
\end{figure}

\begin{proof}
    We need to show that the four properties listed in Definition~\ref{def:bisimulation} are true.

    \begin{itemize}
                \item Let $ 1 \le i, i', j \le |Q| $ and $ a \in \Sigma $ be such that $ u_i \sim u_{i'} $ and $ (u_i, u_j, a) \in E $. We must prove that there exists $ 1 \le j' \le |Q| $ such that $ u_j \sim u_{j'} $ and $ (u_{i'}, u_{j'}, a) \in E $. If $ i' = i $ we can choose $ j' = j $, so in the rest of the proof we can assume $ i' \not = i $.

                Let us prove that $ Out(i) = Out (i') $. Suppose for the sake of a contradiction that $ Out(i) \not = Out(i') $. This means that there exists $ \min \{i, i' \} + 1 \le h \le \max \{i, i' \} $ such that $ Out(h - 1) \not = Out(h) $ (otherwise, by repeated transitivity we would have $ Out(i) = Out(i') $). Then, $ h \ge 2 $, so at some time the algorithm sets $ B[h] \gets 1 $ in Line~\ref{line:firstB1} and at the end of the algorithm we have $ B[h] = 1 $. This is a contradiction because $ u_i \sim u_{i'} $, so by the definition of $ \sim $ we must have $ B[h] = 0 $ at the end of the algorithm.
                
                Since $ Out(i) = Out (i') $ and $ (u_i, u_j, a) \in E $, there exists $ 1 \le j' \le |Q| $ such that $ (u_{i'}, u_{j'}, a) \in E $. We only need to prove that $ u_j \sim u_{j'} $.
                
                Suppose for the sake of a contradiction that $ u_j \sim u_{j'} $ is not true. This means that there exists $ \min \{j, j' \} + 1 \le k \le \max \{j, j' \} $ such that $ B[k] \not = 0 $ at the end of the algorithm. In particular, $ k \ge 2 $, so $ a_k^{\min} \not = \bot $, $ \ell_k^{\min} \not = \bot $ and $ (u_{\ell_k^{\min}}, u_k, a_k^{\min}) \in E $ is well defined.
                
                Let us prove that $ (u_{\min\{i, i'\}}, u_{\min\{j, j'\}}, a), (u_{\max\{i, i'\}}, u_{\max\{j, j'\}}, a) \in E $. To this end, we only need to prove that one of these two edges is equal to $ (u_i, u_j, a) $ and the other is equal to $ (u_{i'}, u_{j'}, a) $. We distinguish three cases.
                \begin{itemize}
                    \item If $ j = j' $, the conclusion is immediate.
                    \item If $ j < j' $, then $ i \le i' $ by Axiom~3, so $ (u_{\min\{i, i'\}}, u_{\min\{j, j'\}}, a) = (u_i, u_j, a) $ and $ (u_{\max\{i, i'\}}, u_{\max\{j, j'\}}, a) = (u_{i'}, u_{j'}, a) $.
                    \item If $ j' < j $, then $ i' \le i $ by Axiom~3, so $ (u_{\min\{i, i'\}}, u_{\min\{j, j'\}}, a) = (u_{i'}, u_{j'}, a) $ and $ (u_{\max\{i, i'\}}, u_{\max\{j, j'\}}, a) = (u_i, u_j, a) $.
                \end{itemize}

\begin{figure}
     \centering
           \begin{subfigure}[b]{0.49\textwidth}
        \centering
        \scalebox{.7}{
        \begin{tikzpicture}[->,>=stealth', semithick, auto, scale=1]
\tikzset{
  every node/.style={inner sep=0pt, outer sep=0pt}
}
\node[draw=none, inner sep=0pt, anchor=center] (1)    at (1,0)	{$ u_{\min \{j, j' \}} $};
\node[draw=none, inner sep=0pt, anchor=center] (2)    at (4, 0)	{$ u_k $};
\node[draw=none, inner sep=0pt, anchor=center] (3)    at (6, 0)	{$ u_{\max \{j, j' \}} $};
\node[draw=none, inner sep=0pt, anchor=center] (1')    at (-2, -3)	{$ u_{\min \{i, i' \}} $};
\node[draw=none, inner sep=0pt, anchor=center] (2')    at (0,-3)	{$ u_{\ell_k}^{\min} $};
\node[draw=none, inner sep=0pt, anchor=center] (3')    at (3.5,-3)	{$ u_{\max \{i, i' \}} $};

\draw (1') edge [] node [] {$ a_k^{\min} $} (1);
\draw (2') edge [] node [] {$ a_k^{\min} $} (2);
\draw (3') edge [] node [] {$ a_k^{\min} $} (3);

\end{tikzpicture}
}
\end{subfigure}
     \begin{subfigure}[b]{0.49\textwidth}
        \centering
        \scalebox{.7}{
        \begin{tikzpicture}[->,>=stealth', semithick, auto, scale=1]
\tikzset{
  every node/.style={inner sep=0pt, outer sep=0pt}
}
\node[draw=none, inner sep=0pt, anchor=center] (1)    at (1,0)	{$ u_{\min \{j, j' \}} $};
\node[draw=none, inner sep=0pt, anchor=center] (3)    at (4, 0)	{$ u_k $};
\node[draw=none, inner sep=0pt, anchor=center] (1')    at (-2, -3)	{$ u_{\min \{i, i' \}} $};
\node[draw=none, inner sep=0pt, anchor=center] (2')    at (0,-3)	{$ u_{\ell_k}^{\min} $};
\node[draw=none, inner sep=0pt, anchor=center] (3')    at (3.5,-3)	{$ u_{\max \{i, i' \}} $};

\draw (1') edge [] node [] {$ a_k^{\min} $} (1);
\draw (2') edge [] node [] {$ a_k^{\min} $} (3);
\draw (3') edge [] node [] {$ a_k^{\min} $} (3);

\end{tikzpicture}
}
\end{subfigure}  
 	\caption{Showing that $ \min \{i, i' \} \le \ell_k^{\min} \le \max \{i, i' \} $ in the proof of Lemma~\ref{lem:correctnessbisimulation}. The case $ k < \max \{j, j' \} $ is on the left and the case $ k = \max \{j, j' \} $ is on the right. In both cases, $ \min \{i, i' \} \le \ell_k^{\min} $ follows from Axiom~3. The inequality $ \ell_k^{\min} \le \max \{i, i' \} $ follows from Axiom~3 if $ k < \max \{j, j' \} $ and from the minimality of $ \ell_k^{\min} $ if $ k = \max \{j, j' \} $.}\label{fig:correctness2}
\end{figure}

                Let us prove that $ a_k^{\min} = a $ (see Figure~\ref{fig:correctness1}). We prove separately that $ a \preceq a_k^{\min} $ and $ a_k^{\min} \preceq a $. From $ (u_{\min\{i, i'\}}, u_{\min\{j, j'\}}, a), (u_{\ell_k^{\min}}, u_k, a_k^{\min}) \in E $, $ \min\{j, j'\} < k $ and Axiom~2 we obtain $ a \preceq a_k^{\min} $. Now, let us prove that $ a_k^{\min} \preceq a $. We know that $ k \le \max \{j, j' \} $. If $ k < \max \{j, j' \} $, then from $ (u_{\ell_k^{\min}}, u_k, a_k^{\min}) \in E $, $ (u_{\max\{i, i'\}}, u_{\max\{j, j'\}}, a) \in E $ and Axiom~2 we obtain $ a_k^{\min} \preceq a $. If $ k = \max \{j, j' \} $, then from $ (u_{\ell_k^{\min}}, u_k, a_k^{\min}) \in E $, $ (u_{\max\{i, i'\}}, u_{\max\{j, j'\}}, a) \in E $ and the minimality of $ a_k^{\min} $ we obtain $ a_k^{\min} \preceq a $.

                Let us prove that $ \min \{i, i' \} \le \ell_k^{\min} \le \max \{i, i' \} $ (see Figure~\ref{fig:correctness2}). We prove separately that $ \min \{i, i' \} \le \ell_k^{\min} $ and $ \ell_k^{\min} \le \max \{i, i' \} $. From $ (u_{\min\{i, i'\}}, u_{\min\{j, j'\}}, a) \in E $, $ (u_{\ell_k^{\min}}, u_k, a_k^{\min}) \in E $, $ a_k^{\min} = a $, $ \min\{j, j'\} < k $ and Axiom~3 we obtain $ \min \{i, i' \} \le \ell_k^{\min} $. Now, let us prove that $ \ell_k^{\min} \le \max \{i, i' \} $. We know that $ k \le \max \{j, j' \} $. If $ k < \max \{j, j' \} $, then from $ (u_{\ell_k^{\min}}, u_k, a_k^{\min}) \in E $, $ (u_{\max\{i, i'\}}, u_{\max\{j, j'\}}, a) \in E $, $ a_k^{\min} = a $ and Axiom~3 we obtain $ \ell_k^{\min} \le \max \{i, i' \} $. If $ k = \max \{j, j' \} $, then from $ (u_{\ell_k^{\min}}, u_k, a_k^{\min}) \in E $, $ (u_{\max\{i, i'\}}, u_{\max\{j, j'\}}, a) \in E $, $ a_k^{\min} = a $ and the minimality of $ \ell_k^{\min} $ we obtain $ \ell_k^{\min} \le \max \{i, i' \} $.

                We have proved that $ \min \{i, i' \} \le \ell_k^{\min} $. We distinguish the cases $ \min \{i, i' \} < \ell_k^{\min} $ and $ \min \{i, i' \} = \ell_k^{\min} $.
                \begin{itemize}
                    \item Assume that $ \min \{i, i' \} < \ell_k^{\min} $. In particular $ \ell_k^{\min} \ge 2 $, so $ B[\ell_k^{\min}] $ is defined. We will show that at the end of the algorithm $ B[\ell_k^{\min}] = 1 $. This is a contradiction because $ \min \{i, i' \} + 1 \le \ell_k^{\min} \le \max \{i, i' \} $, so from $ u_i \sim u_{i'} $ and the definition of $ \sim $ we must have $ B[\ell_k^{\min}] = 0 $ at the end of the algorithm.

                    Let us show that at the end of the algorithm we have $ B[\ell_k^{\min}] = 1 $. To this end, by Remark~\ref{rem:basicalgorithm} we only need to show that at some time the algorithm sets $ B[\ell_k^{\min}] \gets 1 $. Since at the end of the algorithm $ B[k] \not = 0 $, at some time the algorithm sets $ B[k] \gets 1 $ (in Line~\ref{line:firstB1} or Line~\ref{line:firstB2} or Line~\ref{line:firstB2bis}). Then, in the previous line, the algorithm adds $ k $ to the queue $ K $ (Line~\ref{line:next1} or Line~\ref{line:next2} or Line~\ref{line:next2bis}). At some later time, the algorithm removes $ k $ from the queue $ K $ (Line~\ref{line:dequeue}), and the condition in Line~\ref{line:recursivecheck} is checked. We know that $ \ell_k^{\min} \ge 2 $. If $ B[\ell_k^{\min}] \not = 0 $, then the algorithm has already set $ B[\ell_k^{\min}] \gets 1 $ at some earlier time, and we are done. If $ B[\ell_k^{\min}] = 0 $, then Line~\ref{line:firstB2} is executed and the algorithms sets $ B[\ell_k^{\min}] \gets 1 $.
                    \item Assume that $ \min \{i, i' \} = \ell_k^{\min} $. We know that  $ \min \{j, j' \} + 1 \le k \le \max \{j, j' \} $, so $ \min \{j, j' \}  \le k - 1 < \max \{j, j' \} $. Then, $ u_{k - 1} $ has at least one incoming edge because (i)~if $ \min \{j, j' \}  < k - 1 $, then $ k - 1 \ge 2 $, and (ii)~if $ \min \{j, j' \} = k - 1 $, then we know that $ (u_{\min\{i, i'\}}, u_{\min\{j, j'\}}, a) \in E $. This means that $ a_{k - 1}^{\max} \not = \bot $, so $ \ell_{k - 1}^{\max} \not = \bot $ and the edge $ (u_{\ell_{k - 1}^{\max}}, u_{k - 1}, a_{k - 1}^{\max}) \in E $ is well defined.

                    Let us prove that $ a_{k - 1}^{\max} = a $ (intuitively, our argument is similar to the one used to prove that $ a_k^{\min} = a $). We prove separately that $ a_{k - 1}^{\max} \preceq a $ and $ a \preceq a_{k - 1}^{\max} $. From $ (u_{\ell_{k - 1}^{\max}}, u_{k - 1}, a_{k - 1}^{\max}) \in E $, $ (u_{\max\{i, i'\}}, u_{\max\{j, j'\}}, a) \in E $, $ k - 1 < \max \{j, j' \} $ and Axiom~2 we obtain $ a_{k - 1}^{\max} \preceq a $. Now, let us prove that $ a \preceq a_{k - 1}^{\max} $. We know that $ \min \{j, j' \}  \le k - 1 $. If $ \min \{j, j' \} < k - 1 $, then from $ (u_{\min\{i, i'\}}, u_{\min\{j, j'\}}, a) \in E $, $ (u_{\ell_{k - 1}^{\max}}, u_{k - 1}, a_{k - 1}^{\max}) \in E $ and Axiom~2 we obtain $ a \preceq a_{k - 1}^{\max} $. If $ \min \{j, j' \} = k - 1 $, then from $ (u_{\min\{i, i'\}}, u_{\min\{j, j'\}}, a) \in E $, $ (u_{\ell_{k - 1}^{\max}}, u_{k - 1}, a_{k - 1}^{\max}) \in E $ and the maximality of $ a_{k - 1}^{\max} $ we obtain $ a \preceq a_{k - 1}^{\max} $.

                    Let us prove that $ \ell_{k - 1}^{\max} = \min\{i, i'\} $ (intuitively, our argument is similar to the one used to prove that $ \min \{i, i' \} \le \ell_k^{\min} \le \max \{i, i' \} $). We prove separately that $ \ell_{k - 1}^{\max} \le \min\{i, i'\} $ and $ \min\{i, i'\} \le \ell_{k - 1}^{\max} $. From $ (u_{\ell_{k - 1}^{\max}}, u_{k - 1}, a_{k - 1}^{\max}) \in E $, $ (u_{\ell_k^{\min}}, u_k, a_k^{\min}) \in E $, $ a_{k - 1}^{\max} = a = a_k^{\min} $, $ k - 1 < k $ and Axiom~3 we obtain $ \ell_{k - 1}^{\max} \le \ell_k^{\min} $, so from $ \min \{i, i' \} = \ell_k^{\min} $ we obtain $ \ell_{k - 1}^{\max} \le \min \{i, i' \} $. Now, let us prove that $ \min\{i, i'\} \le \ell_{k - 1}^{\max} $. We know that $ \min\{j, j'\} \le k - 1 $. If $ \min\{j, j'\} < k - 1 $, then from $ (u_{\min\{i, i'\}}, u_{\min\{j, j'\}}, a) \in E $, $ (u_{\ell_{k - 1}^{\max}}, u_{k - 1}, a_{k - 1}^{\max}) \in E $, $ a_{k - 1}^{\max} = a $ and Axiom~3 we obtain $ \min\{i, i'\} \le \ell_{k - 1}^{\max} $. If $ \min\{j, j'\} = k - 1 $, then from $ (u_{\min\{i, i'\}}, u_{\min\{j, j'\}}, a) \in E $, $ (u_{\ell_{k - 1}^{\max}}, u_{k - 1}, a_{k - 1}^{\max}) \in E $, $ a_{k - 1}^{\max} = a $ and the maximality of $ \ell_{k - 1}^{\max} $ we obtain $ \min\{i, i'\} \le \ell_{k - 1}^{\max} $.

                    Since $ i' \not = i $, we have $ \ell_{k - 1}^{\max} = \min\{i, i'\} < \max\{i, i'\} $, so $ \ell_{k - 1}^{\max} \le |Q| - 1 $ and $ B[\ell_{k - 1}^{\max} + 1] $ is defined. We will show that at the end of the algorithm $ B[\ell_{k - 1}^{\max} + 1] = 1 $. This is a contradiction because $ \min \{i, i' \} + 1 = \ell_{k - 1}^{\max} + 1 \le \max \{i, i' \} $, so from $ u_i \sim u_{i'} $ and the definition of $ \sim $ we must have $ B[\ell_{k - 1}^{\max} + 1] = 0 $ at the end of the algorithm.

                    Let us show that at the end of the algorithm we have $ B[\ell_{k - 1}^{\max} + 1] = 1 $. To this end, by Remark~\ref{rem:basicalgorithm} we only need to show that at some time the algorithm sets $ B[\ell_{k - 1}^{\max} + 1] \gets 1 $. Since at the end of the algorithm $ B[k] \not = 0 $, at some time the algorithm sets $ B[k] \gets 1 $ (in Line~\ref{line:firstB1} or Line~\ref{line:firstB2} or Line~\ref{line:firstB2bis}). Then, in the previous line, the algorithm adds $ k $ to the queue $ K $ (Line~\ref{line:next1} or Line~\ref{line:next2} or Line~\ref{line:next2bis}). At some later time, the algorithm removes $ k $ from the queue $ K $ (Line~\ref{line:dequeue}), and the condition in Line~\ref{line:recursivecheckbis} is checked. We know that $ \ell_{k - 1}^{\max} \not = \bot $ and $ \ell_{k - 1}^{\max} \le |Q| - 1 $. If $ B[\ell_{k - 1}^{\max} + 1] \not = 0 $, then the algorithm has already set $ B[\ell_{k - 1}^{\max} + 1] \gets 1 $ at some earlier time, and we are done. If $ B[\ell_{k - 1}^{\max} + 1] = 0 $, then Line~\ref{line:firstB2bis} is executed and the algorithms sets $ B[\ell_{k - 1}^{\max} + 1] \gets 1 $.
                \end{itemize}
            \item Let $ 1 \le i, i', j' \le |Q| $ and $ a \in \Sigma $ be such that $ u_i \sim u_{i'} $ and $ (u_{i'}, u_{j'}, a) \in E $. We must prove that there exists $ 1 \le j \le |Q| $ such that $ u_j \sim u_{j'} $ and $ (u_i, u_j, a) \in E $. Since $ \sim $ is symmetrical (because $ \sim $ is an equivalence relation by Lemma~\ref{lem:correctnessequivalencerelation}), we have $ u_{i'}  \sim u_i $, and by the previous point there exists $ 1 \le j \le |Q| $ such that $ u_{j'} \sim u_j $ and $ (u_i, u_j, a) \in E $. We conclude that $ u_j \sim u_{j'} $ again by the symmetry of $ \sim $.
            \item We have $ s \sim s $ because $ \sim $ is reflexive (because $ \sim $ is an equivalence relation by Lemma~\ref{lem:correctnessequivalencerelation}).
            \item Let $ 1 \le i, i' \le |Q| $ be such that $ u_i \sim u_{i'} $. We must prove that $ u_i \in F $ if and only if $ u_{i'} \in F $. Suppose for the sake of a contradiction that $ (u_i \in F \land u_{i'} \not \in F) \lor (u_i \not \in F \land u_{i'} \in F) $. This means that there exists $ \min \{i, i' \} + 1 \le h \le \max \{i, i' \} $ such that $ (u_{h - 1} \in F \land u_h \not \in F) \lor (u_{h - 1} \not \in F \land u_h \in F) $ (otherwise, by repeated transitivity we would have $ (u_i \in F \land u_{i'} \in F) \lor (u_i \not \in F \land u_{i'} \not \in F) $). Then, $ h \ge 2 $, so at some time the algorithm sets $ B[h] \gets 1 $ in Line~\ref{line:firstB1} and at the end of the algorithm we have $ B[h] = 1 $ by Remark~\ref{rem:basicalgorithm}. This is a contradiction because $ u_i \sim u_{i'} $, so by the definition of $ \sim $ we must have $ B[h] = 0 $ at the end of the algorithm. \qedhere
        \end{itemize}
\end{proof}

\begin{lemma}\label{lem:correctnesswheelerautobisimulation}
    Let $ (\mathcal{A}, \le) $ be a Wheeler NFA, with $ \mathcal{A} = (Q, E, s, F) $. Then, the relation $ \sim $ of Definition~\ref{def:usedincorrectness} is a Wheeler autobisimulation on $ (\mathcal{A}, \le) $.
\end{lemma}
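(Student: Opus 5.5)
To show that $ \sim $ is a Wheeler autobisimulation on $ (\mathcal{A}, \le) $, we must establish four things: reflexivity, Property~1, Property~2 and Property~3. Reflexivity is immediate from Lemma~\ref{lem:correctnessequivalencerelation}, since $ \sim $ is an equivalence relation. Property~1 is exactly Lemma~\ref{lem:correctnessbisimulation}. So the only remaining work is the two convexity properties of Definition~\ref{def:wheelerbisimulation}, and these will follow from the general Lemma~\ref{lem:convexequivalencerelationimpliesproperty2}.

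For Property~2, fix a $ \le $-convex set $ C \subseteq Q $. By definition, $ {\sim}(C) = \bigcup_{u \in C} \{v \in Q \;|\; u \sim v\} $. Because $ \sim $ is an equivalence relation, each set $ \{v \in Q \;|\; u \sim v\} $ is just $ [u]_\sim $, so $ {\sim}(C) = \bigcup_{u \in C} [u]_\sim $. Lemma~\ref{lem:correctnessequivalencerelation} says that $ \sim $ is $ \le $-convex. Therefore Lemma~\ref{lem:convexequivalencerelationimpliesproperty2}, applied with $ \equiv $ taken to be $ \sim $, shows that $ \bigcup_{u \in C}[u]_\sim $ is $ \le $-convex, which is what Property~2 requires.

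For Property~3, we use that $ \sim $ is symmetric, so $ {\sim}^{-1} = {\sim} $. Hence $ {\sim}^{-1}(C') = {\sim}(C') $ for every $ \le $-convex $ C' \subseteq Q $, and the argument for Property~2 applies verbatim.

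There is essentially no obstacle here, because the substantive work is already done in Lemma~\ref{lem:correctnessbisimulation}. The one point that needs care is the proof of Lemma~\ref{lem:convexequivalencerelationimpliesproperty2} as written: its Cases~1 and~2 conclude $ w' \in [v]_\equiv \subseteq C $, but the correct containment is $ [v]_\equiv \subseteq C' $, and in Case~2 the class should be $ [x]_\equiv $. With those corrections the lemma holds, and I will invoke it in that corrected form.
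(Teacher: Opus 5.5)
Your proof is correct and follows the paper's argument: reflexivity and Property~1 come from Lemmas~\ref{lem:correctnessequivalencerelation} and~\ref{lem:correctnessbisimulation}, and Properties~2 and~3 come from writing the image as a union of $\sim$-classes and applying Lemma~\ref{lem:convexequivalencerelationimpliesproperty2}, with symmetry of $\sim$ handling Property~3. Your corrections to that lemma's proof are also right: the containment should be in $C'$ rather than $C$, and Case~2 should use $[x]_\equiv$.
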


\begin{proof}
    We know that $ \sim $ is reflexive (because $ \sim $ is an equivalence relation by Lemma~\ref{lem:correctnessequivalencerelation}), so we only need to prove that $ \sim $ is a Wheeler bisimulation from $ (\mathcal{A}, \le) $ to $ (\mathcal{A}, \le) $.
    
    Property~1 is true because $ \sim $ is a bisimulation from $ \mathcal{A} $ to $ \mathcal{A} $ by Lemma~\ref{lem:correctnessbisimulation}.
            
    Let us prove Property~2. Let $ C \subseteq Q $ be a $ \le $-convex set. We must prove that $ \{u' \in Q \;|\; (\exists u \in C)(u \sim u') \} $ is $ \le $-convex. Since $ \sim $ is an equivalence relation by Lemma~\ref{lem:correctnessequivalencerelation}, we have $ \{u' \in Q \;|\; (\exists u \in C)(u \sim u') \} = \bigcup_{u \in C}[u]_{\equiv} $, so the conclusion follows from Lemma~\ref{lem:convexequivalencerelationimpliesproperty2} because $ \sim $ is $ \le $-convex by Lemma~\ref{lem:correctnessequivalencerelation}.

    Let us prove Property~3. Let $ C' \subseteq Q $ be a $ \le $-convex set. We must prove that $ \{u \in Q \;|\; (\exists u' \in C')(u \sim u') \} $ is $ \le $-convex. The conclusion follows by arguing as we did in the previous point because $ \{u \in Q \;|\; (\exists u' \in C')(u \sim u') \} = \bigcup_{u' \in C'}[u']_{\equiv} $.  \qedhere   
\end{proof}

We are finally ready to show the correctness of Algorithm~\ref{alg:computingmaximum}.

\medskip

\noindent{\textbf{Statement of Lemma~\ref{lem:algcorrectness}.}
    Let $ (\mathcal{A}, \le) $ be a Wheeler NFA, with $ \mathcal{A} = (Q, E, s, F) $. On input $ (\mathcal{A}, \le) $, Algorithm~\ref{alg:computingmaximum} returns the array $ B_{\mathcal{A}, \le}[2, |Q|] $.

\begin{proof}
    At the end of the algorithm we have $ B[i] \in \{0, 1 \} $ for every $ 2 \le i \le |Q| $ (see Remark~\ref{rem:basicalgorithm}). Consequently, to prove that at the end of the algorithm we have $ B[2, |Q|] = B_{\mathcal{A}, \le}[2, |Q|] $, we need to prove that (i)~for every $ 2 \le i \le |Q| $, if at the end of the algorithm we have $ B[i] = 1 $, then $ B_{\mathcal{A}, \le}[i] = 1 $, and (ii)~for every $ 2 \le i \le |Q| $, if at the end of the algorithm we have $ B[i] = 0 $, then $ B_{\mathcal{A}, \le}[i] = 0 $. We prove these two claims separately.

    \begin{itemize}
        \item Let us prove that, for every $ 2 \le i \le |Q| $, if at the end of the algorithm we have $ B[i] = 1 $, then $ B_{\mathcal{A}, \le}[i] = 1 $. By Remark~\ref{rem:basicalgorithm}, we only need to prove that for every $ 2 \le i \le |Q| $, if at some time the algorithm sets $ B[i] \gets 1 $, then we have $ B_{\mathcal{A}, \le}[i] = 1 $.

        First, consider the case where the algorithm sets $ B[i] \gets 1 $ in Line~\ref{line:firstB1}. This means that $ (u_{i - 1} \in F \land u_i \not \in F) \lor (u_{i - 1} \not \in F \land u_i  \in F)  \lor (Out(i - 1) \not = Out(i)) $. In all cases, we have $ B_{\mathcal{A}, \le}[i] = 1 $ by Lemma~\ref{lem:ideabehindalgorithm}.

        Now, consider the case where the algorithm sets $ B[i] \gets 1 $ in Line~\ref{line:firstB2} or Line~\ref{line:firstB2bis}. We consider the executions of Line~\ref{line:firstB2} and Line~\ref{line:firstB2bis} in the order in which they occur, and we proceed inductively. Assume that at some time the algorithm sets $ B[\ell_i^{\min}] \gets 1 $ in Line~\ref{line:firstB2} (note that $ 2 \le \ell_i^{\min} \le |Q| $ by the condition in Line~\ref{line:recursivecheck}) or it sets $ B[\ell^{\max}_{i - 1} + 1] \gets 1 $ in Line~\ref{line:firstB2bis} (note that $ 2 \le \ell^{\max}_{i - 1} + 1 \le |Q| $ by the condition in Line~\ref{line:recursivecheckbis}). In the former case, our goal is to prove that $ B_{\mathcal{A}, \le}[\ell_i^{\min}] = 1 $, and in the latter case, our goal is to prove that $ B_{\mathcal{A}, \le}[\ell^{\max}_{i - 1} + 1] = 1 $. By Lemma~\ref{lem:ideabehindalgorithm}, we only need to prove that $ B_{\mathcal{A}, \le}[i] = 1 $. We know that $ i $ was removed from the queue $ K $ in Line~\ref{line:dequeue}. Since $ K $ is initially empty, $ i $ was previously added to $ K $. This must have happened in Line~\ref{line:next1} or Line~\ref{line:next2} or Line~\ref{line:next2bis}. If it happened in Line~\ref{line:next1}, then the algorithm set $ B[i] \gets 1 $ immediately after (in Line~\ref{line:firstB1}), and we have already shown that this implies $ B_{\mathcal{A}, \le}[i] = 1 $. If it happened in Line~\ref{line:next2} or Line~\ref{line:next2bis}, then the algorithm set $ B[i] \gets 1 $ immediately after (in Line~\ref{line:firstB2} or Line~\ref{line:firstB2bis}), so by the inductive hypothesis we obtain $ B_{\mathcal{A}, \le}[i] = 1 $.
        \item Let us prove that, for every $ 2 \le i \le |Q| $, if at the end of the algorithm we have $ B[i] = 0 $, then $ B_{\mathcal{A}, \le}[i] = 0 $. Let $ \sim $ be the relation from $ Q $ to $ Q $ introduced in Definition~\ref{def:usedincorrectness}. In particular, by the definition of $ \sim $, we have $ u_{i - 1} \sim u_i $ for every $ 2 \le i \le |Q| $ such that $ B[i] = 0 $ at the end of the algorithm. By Lemma~\ref{lem:correctnesswheelerautobisimulation}, $ \sim $ is a Wheeler autobisimulation on $ (\mathcal{A}, \le) $, so by the maximality of $ \equiv_{\mathcal{A}, \le} $ (see Theorem~\ref{theor:maximumbisimulationproperties}) we obtain $ u_{i - 1} \equiv_{\mathcal{A}, \le} u_i $ for every $ 2 \le i \le |Q| $ such that $ B[i] = 0 $ at the end of the algorithm, and so $ B_{\mathcal{A}, \le}[i] = 0 $ for every $ 2 \le i \le |Q| $ such that $ B[i] = 0 $ at the end of the algorithm. \qedhere
    \end{itemize}
          
\end{proof}

Let us discuss the running time of Algorithm~\ref{alg:computingmaximum}.

\medskip

\noindent{\textbf{Statement of Lemma~\ref{lem:algrunning}.}
    Let $ (\mathcal{A}, \le) $ be a Wheeler NFA, with $ \mathcal{A} = (Q, E, s, F) $. Algorithm~\ref{alg:computingmaximum} can be implemented to run in $ O(|E|) $ time on input $ (\mathcal{A}, \le) $.

\begin{proof}
     First, note that every $ 2 \le i \le |Q| $ is added to the queue $ K $ at most once (in particular, the queue $ K $ never contains more than $ |Q| - 1 $ elements). Indeed, for every $ 1 \le i \le |Q| $, (i)~an element can only be added to the queue in Line~\ref{line:next1} or Line~\ref{line:next2} or Line~\ref{line:next2bis}; (ii)~immediately after $ i $ is added to $ K $ a first time, the algorithm sets $ B[i] \gets 1 $ (in Line~\ref{line:firstB1} or Line~\ref{line:firstB2} or Line~\ref{line:firstB2bis}), and $ B[i] = 1 $ holds until the end of the algorithm (see Remark \ref{rem:basicalgorithm}); (iii)~the element $ i $ can be added to $ K $ during at most one execution of Line~\ref{line:next1}; (iv)~the element $ i $ can be added to $ K $ during an execution of Line~\ref{line:next2} or in Line~\ref{line:next2bis} only if it has never been added to $ K $ before, because the element $ i $ can be added to $ K $ during an execution of Line~\ref{line:next2} or in Line~\ref{line:next2bis} only if $ B[i] = 0 $ immediately before (Line~\ref{line:recursivecheck} or Line~\ref{line:recursivecheckbis}).
    
    We first show that the initialization of our algorithm (Lines~\ref{line:initializationstart}-\ref{line:initializationend}) only requires $ O(|E|) $ time.
    
    \begin{itemize}
        \item Let us show that we can compute $ \ell_i^{\min} $ for every $ 1 \le i \le |Q| $ in $ O(|E|) $ time (one can analogously compute $ \ell_i^{\max} $ for every $ 1 \le i \le |Q| $ in $ O(|E|) $ time). To this end, we will also compute $ a_i^{\min} $ for every $ 1 \le i \le |Q| $. For every $ 1 \le i \le n $, let $ a^*_i $ and $ \ell^*_i $ be two variables. We process all edges in $ E $ (in any order), maintaining the following invariant: immediately after processing a set $ E' \subseteq E $ of edges, for every $ 1 \le i \le |Q| $ the variables $ a^*_i $ and $ \ell^*_i $ are equal to the values $ a_i^{\min} $ and $ \ell^*_i $ computed by only considering the edges in $ E' $. Formally, for every $ 1 \le i \le |Q| $, if there exists no edge in $ E' $ reaching $ u_i $, we have $ a^*_i = \bot $ and $ \ell^*_i = \bot $, and if there exists at least one edge in $ E' $ reaching $ u_i $, then $ a_i^* $ is the smallest character labeling some edge in $ E' $ reaching $ u_i $, and $ \ell^*_i $ is the smallest $ 1 \le j \le |Q| $ such that $ (u_j, u_i, a_i^*) \in E' $.

        Before processing the edges in $ E $, we set $ a^*_i \gets \bot $ and $ a^*_i \gets \bot $ for every $ 1 \le i \le |Q| $. This ensures that the invariant is true at the beginning. Now assume inductively that the invariant is true immediately after processing a set $ E' \subseteq E $ of edges, and let $ (u_i, u_j, a) $ be the next edge in $ E $ to be processed. If $ a^*_j = \bot $, we set $ a^*_j \gets a $ and $ \ell^*_j \gets i $. If $ a^*_j \not = \bot $, then by the invariant we have $ a^*_j \in \Sigma $, $ l^*_j \not = \bot $ and $ 1 \le \ell^*_j \le |Q| $. If $ a^*_j \prec a $, we do not do anything. If $ a \prec a^*_j $, we set $ a^*_j \gets a $ and $ \ell^*_j \gets i $. Now assume that $ a = a^*_j $. If $ \ell^*_j \le i $, we do not do anything. If $ i < \ell^*_j $, we set $ \ell^*_j \gets i $. Then, the invariant is also true immediately after processing $ E' \cup \{(u_i, u_j, a) \} $.

        In particular, after processing all edges in $ E $, we have $ E' = E $ and so by the invariant we conclude $ a^*_i = a_i^{\min} $ and $ \ell^*_i = \ell_i^{\min} $ for every $ 1 \le i \le |Q| $. The total required time is $ O(|E| + |Q|) \subseteq O(|E|) $ (see Remark~\ref{rem:complexitystates}) because we only need $ O(1) $ time per edge and per state.

        \item Let us show that we can compute $ Z[2, |Q|] $ in $ O(|E|) $ time. We sort all edges $ (u_i, u_j, a) $ by the key $ (i, a, j) $. This can be done in $ O(|E| + |Q|) \subseteq O(|E|) $ time via radix sort~\cite{hopcroft1983data} because the edge labels can be sorted in linear time. By scanning the list of all sorted edges and advancing two pointers, in $ O(|E|) $ time we can decide whether $ Out(i - 1) \not = Out(i) $ (and so we can compute $ Z[i] $) for every $ 2 \le i \le |Q| $. 

        \item The initialization of $ B[2, |Q|] $ requires $ O(|Q|) \subseteq O(|E|) $ time, and we can initialize the queue $ K $ in $ O(|Q|) \subseteq O(|E|) $ time because the queue never needs to contain more than $ |Q| - 1 $ elements.
    \end{itemize}
    
We are only left with showing that the main part of our algorithm (Lines~\ref{line:mainstart}-\ref{line:mainend}) only requires $ O(|E|) $ time. Lines~\ref{line:mainstart}-\ref{line:mainintermidate1} require $ O(|Q|) \subseteq O(|E|) $ time. Lines~\ref{line:mainintermediate2}-\ref{line:mainendbefore} also require $ O(|Q|) \subseteq O(|E|) $ time because we have already shown that each $ 2 \le i \le |Q| $ is added to the queue at most once. \qedhere

\end{proof}

We can finally prove that the quotient Wheeler NFA can be built in linear time.

\medskip

\noindent{\textbf{Statement of Theorem~\ref{theor:lineartimeminimization}.}
    Let $ (\mathcal{A}, \le) $ be a Wheeler NFA, with $ \mathcal{A} = (Q, E, s, F) $. We can build the Wheeler NFA $ (\mathcal{A}_*, \le_*) $ of Lemma~\ref{lem:quotient} in $ O(|E|) $ time.

\begin{proof}
    By Lemma~\ref{lem:algcorrectness} and Lemma~\ref{lem:algrunning}, we can compute $ B_{\mathcal{A}, \le}[2, |Q|] $ in $ O(|E|) $ time. Let $ \mathcal{A}_* = (Q_*, E_*, s_*, F_*) $. By definition, $ Q_* $ is the set of all $ \equiv_{\mathcal{A}, \le} $-equivalence classes. By Lemma~\ref{lem:B_Aisallweneed}, in $ O(|Q|) $ time, we can associate each $ u \in Q $ with its $ \equiv_{\mathcal{A}, \le} $-equivalence class by scanning all states in $ Q $ following the Wheeler order $ \le $ and checking the values of $ B_{\mathcal{A}, \le}[2, |Q|] $. In particular, in this way we determine $ |Q_*| $. By definition, $ s_* $ is the $ \equiv_{\mathcal{A}, \le} $-equivalence class of $ s $. By the definition of $ F_* $, we can determine $ F_* $ in $ O(|Q|) $ time by scanning all $ u \in Q $ and adding $ [u]_{\equiv_{\mathcal{A}, \le}} $ to $ F_* $ if $ u \in F $ (note that in this way some states may be added to $ F_* $ multiple times, but we can remove duplicates by sorting all states in $ F_* $ in $ O(|F_*|) \subseteq O(|Q|) $ time). We are only left with showing how to determine $ E_* $. By the definition of $ E_* $, we can determine $ E_* $ in $ O(|E|) $ time by scanning all edges $ (u, v, a) \in E $ and adding $ ([u]_{\equiv_{\mathcal{A}, \le}}, [v]_{\equiv_{\mathcal{A}, \le}}, a) $ to $ E_* $ (note that in this way some edges may be added to $ E_* $ multiple times, but we can remove duplicates by sorting all edges in $ E_* $ in $ O(|E_*|) \subseteq O(|E|) $ time via radix sort, as we did in the proof of Lemma~\ref{lem:algrunning}). The total running time is $ O(|E| + |Q|) \subseteq O(|E|) $ (see Remark~\ref{rem:complexitystates}). \qedhere
\end{proof}

We are now ready to prove Corollary~\ref{cor:checkingifbisimlar}.

\medskip

\noindent{\textbf{Statement of Corollary~\ref{cor:checkingifbisimlar}.}
     Let $ (\mathcal{A}, \le) $ and $ (\mathcal{A}', \le') $ be  Wheeler NFAs, with $ \mathcal{A} = (Q, E, s, F) $ and $ \mathcal{A}' = (Q', E', s', F') $. We can determine whether there exists a Wheeler bisimulation from $ (\mathcal{A}, \le) $ to $ (\mathcal{A}', \le') $ in $ O(|E| + |E|') $ time.

\begin{proof}
    Let $ (\mathcal{A}_*, \le_*) $ and $ (\mathcal{A}'_*, \le'_*) $ be the Wheeler NFAs built in Lemma~\ref{lem:quotient} starting from $ (\mathcal{A}, \le) $ and $ (\mathcal{A}', \le') $, respectively. Let $ \mathcal{A}_* = (Q_*, E_*, s_*, F_*) $ and $ \mathcal{A}'_* = (Q'_*, E'_*, s'_*, F'_*) $, with $ Q_* = \{U_1, U_2, \dots, U_n \} $ and $ Q'_* = \{U'_1, U'_2, \dots, U'_{n'} \} $, where $ U_1 <_* U_2 <_* \dots <_* U_n $ and $ U'_1 <'_* U'_2 <'_* \dots <'_* U'_{n'} $. By Theorem~\ref{theor:lineartimeminimization}, we can build $ (\mathcal{A}_*, \le_*) $ and $ (\mathcal{A}'_*, \le'_*) $ in $ O(|E|) $ time and $ O(|E'|) $ time, respectively.

    If $ n \not = n' $, then by Lemma~\ref{lem:quotientautomotaisomorphic} there exists no Wheeler bisimulation from $ (\mathcal{A}, \le) $ to $ (\mathcal{A}', \le') $, so in the rest of the proof we can assume $ n = n' $. Let $ R $ be the function from $ Q_* $ to $ Q'_* $ such that $ R(U_i) = U'_i $ for every $ 1 \le i \le n $. If $ R $ is not an isomorphism from $ \mathcal{A}_* $ to $ \mathcal{A}'_* $, then by Lemma~\ref{lem:quotientautomotaisomorphic} there exists no Wheeler bisimulation from $ (\mathcal{A}, \le) $ to $ (\mathcal{A}', \le') $. If $ R $ is an isomorphism from $ \mathcal{A}_* $ to $ \mathcal{A}'_* $, then it is immediate to check that $ R $ is a Wheeler bisimulation from $ (\mathcal{A}_*, \le_*) $ to $ (\mathcal{A}'_*, \le'_*) $, so $ R_{\mathcal{A}', \le'}^{-1} \circ R \circ R_{\mathcal{A}, \le} $  is a Wheeler bisimulation from $ (\mathcal{A}, \le) $ to $ (\mathcal{A}', \le') $ by Lemma~\ref{lem:bisimulationtoquotient} and Lemma~\ref{lem:bisimequiv}. Consequently, to finish the proof, we only need to show that in $ O(|E| + |E|') $ time we can check whether $ R $ is an isomorphism from $ \mathcal{A}_* $ to $ \mathcal{A}'_* $.

    For every $ U \in F_* $, we check whether $ R(U) \in F'_* $, and for every $ (U, V, a) \in E_* $, we check whether $ (R(U), R(V), a) \in E'_* $. For every $ U' \in F'_* $, we check whether $ R^{-1}(U') \in F_* $, and for every $ (U', V', a) \in E'_* $, we check whether $ (R^{-1}(U), R^{-1}(V), a) \in E_* $. Then, $ R $ is an isomorphism from $ \mathcal{A}_* $ to $ \mathcal{A}'_* $ if and only all checks are successful. The time required for all the checks is $ O(|F_*| + |E_*| + |F'_*| + |E'_*|) \subseteq O(|Q| + |E| + |Q'| + |E'|) \subseteq O(|E| + |E|') $ (see Remark~\ref{rem:complexitystates}). \qedhere
\end{proof}

We are only left with proving our lower bound in the comparison model (Theorem~\ref{theor:lowerboundquotienting}).

\medskip

\noindent{\textbf{Statement of Theorem~\ref{theor:lowerboundquotienting}.}
    Let $ (\mathcal{A}, \le) $ be a Wheeler NFA, with $ \mathcal{A} = (Q, E, s, F) $. In the comparison model, the time complexity of building the Wheeler NFA $ (\mathcal{A}_*, \le_*) $ of Lemma~\ref{lem:quotient} is $ \Theta(|E| \log |E|) $. This is true even if we know that $ \mathcal{A} $ is a DFA.

\begin{proof}
    We obtain an $ O(|E| \log |E|) $ algorithm as follows. First, we sort all edge labels in $ O(|E| \log |E|) $ time via any comparison-based sorting algorithm (e,g, merge sort~\cite{cormen2022introduction}). Let $ \mathcal{A}' $ be the NFA obtained from $ \mathcal{A} $ by replacing every edge label with its position in the list of all sorted edge labels (more precisely, if $ \mathcal{A} $ contains $ d $ distinct edge labels, then every edge label of $ \mathcal{A}' $ is an integer between $ 0 $ and $ d - 1 $). Then, $ (\mathcal{A}', \le) $ is a Wheeler NFA. Let $ (\mathcal{A}'_*, \le'_*) $ be the Wheeler NFA built in Lemma~\ref{lem:quotient} starting from $ (\mathcal{A}', \le) $. Then, (i)~$ \mathcal{A}'_* $ is obtained from $ \mathcal{A}' $ by replacing every edge label with its position in the list of all sorted edge labels, and (ii)~$ \le'_* $ and $ \le_* $ are the same total order. We first build $ (\mathcal{A}', \le) $ from $ (\mathcal{A}_*, \le_*) $ in $ O(|E|) $ time. Then by Theorem~\ref{theor:lineartimeminimization} we can build $ (\mathcal{A}'_*, \le'_*) $ from $ (\mathcal{A}', \le) $ in $ O(|E|) $ time (we can apply Theorem~\ref{theor:lineartimeminimization} because every edge label of $ \mathcal{A}' $ is an integer smaller than $ |E| $, so the edge labels can be sorted in $ O(|E|) $ time via counting sort~\cite{cormen2022introduction}). Lastly, we build $ (\mathcal{A}_*, \le_*) $ from $ (\mathcal{A}'_*, \le'_*) $ in $ O(|E|) $ time.
    
    We are left with showing that the time complexity of building $ (\mathcal{A}_*, \le_*) $ is $ \Omega(|E| \log |E|) $, even if we know that $ \mathcal{A} $ is a DFA. Let $ f $ be a function for which there exists an algorithm building $ (\mathcal{A}_*, \le_*) $ in at most $ f(|E|) $ time for every Wheeler DFA $ (\mathcal{A}, \le) $. We need to prove that $ f(|E|) \in \Omega(|E| \log |E|) $. To this end, consider the \emph{set equivalence} problem: given $ n \ge 1 $ and given $ b_1, b_2, \dots b_n $ and $ c_1, c_2, \dots, c_n $ such that the $ b_i $'s are pairwise distinct and the $ c_i $'s are pairwise distinct, decide whether $ \{b_1, b_2, \dots, b_n \} = \{c_1, c_2, \dots, c_n \} $. In the comparison model, the complexity of the set equivalence problem is $ \Omega (n \log n) $~\cite[Theorem 2]{reingold1972optimality}. Consequently, we only need to show that we can solve the set equivalence problem in $ f(2n + 2) + O(n) $ time, because then the $ \Omega (n \log n) $ lower bound for set equivalence problem implies $ f(n) \in \Omega (n \log n) $.

    Let $ b_1, b_2, \dots b_n, c_1, c_2, \dots, c_n $ be an instance of the set equivalence problem. We need to show that, in $ f(2n + 2) + O(n) $ time, we can decide whether $ \{b_1, b_2, \dots, b_n \} = \{c_1, c_2, \dots, c_n \} $. Let $ \#_1 $ and $ \#_2 $ be new characters that are not in $ \Sigma $, and assume that $ \#_1 \prec \#_2 \prec a $ for every $ a \in \Sigma $. In particular, since in the comparison model we can compare two characters in $ \Sigma $ in $ O(1) $ time, we can also compare two characters in $ \Sigma \cup \{\#_1, \#_2 \} $ in $ O(1) $ time. Let $ \mathcal{A} = (Q, E, s, F) $ be the DFA defined as follows: $ Q = \{z_1, z_2, s, t \} $ consists of four pairwise distinct states, $ E = \{(s, z_1, \#_1), (s, z_2, \#_2) \} \cup \{(z_1, t, b_i) \;|\; 1 \le i \le n \} \cup \{(z_2, t, c_i) \;|\; 1 \le i \le n \} $, and $ F = \{t \} $ (see Figure~\ref{fig:comparison} for an example). Note that $ |E| = 2n + 2 $ and $ \mathcal{A} $ is a DFA (because the $ b_i $'s are pairwise distinct and the $ c_i $'s are pairwise distinct). Let $ \le $ be the total order on $ Q $ such that $ s < z_1 < z_2 < t $. It is immediate to check that $ (\mathcal{A}, \le) $ is a Wheeler DFA. We can build $ (\mathcal{A}, \le) $ in $ O(n) $ time, and then we can build the Wheeler DFA $ (\mathcal{A}_*, \le_*) $ of Lemma~\ref{lem:quotient} in $ f(2n + 2) $ time.
    
    Notice that for every distinct $ u, u' \in Q $, if $ \{u, u' \} \not = \{z_1, z_2 \} $, then $ u \not \equiv_{\mathcal{A}, \le} u' $ (because $ \equiv_{\mathcal{A}, \le} $ is a bisimulation from $ \mathcal{A} $ to $ \mathcal{A} $ and the set of all characters labeling some edge leaving $ u $ is distinct from the set of all characters labeling some edge leaving $ u' $). Moreover, since $ z_1 $ and $ z_2 $ are consecutive in the Wheeler order $ \le $, we have $ z_1 \equiv_{\mathcal{A}, \le} z_2 $ if and only if $ \{b_1, b_2, \dots, b_n \} = \{c_1, c_2, \dots, c_n \} $. Consequently, we have $ \{b_1, b_2, \dots, b_n \} = \{c_1, c_2, \dots, c_n \} $ if and only if the number of states of $ \mathcal{A}_* $ is three. The total time required to check whether $ \{b_1, b_2, \dots, b_n \} = \{c_1, c_2, \dots, c_n \} $ is $ f(2n + 2) + O(n) $. \qedhere
\end{proof}

\begin{figure}
        \centering
        \scalebox{.8}{
        \begin{tikzpicture}[->,>=stealth', semithick, auto, scale=1]
\node[state] (1)    at (0,0)	{$ s $};
\node[state] (2)    at (-3, 2)	{$ z_1 $};
\node[state] (3)    at (3, 2)	{$ z_2 $};
\node[state, accepting] (4)    at (0, 4)	{$ t $};
\draw (1) edge [] node [] {$ \#_1 $} (2);
\draw (1) edge [] node [] {$ \#_2 $} (3);
\draw (2) edge [] node [] {$ a, d, b, c $} (4);
\draw (3) edge [] node [] {$ c, b, a, d $} (4);
\end{tikzpicture}
}
\caption{The Wheeler DFA built in the proof of Theorem~\ref{theor:lowerboundquotienting} starting from the instance $ b_1 = a $, $ b_2 = d $, $ b_3 = b $, $ b_4 = c $, $ c_1 = c $, $ c_2 = b $, $ c_3 = a $, $ c_4 = d $.}
\label{fig:comparison}
\end{figure} 

\end{document}